\algnewcommand\algorithmicinput{\textbf{INPUT:}}
\algnewcommand\INPUT{\item[\algorithmicinput]}
\algnewcommand\algorithmicoutput{\textbf{OUTPUT:}}
\algnewcommand\OUTPUT{\item[\algorithmicoutput]}
\newcommand{\lc}{ \lceil} 
\newcommand{\rc}{ \rceil } 
\newcommand{\kse}{\kappa_{\max}^{s,e}}
\newcommand{\dint}{\,\mathrm{d}}
\DeclareMathOperator*{\argmax}{arg\,max}
\DeclareMathOperator*{\loglog}{log\, log}
\newcommand{\lt}{ {\mathcal L^2}}
\newcommand{\hr}{{ \mathcal H^{r} } }
\newtheorem{assumption}{Assumption}
\newtheorem{lemma}{Lemma}
\newtheorem{proposition}{Proposition}
\newtheorem{corollary}{Corollary}
\newtheorem{definition}{Definition}
\newtheorem{remark}{Remark}
\newtheorem{theorem}{Theorem}
\newtheorem{example}{Example}
\title{Change-point Detection for  Sparse and Dense Functional Data in General Dimensions}
\author{Carlos Misael Madrid Padilla$^1$ \and Daren Wang$^2$ \and  Zifeng Zhao$^3$ \and  Yi Yu$^4$}
\date{%
    $^1$Department of Mathematics, University of Notre Dame\\%
    $^2$Department of ACMS, University of Notre Dame\\%
    $^3$Mendoza College of Business, University of Notre Dame\\%
    $^4$Department of Statistics, University of Warwick\\[2ex]%
    \today
}
\begin{document}

\maketitle

\begin{abstract}
We study the problem of change-point detection and localisation for functional data sequentially observed on a general $d$-dimensional space, where we allow the functional curves to be either sparsely or densely sampled. Data of this form naturally arise in a wide range of applications such as biology, neuroscience, climatology and finance. To achieve such a task, we propose a kernel-based algorithm named functional seeded binary segmentation (FSBS). FSBS is computationally efficient, can handle discretely observed functional data, and is theoretically sound for heavy-tailed and temporally-dependent observations. Moreover, FSBS works for a general $d$-dimensional domain, which is the first in the literature of change-point estimation for functional data.  We show the consistency of FSBS for multiple change-point estimation and further provide a sharp localisation error rate, which reveals an interesting phase transition phenomenon depending on the number of functional curves observed and the sampling frequency for each curve. Extensive numerical experiments illustrate the effectiveness of FSBS and its advantage over existing methods in the literature under various settings. A real data application is further conducted, where FSBS localises change-points of sea surface temperature patterns in the south Pacific attributed to El Ni\~{n}o.
\end{abstract}

\section{Introduction}
Recent technological advancement has boosted the emergence of functional data in various application areas, including neuroscience \citep[e.g.][]{dai2019age,petersen2019frechet}, finance \citep[e.g.][]{fan2014functional}, transportation \citep[e.g.][]{chiou2014functional}, climatology \citep[e.g.][]{bonner2014modeling, fraiman2014detecting} and others. We refer the readers to \cite{wang2016functional} - a comprehensive review, for recent development of statistical research in functional data analysis.

In this paper, we study the problem of change-point detection and localisation for functional data, where the data are observed sequentially as a time series and the mean functions are piecewise stationary, with abrupt changes occurring at unknown time points.  To be specific, denote $\mathcal{D}$ as a general $d$-dimensional space that is homeomorphic to $[0,1]^d$, where $d \in \mathbb{N}^+$ is considered as arbitrary but fixed.  We assume that the observations $\{(x_{t, i}, y_{t, i})\}_{t = 1,i = 1}^{T,n} \subseteq \mathcal{D} \times \mathbb{R}$ are generated based on 
\begin{align}\label{eq_model}
    y_{t,i}  = f^*  _t (x_{t,i} ) + \xi_t (x_{t,i } ) + \delta_{t, i }, \text{ for } t=1,\ldots, T  \text{ and } i = 1, \ldots, n .
\end{align}
In this model, $\{x_{t,i}\}_{t=1,i=1}^{T,n}\subseteq \mathcal{D}$ denotes the discrete grids where the (noisy) functional data $\{y_{t,i}\}_{t=1,i=1}^{T,n}\subseteq \mathbb{R}$ are observed, $\{f_t^*: \mathcal{D}\to \mathbb R\}_{t=1}^T$ denotes the deterministic mean functions, $\{\xi_t: \mathcal{D}\to \mathbb R\}_{t=1}^T$ denotes the functional noise and $\{\delta_{t,i}\}_{t=1,i=1}^{T,n}\subseteq \mathbb{R}$ denotes the measurement error. We refer to \Cref{assume: model assumption} below for detailed technical conditions on the model. 

To model the unstationarity of sequentially observed functional data which commonly exists in real world applications, we assume that there exist $K \in \mathbb{N}$ change-points, namely $0 =\eta_0 < \eta_1 < \cdots < \eta_K  <  \eta_{K+1} = T$, satisfying that $f^*_t \neq f^*_{t+1}$, if and only if $t \in \{\eta_k\}_{k = 1}^K$.  Our primary interest is to accurately estimate $\{\eta_k\}_{k=1}^K$.

Due to the importance of modelling unstationary functional data in various scientific fields, this problem has received extensive attention in the statistical change-point literature, see e.g.~\cite{aue2009estimation}, \cite{berkes2009detecting}, \cite{hormann2010weakly}, \cite{zhang2011testing}, \cite{aue2018detecting} and \cite{dette2020testing}.  Despite the popularity, we identify a few limitations in the existing works. Firstly, both the methodological validity and theoretical guarantees of all these papers require fully observed functional data without measurement error, which may not be realistic in practice.  Secondly, most existing works focus on the single change-point setting and to our best knowledge, there is no consistency result of multiple change-point estimation for functional data.  Lastly but most importantly, existing algorithms only consider functional data with support on $[0,1]$ and thus are not applicable to functional data with multi-dimensional domain, a type of data frequently encountered in neuroscience and climatology.

In view of the aforementioned three limitations, in this paper, we make several theoretical and methodological contributions, summarized below.

$\bullet$ In terms of methodology, our proposed kernel-based change-point detection algorithm, functional seeded binary segmentation (FSBS), is computationally efficient, can handle discretely observed functional data contaminated with measurement error, and allows for temporally-dependent and heavy-tailed data.  FSBS, in particular, works for a general $d$-dimensional domain with arbitrary but fixed $d \in \mathbb{N}^+$.  This level of generality is the first time seen in the literature.
    
$\bullet$ In terms of theory, we show that under standard regularity conditions, FSBS is consistent in detecting and localising multiple change-points.  We also provide a sharp localisation error rate, which reveals an interesting phase transition phenomenon depending on the number of functional curves observed $T$ and the sampling frequency for each curve $n$.  To the best of our knowledge, the theoretical results we provide in this paper are the sharpest in the existing literature.
    
$\bullet$ A striking case we handle in this paper is that each curve is only sampled at one point, i.e.~$n = 1$.  To the best of our knowledge, all the existing functional data change-point analysis papers assume full curves are observed.  We not only allow for discrete observation, but carefully study  this most extreme sparse case $n = 1$ and provide consistent localisation of the change-points.

$\bullet$ We conduct extensive numerical experiments on simulated and real data. The result further supports our theoretical findings, showcases the advantages of FSBS over existing methods and illustrates the practicality of FSBS.
    
$\bullet$ A byproduct of our theoretical analysis is new theoretical results on kernel estimation for functional data under temporal dependence and heavy-tailedness.  This set of new results \emph{per se} are novel, enlarging the toolboxes of functional data analysis.

\textbf{Notation and definition.}  For any function $f:\, [0,1]^d \to \mathbb R$ and for $1 \leq p < \infty$, define   $\|f\|_p = (\int_{[0,1]^d} |f(x)|^p \dint x)^{1/p}$ and for $p = \infty$, define $\|f\|_\infty = \sup_{x\in{[0,1]^d}}\vert f(x)\vert$.  Define     $\mathcal{L}_p = \{f: \, [0, 1]^d \to \mathbb{R}, \, \|f\|_p<\infty \}$.  For any vector $s = (s_1, \ldots, s_d)^{\top} \in{\mathbb{N}^d}$, define $\vert s\vert = \sum_{i = 1}^d s_i$, $s!=s_1!\cdots s_d!$ and the associated partial differential operator $D^s= \frac{\partial^{\vert s\vert}}{\partial x_1^{s_1}\cdots  \partial x_{d}^{s_d}}$.
For $\alpha>0$, denote $\lfloor\alpha\rfloor $ to be the largest integer   smaller than $\alpha$. For any function $f:\, [0,1]^d \to \mathbb R $ that is 
$ \lfloor\alpha\rfloor $-times continuously differentiable at point $x_0$, denote by $f_{x_0}^\alpha $ its Taylor polynomial of degree $ \lfloor\alpha\rfloor $ at $x_0$, which is defined as $f_{x_0}^\alpha(x) = \sum_{|s| \le \lfloor\alpha\rfloor  } \frac{(x-x_0)^s  }{s!} D^s f(x_0).$ For a constant $L>0$, let $\mathcal{H}^{\alpha}(L)$ be the set  of functions $f:\, [0,1]^d \to \mathbb R $ such that $f$ is $\lfloor\alpha\rfloor$-times differentiable for all  $  x  \in [0,1]^d$  and satisfy $ |f(x) - f_{x_0}^\alpha(x)  | \le L| x-x_0|^ \alpha$, for all $x, x_0\in [0,1]^d$. Here $|x-x_0| $ is  the Euclidean distance between $x, x_0 \in \mathbb R^d$. In non-parametric statistical literature, $\mathcal{H}^{\alpha}(L)$ are often referred to as the  class of H\"{o}lder  smooth functions. We refer the interested readers to \cite{rigollet2009optimal} for more detailed discussion on H\"{o}lder smooth functions.

For two positive sequences $\{a_n\}_{n\in \mathbb N^+ }$ and $\{b_n\}_{n\in \mathbb N ^+ }$, we write $a_n = O(b_n)$ or $a_n\lesssim b_n$ if $a_n\le Cb_n$ with some constant $C > 0$ that does not depend on $n$, and $a_n = \Theta(b_n)$ or $a_n\asymp b_n$ if $a_n = O(b_n)$ and $b_n = O(a_n)$. 
 

\section{Functional seeded binary segmentation}

\subsection{Problem formulation}

Detailed model assumptions imposed on model \eqref{eq_model} are collected in \Cref{assume: model assumption}. For notational simplicity, without loss of generality, we set the general $d$-dimensional domain $\mathcal{D}$ to be $[0,1]^d$, as the results apply to any $\mathcal{D}$ that is homeomorphic to $[0,1]^d$.

\begin{assumption}\label{assume: model assumption}
The data $\{(x_{t, i}, y_{t, i})\}_{t = 1,i = 1}^{T,n} \subseteq [0,1]^d \times \mathbb{R}$ are generated based on model \eqref{eq_model}.

\noindent {\bf a.} (Discrete grids) The grids $\{ x_{t,i}  \}_{t = 1,i = 1}^{T,n} \subseteq [0, 1]^d$ are independently sampled from a common   density function $u:\, [0,1]^d \to \mathbb R$. In addition, there exist constants  $r>0$ and $L>0$ such that   $ u \in  \mathcal{H}^r (L)  $ and that  $\inf_{x\in [0,1]^d } u(x) \ge \tilde{c}$ with an absolute constant $\tilde{c}>0$.

\noindent {\bf b.} (Mean functions) For  $r >0$ and $L>0$, we have $  f^*_t \in \mathcal{H}^r (L) $.  The minimal spacing between two consecutive change-points $\Delta = \min_{k = 1}^{ K+1}(\eta_{k} - \eta_{k-1})$ satisfies that $\Delta = \Theta (T)$.  

\noindent {\bf c.} (Functional noise)  Let $\{\varepsilon_i, \varepsilon'_0\}_{i\in \mathbb Z}$ be i.i.d.~random elements taking values in a measurable space $S_\xi$ and $g$ be a measurable function $g:\, S_\xi^\infty \to \mathcal{L}_2$. The functional noise $\{\xi_t\}_{t=1}^T  \subseteq \lt$ takes the form
\[
    \xi_t =g ( \mathcal G_t  ), \quad \mbox{with } \mathcal G_t   =( \ldots, \varepsilon_{-1},   \varepsilon_0, \varepsilon_1, \ldots,  \varepsilon_{t-1} , \varepsilon_t).
\]
There exists an absolute constant $q \geq 3$, such that $\mathbb{E} ( \| \xi_t\|_\infty ^q ) < C_{\xi, 1}$ for some absolute constant $C_{\xi, 1}$.  Define a coupled process
\[
    \xi^*_t =g ( \mathcal G^*_t  ), \quad \mbox{with } \mathcal G^*_t   =( \ldots, \varepsilon_{-1},   \varepsilon'_0, \varepsilon_1, \ldots,  \varepsilon_{t-1} , \varepsilon_t).
\]
We have $\sum_{t=1}^\infty t^{1/2-1/q}  \{   \mathbb{E}\| \xi_t -\xi_t^*\|_{\infty }^q  \} ^{1/q} < C_{\xi, 2}$  for some absolute constant $C_{\xi, 2} > 0$.

\noindent {\bf d.} (Measurement error) Let $\{\epsilon_i, \epsilon'_0\}_{i \in \mathbb{Z}}$ be i.i.d.~random elements taking values in a measurable space $S_\delta$ and $\tilde{g}_n$ be a measurable function $\tilde{g}_n:\, S_\delta^\infty \to \mathbb{R}^n$.  The measurement error $\{\delta_t\}_{t = 1}^T \subseteq \mathbb{R}^n$ takes the form
\[
    \delta_t = \tilde{g}_n ( \mathcal F _t ), \quad \mbox{with } \mathcal F_t   =( \ldots, \epsilon_{-1},   \epsilon_0, \epsilon_1, \ldots,  \epsilon_{t-1} , \epsilon_t).
\]
There exists an absolute constant $q \geq 3$, such that $\max_{ i = 1}^{ n} \mathbb{E}(| \delta_{t,i}|^q) < C_{\delta, 1}$ for some absolute constant $C_{\delta, 1}$.  Define a coupled process
\[
    \delta_t^* = \tilde{g}_n ( \mathcal F_t^* ), \quad \mbox{with } \mathcal F_t^*   =( \ldots, \epsilon_{-1},   \epsilon_0', \epsilon_1, \ldots,  \epsilon_{t-1} , \epsilon_t).
\]
We have $\max_{i = 1}^{ n}  \sum_{t=1}^\infty t^{1/2-1/q } \{\mathbb{E} | \delta _{t,i} -\delta_{t,i} ^* | ^q  \} ^{1/q} < C_{\delta, 2}$
for some absolute constant $C_{\delta, 2} > 0$.
\end{assumption}

\Cref{assume: model assumption}\textbf{a} allows the functional data to be observed on discrete grids and moreover, we allow for different grids at different time points. The sampling distribution $\mu$ is required to be lower bounded on the support $[0, 1]^d$, which is a standard assumption widely used in the nonparametric literature \citep[e.g.][]{Tsybakov:1315296}. Here, different functional curves are assumed to have the same number of grid points $n$. We remark that this is made for presentation simplicity only. It can indeed be further relaxed and the main results below will then depend on both the minimum and maximum numbers of grid points.

Note that \Cref{assume: model assumption}\textbf{a} does not impose any restriction between the sampling frequency $n$ and the number of functional curves $T$, and indeed our method can handle both the dense case where $n\gg T$ and the sparse case where $n$ can be upper bounded by a constant. Besides the random sampling scheme studied here, another commonly studied scenario is the fixed design, where it   usually assumes that the sampling locations $\{x_i \}_{i=1}^n$ are common to all functional curves across time. We remark that while we  focus  on the random design here, our  proposed algorithm can be directly applied to the fixed design case without any modification. Furthermore, its theoretical justification under the fixed design case can be established similarly with minor modifications, which is omitted.


The observed functional data have mean functions $\{f^*_t\}_{t = 1}^T$, which are assumed to be H\"{o}lder continuous in \Cref{assume: model assumption}\textbf{b}.  Note that the H\"{o}lder parameters in \Cref{assume: model assumption}\textbf{a} and \textbf{b} are both denoted by $r$.  We remark that different smoothness are allowed and we use the same $r$ here for notational simplicity.  This sequence of mean functions is our primary interest and is assumed to possess a piecewise constant pattern, with the minimal spacing $\Delta$ being of the same order as $T$. This assumption essentially requires that the number of change-points is upper bounded. It can also be further relaxed and we will have more elaborated discussions on this matter in \Cref{sec-conclusion}.

Our model allows for two sources of noise - functional noise and measurement error, which are detailed in \Cref{assume: model assumption}\textbf{c} and \textbf{d}, respectively. Both the functional noise and the measurement error are allowed to possess temporal dependence and heavy-tailedness.  For temporal dependence, we adopt the physical dependence framework by \cite{wu2005nonlinear}, which covers a wide range of time series models, such as ARMA and vector AR models.  It further covers popular functional time series models such as functional AR and MA models~\citep{hormann2010weakly}.  We also remark that \Cref{assume: model assumption}\textbf{c}~and \textbf{d} impose a short range dependence, which is characterized by the absolute upper bounds $C_{\xi,2}$ and $C_{\delta,2}$. Further relaxation is possible by allowing the upper bounds $C_{\xi,2}$ and $C_{\delta,2}$ to vary with the sample size $T$. 

The heavy-tail behavior is encoded in the parameter $q$.  In \Cref{assume: model assumption}\textbf{c} and \textbf{d}, we adopt the same quantity $q$ for presentational simplicity and remark that different heavy-tailedness levels are allowed.  An extreme example is that when $q = \infty$, the noise is essentially sub-Gaussian.  Importantly, \Cref{assume: model assumption}\textbf{d} does not impose any restriction on the cross-sectional dependence among measurement errors observed on the same time $t$, which can be even perfectly correlated.

\subsection{Kernel-based change-point detection}
To estimate the change-point $\{\eta_k\}_{k=1}^K$ in the mean functions $\{f^*_t\}_{t = 1}^T$, we propose a kernel-based cumulative sum~(CUSUM) statistic, which is simple, intuitive and computationally efficient. The key idea is to recover the unobserved $\{f^*_t\}_{t = 1}^T$ from the observations $\{(x_{t, i}, y_{t, i})\}_{t = 1,i = 1}^{T,n}$ based on kernel estimation.

Given a kernel function $K(\cdot): \mathbb R^d \to \mathbb R^+$ and a bandwidth parameter $h > 0$, we define $K_h(x) = h^{-d} K(x/h)$ for $x \in \mathbb R^d.$ Given the random grids $\{x_{t,i}\}_{t=1,i=1}^{T,n}$ and a bandwidth parameter $\bar{h}$, we define the density estimator of the sampling distribution $u(x)$ as
\[
    \hat{p}(x) = \hat{p}_{\bar{h}}(x) =\frac{1}{nT}\sum_{t=1}^{T}\sum_{i=1}^{n}K_{\bar{h}}(x-x_{t,i}), \quad x \in [0, 1]^d.
\]

Given $\hat p(x)$ and a bandwidth parameter $h>0$, for any time $t=1,2,\cdots, T$, we define the kernel-based estimation for $f_t^*(x)$ as
\begin{equation}\label{eq-F-t-h-x-def}
    F_{t, h} (x) = \frac{\sum_{i=1}^n y_{t, i} K_h(x-x_{t, i }) }{n \hat{p}(x)}, \quad x \in [0, 1]^d.
\end{equation}
Based on the kernel estimation $F_{t,h}(x)$, for any integer pair $0 \leq s < e \leq T$, we define the CUSUM statistic as 
\begin{equation}\label{def-cusum}
    \widetilde{F}_{t, h}^{(s, e]}(x) = \sqrt{\frac{e-t}{(e-s )(t-s)}} \sum_{ l =s+1}^{ t} F_{ l , h  } (x) - \sqrt { \frac{t-s}{ (e-s )(e-t)}} \sum_{l =t+1}^{ e} F_{l , h   } (x), \quad x \in [0, 1]^d.
\end{equation}

The CUSUM statistic defined in \eqref{def-cusum} is the cornerstone of our algorithm and is based on two kernel estimators $\hat{p}(\cdot)$ and $F_{t, h}(\cdot)$. At a high level, the CUSUM statistic  $\widetilde{F}_{t, h}^{(s, e]}(\cdot ) $   estimates the  difference in mean between the functional data in the time intervals $(s,t] $ and $(t,e] $.
In the functional data analysis literature, other popular approaches  for mean function estimation are reproducing kernel Hilbert space based methods and local polynomial regression. However, to our best knowledge, existing works based on the two approaches typically require that the functional data are temporally independent and it is not obvious how to extend their theoretical guarantees to the temporal dependence case. We therefore choose the kernel estimation method owing to its flexibility in terms of both methodology and theory and we derive new theoretical results on kernel estimation for functional data under temporal dependence and heavy-tailedness.

For multiple change-point estimation, a key ingredient is to isolate each single change-point with well-designed intervals in $[0,T]$. To achieve this, we combine the CUSUM statistic in \eqref{def-cusum} with a modified version of the seeded binary segmentation~(SBS) proposed in \cite{kovacs2020seeded}. SBS is based on a collection of deterministic intervals defined in \Cref{definition:seeded}.

\begin{definition} [Seeded intervals]\label{definition:seeded}
Let $\mathcal{K} = \lc C_{\mathcal{K}}\loglog(T) \rc$, with some sufficiently large absolute constant $C_{\mathcal{K}} > 0$.  For $k \in \{1, \ldots, \mathcal{K}\}$, let $\mathcal{J}_k$ be the collection of $2^k - 1$ intervals of length $l_k = T2^{-k+1}$ that are evenly shifted by $l_k/2 = T2^{-k}$, i.e.
$$\mathcal{J}_k = \{(\lfloor (i -1) T2^{-k} \rfloor, \, \lc (i-1) T2^{-k} + T2^{-k+1}\rc ], \quad i = 1, \ldots,  2^k - 1 \}.$$
The overall collection of seeded intervals is denoted as $\mathcal{J} = \cup_{k = 1}^{\mathcal{K}} \mathcal{J}_k$.
\end{definition}

The essential idea of the seeded intervals defined in \Cref{definition:seeded} is to provide a multi-scale system of searching regions for multiple change-points. SBS is computationally efficient with a computational cost of the order $O\{T\log(T)\}$~\citep{kovacs2020seeded}.

Based on the CUSUM statistic and seeded intervals, \Cref{alg:FSBS} summarises the proposed functional seeded binary segmentation algorithm~(FSBS) for multiple change-point estimation in sequentially observed functional data. There are three main tuning parameters involved in \Cref{alg:FSBS}, the kernel bandwidth $\bar{h}$ in the estimation of the sampling distribution, the kernel bandwidth $h$ in the estimation of the mean function and the threshold parameter $\tau$ for declaring change-points. Their theoretical and numerical guidance will be presented in Sections~\ref{sec-theory} and \ref{sec-numeric}, respectively.

\begin{algorithm}[ht]
\begin{algorithmic}
    \INPUT  Data $\{x_{t, i}, y_{t, i}\}_{t = 1, i = 1}^{T,n}$, seeded intervals $\mathcal{J}$, tuning parameters $\bar{h}, h, \tau>0$.
    \State \textbf{Initialization}: If $(s,e]=(0,n]$, set $\textbf{S} \leftarrow \varnothing$ and set $\rho \leftarrow \log(T)n^{-1}h^{-d}$. Furthermore, sample $\lceil \log(T) \rceil$ points from $\{x_{t, i}\}_{t = 1, i = 1}^{T,n}$ uniformly at random without replacement and denote them as $\{u_m\}_{m=1}^{\lceil \log(T) \rceil}$. Estimate the sampling distribution evaluated at $\{\hat{p}_{\bar{h}}(u_m)\}_{m=1}^{\lceil \log(T) \rceil}$. 
    
    \bigskip
    \For {$\mathcal I=(\alpha, \beta ] \in \mathcal{J}$ and $m \in \{1, \ldots, \lceil \log(T) \rceil$\}}
		\If {$ \mathcal I= (\alpha, \beta ] \subseteq  (s,e]  $ and $\beta-\alpha >2\rho $} 
            \State $A_m^\mathcal I    \leftarrow \max_{\alpha +\rho       \leq t \leq  \beta -\rho  }  | \widetilde F_{t, h }^{ (\alpha  , \beta ]   } (u_m ) | $ 
            \State $D _m^\mathcal I  \leftarrow \argmax_{\alpha+\rho   \leq t \leq  \beta-\rho   }  | \widetilde F_{t, h    }^{ (\alpha  , \beta ]   } (u_m ) | $ 
        \Else
            \State{$(A_m^{\mathcal{I}}, D_m^{\mathcal{I}}) \leftarrow (-1, 0)$} 
        \EndIf  
   \EndFor 
    \State $(m^*, \mathcal I^* )  \leftarrow \argmax_{m = 1, \ldots, \lceil \log(T)\rceil, \mathcal{I} \in \mathcal{J}} A^\mathcal I _m $.
    \If{$A_{m^*} ^{\mathcal I^*} >\tau$}
        \State ${\bf S} \leftarrow {\bf S} \cup D_{m^*} ^{\mathcal I^*}$ 
        \State  FSBS $( (s,D_{m^*} ^{\mathcal I^*}], \bar{h}, h    ,\tau )$   
 		\State  FSBS $( ( D_{m^*} ^{\mathcal I^*},e ], \bar{h}, h    , \tau   )$   
    \EndIf 
    \bigskip
    
    \OUTPUT The set of estimated change-points $\textbf{S}$.
    \caption{Functional Seeded Binary Segmentation.  FSBS $( (s,e], \bar{h}, h ,   \tau   )$} \label{alg:FSBS}
\end{algorithmic}
\end{algorithm}

\Cref{alg:FSBS} is conducted in an iterative way, starting with the whole time course, using the multi-scale seeded intervals to search for the point according to the largest CUSUM value.  A change-point is declared if the corresponding maximum CUSUM value exceeds a pre-specified threshold $\tau$ and the whole sequence is then split into two with the procedure being carried on in the sub-intervals.


\Cref{alg:FSBS} utilizes a collection of  random grid points  $\{u_m\}_{m=1}^{\lceil \log(T) \rceil} \subseteq  \{x_{t, i}\}_{t = 1, i = 1}^{T,n} $ to detect changes in the functional data. For a change of mean functions at the time point $\eta$ with $\|f^*_{\eta +1} - f^*_{\eta }\|_\infty> 0$, we show in the appendix that, as long as $\lceil \log(T) \rceil$ grid points are sampled, with high probability, there is at least one point $ u_{m'} \in \{u_m\}_{m=1}^{\lceil \log(T) \rceil} $ such that  $  |f^*_{\eta +1} (u_{m'}) - f^*_{\eta }(u_{m'})  | \asymp  \|f^*_{\eta +1} - f^*_{\eta }\|_\infty. $ Thus, this procedure allows FSBS to detect changes in the  mean functions without evaluating functions  on a  dense lattice grid and thus improves computational efficiency.  


\section{Main Results} \label{sec-main}
 
\subsection{Assumptions and theory}\label{sec-theory}

We begin by imposing assumptions on the kernel function $K(\cdot)$ used in FSBS.
\begin{assumption}[Kernel function]\label{Kernel-as}
Let $K(\cdot) :\mathbb R^d \to \mathbb R^+  $   be  compactly supported and satisfy the following conditions.

\noindent {\bf a.} The kernel function $K(\cdot)$ is adaptive to the H\"older class $\hr(L)$, i.e.~for any $f \in \hr(L)$, it holds that $\sup_{x \in [0,1]^d} \big|\int_{[0,1]^d}  K_h\left({x-z}\right) f(z) \dint z - f(x)\big| \leq \tilde{C} h^r,$ where $ \tilde{C} > 0$ is a constant that only depends on $L$.

\noindent{\bf b.} The class of functions 
$\mathcal{F}_K = \{K(x-\cdot)/h:\,   \mathbb{R}^d\to \mathbb R^+, h > 0\}$ 
 is separable  in $\mathcal{L}_{\infty}(\mathbb{R}^d)$ and is a  uniformly bounded VC-class.  This means that there exist constants $A, \nu > 0$  such that for every probability measure $Q$ on $\mathbb{R}^d$ and every $u \in (0, \|K\|_{\infty})$, it holds that $\mathcal{N}(\mathcal{F}_K,\mathcal L_2(Q),u) \leq \left({A\|K\|_\infty}/{u}\right)^{v}$, where $\mathcal{N}(\mathcal{F}_K, \mathcal L_2(Q) ,u)$ denotes the $u$-covering number of the metric space $(\mathcal{F}_K, \mathcal L_2(Q))$.
\end{assumption} 

\Cref{Kernel-as} is a standard assumption in the nonparametric literature, see \cite{gine1999laws,gine2001consistency}, \cite{kim2019uniform}, \cite{wang2019dbscan} among many others. These assumptions hold for most commonly used kernels, including uniform, polynomial  and Gaussian kernels.

Recall the minimal spacing $\Delta = \min_{k = 1}^{ K+1}(\eta_{k} - \eta_{k-1})$ defined in \Cref{assume: model assumption}\textbf{b}. We further define the jump size at the $k$th change-point as $\kappa_k = \|f^*_{\eta_k+1} - f^*_{\eta_k}\|_\infty$ and define $\kappa=\min_{k = 1}^{ K}$ as the minimal jump size. \Cref{assume-snr} below details how strong the signal needs to be in terms of $\kappa$ and $\Delta$, given the grid size $n$, the number of functional curves $T$, smoothness parameter $r$, dimensionality $d$ and moment condition $q$. 

\begin{assumption}[Signal-to-noise ratio, SNR]\label{assume-snr}
There exists an arbitrarily-slow diverging sequence $C_{\mathrm{SNR}} =C_{\mathrm{SNR}}(T)$ such that
\[
    \kappa \sqrt{\Delta} > C_{\mathrm{SNR}} \log^{ \max\{1/2, 5/q\}}(T)\Big(1 + T^{\frac{d}{2r+d}} n^{\frac{-2r}{2r+d}} \Big)^{1/2}.
\]
\end{assumption}


We are now ready to present the main theorem, showing the consistency of FSBS.
\begin{theorem} \label{theorem:FSBS}
Under Assumptions~\ref{assume: model assumption}, \ref{Kernel-as} and \ref{assume-snr}, let $\{\widehat \eta_k\}_{k=1}^{\widehat K}$ be the estimated change-points by FSBS detailed in \Cref{alg:FSBS} with data $\{x_{t, i}, y_{t, i}\}_{t = 1,i=1}^{T,n}$, bandwidth parameters $\bar{h} = C_{\bar{h}} (Tn)^{-\frac{1}{2r+d}}$, $h = C_h (Tn)^{\frac{-1}{2r+d }}$ and threshold parameter $\tau = C_\tau \log^{ \max\{1/2, 5/q\}}(T)\left(1 + T^{\frac{d}{2r+d}} n^{\frac{-2r}{2r+d}} \right)^{1/2}$, 
for some absolute constants $C_{\bar{h}}, C_h, C_{\tau} > 0$. It holds that 
\begin{align*}
      \mathbb{P}\left\{\widehat K = K;\, |\widehat \eta_k - \eta_k| \leq C_{\mathrm{FSBS}} \log^{ \max\{1, 10/q \} }  (T)  \left(1 +   T^{\frac{d}{2r+d}}   n^{\frac{2r}{2r+d}}  \right) \kappa_k^{-2}, \, \forall k = 1, \ldots, K\right\}  
       \geq 1 - 3\log^{-1}(T), 
\end{align*}	
where $C_{\mathrm{FSBS}} > 0$ is an absolute constant.
\end{theorem}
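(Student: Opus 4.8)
The plan is to condition on a high-probability good event on which \Cref{alg:FSBS} behaves like an idealised procedure driven by the noiseless signal, and then to run the seeded binary segmentation induction deterministically. Write $\widetilde f^{*(s,e]}_t(x)$ for the population CUSUM obtained by replacing each $F_{l,h}$ in \eqref{def-cusum} by the true mean $f^*_l(x)$. The cornerstone is a \emph{uniform deviation bound}: with probability at least $1-\log^{-1}(T)$,
\[
\max_{(\alpha,\beta]\in\mathcal J}\ \max_{\alpha<t\le\beta}\ \max_{1\le m\le\lceil\log T\rceil}\big|\widetilde F^{(\alpha,\beta]}_{t,h}(u_m)-\widetilde f^{*(\alpha,\beta]}_t(u_m)\big|\ \lesssim\ \mathfrak d,
\]
where $\mathfrak d=\log^{\max\{1/2,5/q\}}(T)\,(1+T^{d/(2r+d)}n^{-2r/(2r+d)})^{1/2}$ coincides with the threshold scale $\tau$. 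To establish it I would decompose $F_{l,h}(u_m)-f^*_l(u_m)$ into a bias term, a functional-noise term, a measurement-error term, and a design-randomness term from the random denominator $\hat p$. The kernel adaptivity in \Cref{Kernel-as}\textbf{a} controls the bias by $O(h^r)$, which the choice $h\asymp(Tn)^{-1/(2r+d)}$ balances against the stochastic terms; the functional-noise CUSUM contributes the irreducible $O(1)$ part of $\mathfrak d$, while the measurement-error CUSUM contributes the $O((nh^d)^{-1/2})=O((T^{d/(2r+d)}n^{-2r/(2r+d)})^{1/2})$ part, and this balance at $h\asymp(Tn)^{-1/(2r+d)}$ is exactly the sparse--dense phase transition.

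Because both noise sources are temporally dependent (the physical-dependence couplings of \Cref{assume: model assumption}\textbf{c},\textbf{d}) and only have $q\ge3$ moments, the per-interval concentration would be obtained from a Fuk--Nagaev/Bernstein-type maximal inequality for physically dependent heavy-tailed sums; the polynomial tail is what produces the $\log^{5/q}(T)$ factor, and the uniformity over the bandwidth/location family $\mathcal F_K$ is handled by its VC property in \Cref{Kernel-as}\textbf{b} together with a union bound over the $O(\log^2 T)$ seeded intervals and evaluation points. I would linearise the Nadaraya--Watson ratio around $u(x)$, using $\inf u\ge\tilde c$ and a separate control of $\hat p$, to decouple numerator and denominator.

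Two further ingredients complete the good event. First, since the $\lceil\log T\rceil$ points $\{u_m\}$ are drawn from the lower-bounded density $u$, and each difference $f^*_{\eta_k+1}-f^*_{\eta_k}$ is H\"older (so its super-level set $\{|f^*_{\eta_k+1}-f^*_{\eta_k}|\ge\kappa_k/2\}$ has measure bounded below), with probability at least $1-\log^{-1}(T)$ each jump is retained up to a constant, i.e.\ some $u_{m'}$ satisfies $|f^*_{\eta_k+1}(u_{m'})-f^*_{\eta_k}(u_{m'})|\asymp\kappa_k$; here $K=O(1)$ by $\Delta\asymp T$. Second, a control of $\hat p$ contributes the third $\log^{-1}(T)$, giving a good event $\mathcal A$ with $\mathbb P(\mathcal A^c)\le 3\log^{-1}(T)$.

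On $\mathcal A$ the argument is deterministic. I would use two elementary facts about the population CUSUM of a piecewise-constant signal: over a seeded interval that isolates a single change-point $\eta_k$ (one exists with length $\asymp\Delta$ by \Cref{definition:seeded} and $\Delta\asymp T$), the population CUSUM at $\eta_k$ evaluated at the retained point is $\gtrsim\kappa_k\sqrt{\Delta}$; and over any interval free of change-points it vanishes. Since the SNR in \Cref{assume-snr} forces $\kappa\sqrt{\Delta}\gg\tau\gg\mathfrak d$, the standard seeded binary segmentation induction then shows that whenever an undetected change-point remains in the current interval the maximal empirical CUSUM exceeds $\tau$ and triggers a split near a true change-point, whereas in an interval with no remaining change-point it stays below $\tau$; this gives $\widehat K=K$. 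For the localisation error, the usual two-sided comparison---lower bounding the population CUSUM gap near $\eta_k$ by the jump and upper bounding the empirical--population discrepancy by $\mathfrak d$---yields $|\widehat\eta_k-\eta_k|\lesssim\mathfrak d^2\kappa_k^{-2}$, the squared deviation rate carrying the $\log^{\max\{1,10/q\}}(T)$ factor, as stated in \Cref{theorem:FSBS}. The \textbf{main obstacle} is the uniform deviation bound of the first two paragraphs: controlling a ratio-type kernel CUSUM simultaneously across all multiscale intervals under temporal dependence and heavy tails is exactly what the new kernel-estimation results advertised in the introduction are designed to deliver.
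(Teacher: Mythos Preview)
Your proposal is correct and follows essentially the same route as the paper: establish a high-probability good event controlling the CUSUM deviation (via the same decomposition into bias, functional-noise, measurement-error, and density-estimator pieces), the capture of each jump by some sampled $u_m$, and the concentration of $\hat p$; then run the seeded binary segmentation induction deterministically, using the population-CUSUM structure on isolating intervals to get $\widehat K=K$ and a univariate localisation argument to get $|\widehat\eta_k-\eta_k|\lesssim\mathfrak d^2\kappa_k^{-2}$. Two minor technical remarks: since the CUSUM is evaluated only at the $\lceil\log T\rceil$ fixed points $\{u_m\}$, the VC property of $\mathcal F_K$ is invoked solely for $\hat p$ (not for the CUSUM deviation, where a union bound over intervals and points suffices), and the paper derives the heavy-tailed concentration via $q$th-moment maximal inequalities for physically dependent partial sums followed by Markov rather than a direct Fuk--Nagaev tail bound---but both choices yield the same $\log^{5/q}(T)$ factor and do not alter the architecture of the proof.
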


In view of \Cref{assume-snr} and \Cref{theorem:FSBS}, we see that with properly chosen tuning parameters and with probability tending to one as the sample size $T$ grows, the output of FSBS estimates the correct number of change-points and 
\[
    \max_{k = 1 }^K  {|\widehat{\eta}_k - \eta_k|}/{\Delta} \lesssim {\left(1 + T^{\frac{d}{2r+d}} n^{\frac{-2r}{2r+d}} \right) \log^{ \max\{1, 10/q \} }  (T)}/{(\kappa^2 \Delta) } =  o(1),
\]
where the last inequality follows from \Cref{assume-snr}.  The above inequality shows that there exists a one-to-one mapping from $\{\widehat{\eta}_k\}_{k = 1}^K$ to $\{\eta_k\}_{k = 1}^K$, assigning by the smallest distance.

\subsection{Discussions on functional seeded binary segmentation~(FSBS)}\label{sec-comparison}

\textbf{From sparse to dense regimes.}  In our setup, each curve is only observed at $n$ discrete points and we allow the full range of choices of $n$, representing from sparse to dense scenarios, all accompanied with consistency results.  In the most sparse case $n = 1$, \Cref{assume-snr} reads as $\kappa \sqrt{\Delta} \gtrsim T^{d/(4r+2d)} \times$ a logarithmic factor, under which the localisation error is upper bounded by $T^{d/(2r+d)} \kappa^{-2}$, up to a logarithmic factor.  To the best of our knowledge,   this  challenging case  has not been  dealt in the existing change-point detection literature for functional data.  In the most dense case, we can heuristically let $n = \infty$ and for simplicity let $q = \infty$ representing the sub-Gaussian noise case.  \Cref{assume-snr} reads as $\kappa\sqrt{\Delta} \asymp \log^{1/2}(T)$ and the localisation error is upper bounded by $\kappa^{-2}\log(T)$.  Both the SNR ratio and localisation error are the optimal rate in the univariate mean change-point localisation problem \citep{wang2020univariate}, implying the optimality of FSBS in the dense situation.

\textbf{Tuning parameters.}  There are three tuning parameters involved.  In the CUSUM statistic \eqref{def-cusum}, we specify that the density estimator of the sampling distribution is a kernel estimator with bandwidth $\bar{h} \asymp (Tn)^{-1/(2r+d)}$.  Due to the independence of the observation grids, such a choice of the bandwidth follows from the classical nonparametric literature \citep[e.g.][]{Tsybakov:1315296} and is minimax-rate optimal in terms of the estimation error. For completeness, we include the study of $\hat{p}(\cdot)$'s theoretical properties in \Cref{sec-proof-thm1}.  In practice, there exist  different default methods for the selection of $\bar{h}$ , see for example the function \texttt{Hpi} from the R package \texttt{ks}  (\cite{chacon2018multivariate}).


The other bandwidth tuning parameter $h$ is also required to be $h \asymp ( Tn)^{-1/(2r+d)}$.  Despite that we allow for physical dependence in both functional noise and measurement error, we show that the same order of bandwidth (as $\bar{h}$) is required under \Cref{assume: model assumption}.  This is an interesting finding, if not surprising.  This particular choice of $h$ is due to the fact that the physical dependence put forward by \cite{wu2005nonlinear} is a short range dependence condition and does not change the rate of the sample size.

The threshold tuning parameter $\tau$ is set to be a high-probability upper bound on the CUSUM statistics when there is no change-point  and is in fact of the form
\[
    \tau = C_\tau \log^{ \max\{1/2, 5/q\}}(T) \sqrt {n^{-1}h^{-d} + 1}.
\]
 This also reflects the requirement on the SNR detailed in \Cref{assume-snr}, that $\kappa\sqrt{\Delta} \gtrsim \tau$.

\textbf{Phase transition.}  Recall that the number of curves is $T$ and the number of observations on each curve is $n$.  The asymptotic regime we discuss is to let $T$ diverge, while allowing all other parameters, including~$n$, to be functions of $T$.  In \Cref{theorem:FSBS}, we allow a full range of cases in terms of the relationship between $n$ and $T$.  As a concrete example, when the smooth parameter $r = 2$, the jump size $\kappa \asymp 1$ and in the one-dimensional case $d = 1$, with high probability (ignoring logarithmic factors for simplicity),
\[
    \max_{k = 1}^{K} |\widehat{\eta}_k - \eta_k| =O_p( T^{\frac{1}{5}} n^{-\frac{4}{5}} + 1 ) = \begin{cases}
        O_p(1), & n \geq T^{1/4}; \\ 
        O_p(T^{\frac{1}{5}} n^{-\frac{4}{5}}), &  n \leq T^{1/4}.
    \end{cases} 
\]

This relationship between $n$ and $T$ was previously demonstrated in the mean function estimation literature \citep[e.g.][]{cai2011optimal, zhang2016sparse}, where the observations are discretely sampled from independently and identically distributed functional data.
It is shown that the minimax estimation error rate also possesses the same phase transition between $n$ and $T$, i.e.~with the transition boundary $n \asymp T^{1/4}$, which agrees with our finding under the change-point setting. 

\textbf{Physical dependence and heavy-tailedness}  In \Cref{assume: model assumption}{\bf c} and {\bf d}, we allow for physical dependence type temporal dependence and heavy-tailed additive noise.  As we have discussed, since the physical dependence is in fact a short range dependence, all the rates involved are the same as those in the independence cases, up to logarithmic factors.  Having said this, the technical details required in dealing with this short range dependence are fundamentally different from those in the independence cases.  From the result, it might be more interesting to discuss the effect of the heavy-tail behaviours, which are characterised by the parameter $q$.  It can be seen from the rates in \Cref{assume-snr} and \Cref{theorem:FSBS} that the effect of $q$ disappears and it behaves the same as if the noise is sub-Gaussian when $q \geq 10$.
 



\section{Numerical Experiments}\label{sec-numeric}

\subsection{Simulated data analysis}\label{simu-data}
We compare the proposed FSBS with state-of-the-art methods for change-point detection in functional data across a wide range of simulation settings. The implementations for our approaches can be found at  \href{https://github.com/cmadridp/FSBS}{https://github.com/cmadridp/FSBS}.  We compare with three competitors: BGHK in \cite{berkes2009detecting}, HK in \cite{hormann2010weakly} and SN in \cite{zhang2011testing}. All three methods estimate change-points via examining mean change in the leading functional principal components of the observed functional data. BGHK is designed for temporally independent data while HK and SN can handle temporal dependence via the estimation of long-run variance and the use of self-normalization principle, respectively. All three methods require fully observed functional data. In practice, they convert discrete data to functional observations by using B-splines with 20 basis functions.


For the implementation of FSBS, we adopt the Gaussian kernel. Following the standard practice in kernel density estimation, the bandwidth $\bar{h}$ is selected by the function \texttt{Hpi} in the R package \texttt{ks}~(\cite{chacon2018multivariate}).  The tuning parameter $\tau$ and the bandwidth $h$ are chosen by cross-validation, with evenly-indexed data being the training set and oddly-indexed data being the validation set.  For each pair of candidate $(h, \tau)$, we obtain change-point estimators $\{\widehat{\eta}_k\}_{k = 1}^{\widehat{K}}$ on the training set and compute the validation loss $\sum_{k = 1}^{\widehat{K}} \sum_{t \in [\widehat{\eta}_k, \widehat{\eta}_{k+1})} \sum_{i=1}^n\{(\widehat{\eta}_{k+1} - \widehat{\eta}_k)^{-1}\sum_{t=\widehat{\eta}_{k}+1}^{\widehat{\eta}_{k+1}}F_{t,h}(x_{t,i})-y_{t,i}\}^2.$ The pair $(h, \tau)$ is then chosen to be the one corresponding to the lowest validation loss.

%
%



We consider five different scenarios for the observations $\{x_{ti},y_{ti}\}_{t=1,i=1}^{T,n}$. For all scenarios 1-5, we set $T=200.$ Given the dimensionality $d$, denote a generic grid point as $x=(x^{(1)},\cdots,x^{(d)})$. Scenarios 1 to 4 are generated based on model \eqref{eq_model}. The basic setting is as follows.

$\bullet$ {\bf{Scenario 1} (S1)}  Let $(n, d) = (1, 1)$, the unevenly-spaced change-points be $(\eta_1, \eta_2) = (30, 130)$ and the three distinct mean functions be $6\cos(\cdot)$, $6\sin(\cdot)$ and $6\cos(\cdot)$.

$\bullet$  {\bf{Scenario 2} (S2)} Let $(n, d) = (10, 1)$, the unevenly-spaced change-points be $(\eta_1, \eta_2) = (30, 130)$ and the three distinct mean functions be $2\cos(\cdot)$, $2\sin(\cdot)$ and $2\cos(\cdot)$.

$\bullet$ {\bf{Scenario 3} (S3)}  Let $(n, d) = (50, 1)$, the unevenly-spaced change-points be $(\eta_1, \eta_2) = (30, 130)$ and the three distinct mean functions be $\cos(\cdot)$, $\sin(\cdot)$ and $\cos(\cdot)$.
    
$\bullet$  {\bf{Scenario 4} (S4)} Let $(n, d) = (10, 2)$, the unevenly-spaced change-points be $(\eta_1, \eta_2) = (100, 150)$ and the three distinct mean functions be $0$, $3x^{(1)}x^{(2)}$ and $0$. 

For \textbf{S1}-\textbf{S4}, the functional noise is generated as $\xi_t (x)=0.5\xi_{t-1} (x)+\sum_{i=1}^{50} i^{-1} b_{t,i} h_{i}(x)$, where $\{h_i(x)=\prod_{j=1}^{d}(1/\sqrt{2})\pi\sin(ix^{(j)})\}_{i=1}^{50}$ are basis functions and $\{b_{t,i}\}_{t = 1, i=1}^{T, 50}$ are i.i.d.~standard normal random variables.  The measurement error is generated as $\delta_{t}=0.3\delta_{t-1}+\epsilon_t$, where $\{\epsilon_t\}_{t = 1}^T$ are i.i.d.~$\mathcal{N}(0,0.5I_n)$. We observe the noisy functional data $\{y_{ti}\}_{t=1,i=1}^{T,n}$ at grid points $\{x_{ti}\}_{t=1,i=1}^{T,n}$ independently sampled from $\mathrm{Unif}([0,1]^d)$.





Scenario 5 is adopted from \cite{zhang2011testing} for densely-sampled functional data without measurement error.

$\bullet$  {\bf{Scenario 5} (S5)} Let $(n, d) = (50, 1)$, the evenly-spaced change-points be $(\eta_1, \eta_2) = (68, 134)$ and the three distinct mean functions be 0, $\sin(\cdot)$ and $2\sin(\cdot)$.

The grid points $\{x_{ti}\}_{i=1}^{50}$ are $50$ evenly-spaced points in $[0, 1]$ for all $t=1,\cdots, T$. The functional noise is generated as
$\xi_t(\cdot) = \int_{[0,1]}\psi(\cdot, u)\xi_{t-1}(u) \dint u +\epsilon_t(\cdot)$, where $\{\epsilon_t(\cdot)\}_{t = 1}^T$ are independent standard Brownian motions and $\psi(v, u)=1/3\exp((v^2+u^2)/2)$ is a bivariate Gaussian kernel. 


\textbf{S1-S5} represent a wide range of simulation settings including the extreme sparse case \textbf{S1}, sparse case \textbf{S2}, the two-dimensional domain case \textbf{S4}, and the densely sampled cases \textbf{S3} and \textbf{S5}. Note that \textbf{S1} and \textbf{S4} can only be handled by FSBS as for \textbf{S1} it is impossible to estimate a function via B-spline based on one point and for \textbf{S4}, the domain is of dimension 2.

\textbf{Evaluation result}: For a given set of true change-points $\mathcal{C}= \{\eta_k\}_{k = 1}^K$, we evaluate the accuracy of the estimator $\{\widehat{\eta}_k\}_{k = 1}^{\widehat{K}}$ by the difference $|\widehat{K} - K|$ and the Hausdorff distance $d(\hat{\mathcal{C}},\mathcal{C})$, defined by $d(\hat{\mathcal{C}},\mathcal{C})=\max\{\max_{x\in{\hat{\mathcal{C}}}}\min_{y\in{\mathcal{C}}}\{\vert x-y\vert\},\max_{y\in{\hat{\mathcal{C}}}}\min_{x\in{\mathcal{C}}}\{\vert x-y\vert\}\}$. For $\hat{\mathcal{C}} = \varnothing$, we use the convention that  $|\widehat{K}-K|=K$ and $d(\hat{\mathcal{C}},\mathcal{C})=T$.

For each scenario, we repeat the experiments 100 times and \Cref{fig:1} summarizes the performance of FSBS, BGHK, HK and SN. Tabulated results can be found in \Cref{sec-tables}. As can be seen, FSBS consistently outperforms the competing methods by a wide margin and demonstrates robust behaviour across the board for both sparsely and densely sampled functional data.

\begin{figure}[h]
    \centering
    \includegraphics[width=\textwidth]{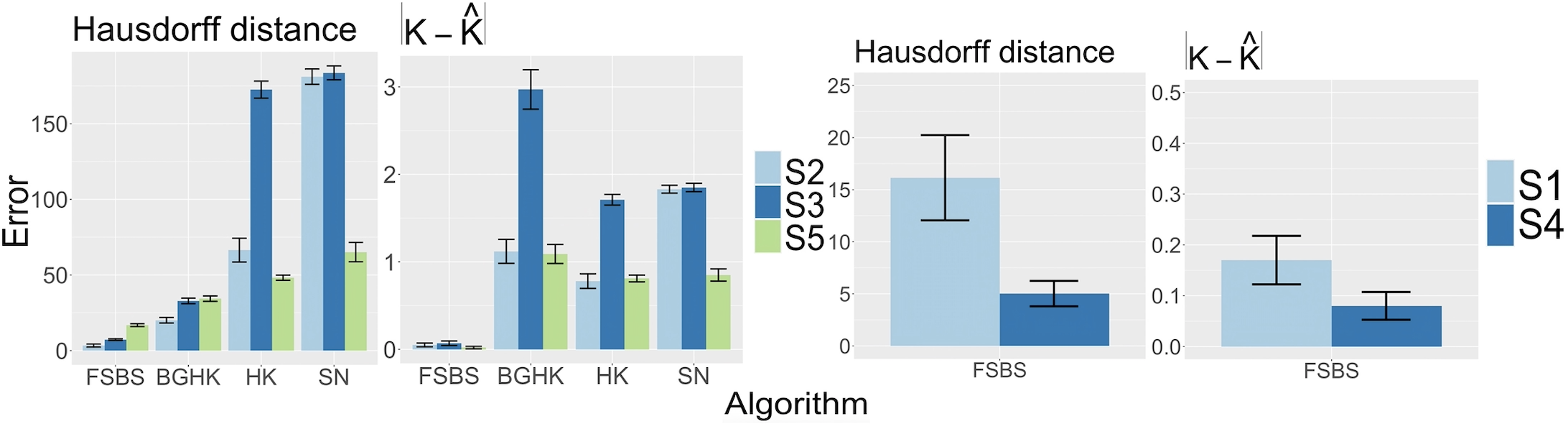}
    
    \caption{Bar plots for simulation results of {\bf{S1-S5}}. Each bar reports the mean and standard deviation computed based on 100 experiments. From left to right, the first two plots correspond to the Hausdorff distance and $\vert K-\hat{K}\vert$ in {\bf{S2}}, {\bf{S3}} and {\bf{S5}}. The last two plots correspond to {\bf{S1}} and {\bf{S4}}. } 
    \label{fig:1}

\end{figure}

\subsection{Real data application } \label{real-data}
We consider the COBE-SSTE dataset \citep{dataset}, which consists of monthly average sea surface temperature (SST) from 1940 to 2019, on a $1$ degree latitude by $1$ degree longitude grid $(48 \times 30)$ covering Australia.  The specific coordinates are latitude $10$S-$39$S and longitude $110$E-$157$E. 

We apply FSBS to detect potential change-points in the two-dimensional SST. The implementation of FSBS is the same as the one described in \Cref{simu-data}. To avoid seasonality, we apply FSBS to the SST for the month of June from 1940 to 2019. We further conduct the same analysis separately for the month of July for robustness check.

For both the June and July data, two change-points are identified by FSBS, Year 1981 and 1996, suggesting the robustness of the finding. The two change-points might be associated with years when both the Indian Ocean Dipole and Oceanic Ni\~{n}o Index had extreme events~\citep{2003}. The El Ni\~{n}o/Southern Oscillation has been recognized as an important manifestation of the tropical ocean-atmosphere-land coupled system. It is an irregular periodic variation in winds and sea surface temperatures over the tropical eastern Pacific Ocean. Much of the variability in the climate of Australia is connected with this phenomenon \citep{link1}. 

To visualize the estimated change, \Cref{fig:2} depicts the average SST before the first change-point Year 1981, between the two change-points, and after the second change-point Year 1996.  The two rows correspond to the June and July data, respectively.  As we can see, the top left corners exhibit different patterns in the three periods, suggesting the existence of change-points.

\begin{figure}[h]
    \centering
    \includegraphics[width=0.7\textwidth]{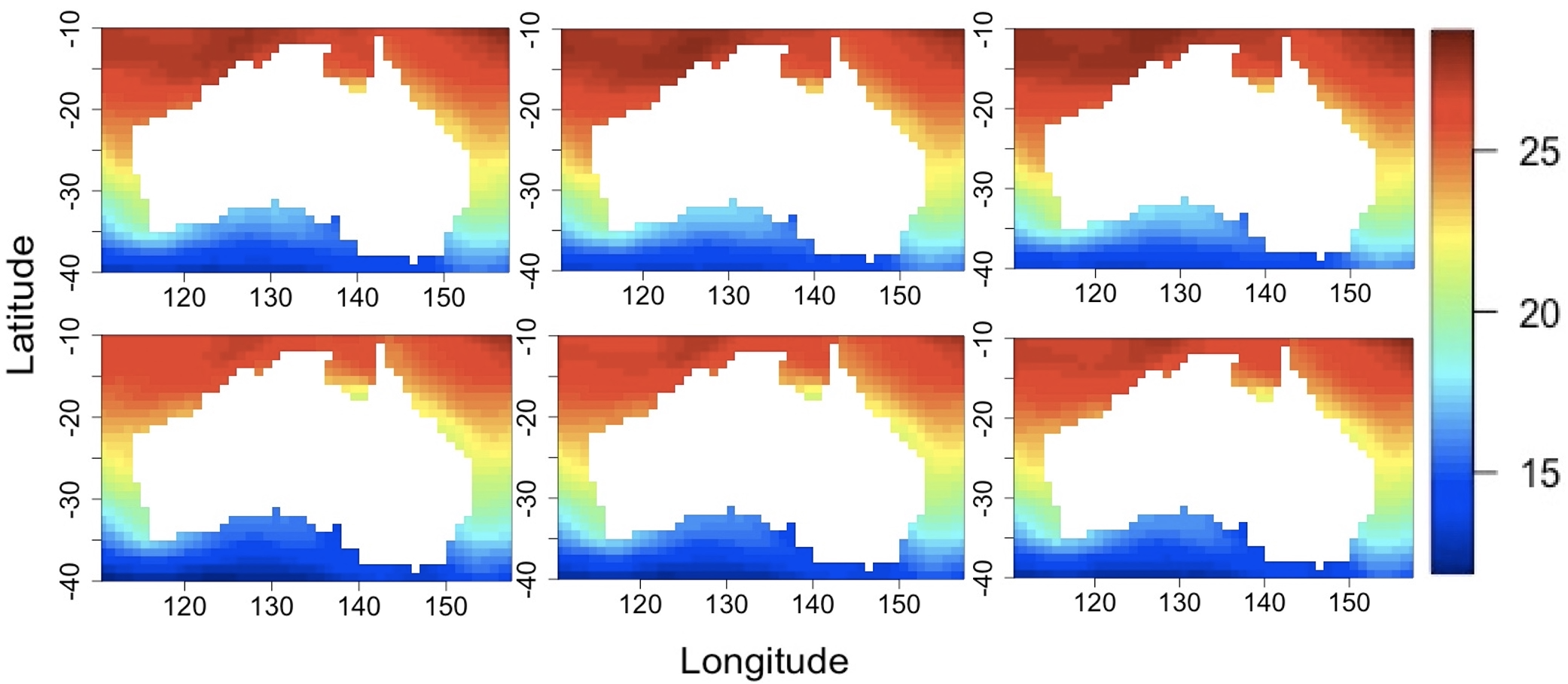}
    \caption{Average SST.  From left to right: average SST from 1940 to 1981, average SST from 1982 to 1996, and average SST from 1997 to 2019.  The top and bottom rows correspond to the June and July data respectively.}
    \label{fig:2}
\end{figure}

\section{Conclusion}\label{sec-conclusion}
In this paper, we study change-point detection for sparse and dense functional data in general dimensions. We show that our algorithm FSBS can consistently estimate the change-points even in the extreme sparse setting with $n=1$. Our theoretical analysis reveals an interesting phase transition between $n$ and $T$, which has not been discovered in the existing literature for functional change-point detection. The consistency of FSBS relies on the assumption that the minimal spacing $\Delta\asymp T$. To relax this assumption, we may consider increasing $\mathcal{K} $ in \Cref{definition:seeded} to enlarge  the coverage of the  seeded intervals in FSBS and apply the narrowest over threshold selection method \citep[Theorem 3 in][]{kovacs2020seeded}.  With minor modifications of the current theoretical analysis, the consistency of FSBS can be established for the case of  $ \Delta \ll T$.  Since such a relaxation  does not add much more methodological insights to our paper, we omit this  additional  technical discussion for conciseness. 

\newpage
\bibliographystyle{plainat}
\bibliography{References}
 
\clearpage
\appendix

\appendix
\section*{Appendices}
\section{Detailed simulation results}\label{sec-tables}
We present the tables containing the results of the simulation study in Section \ref{simu-data} of the main text. 
\begin{table}[H]
\centering
\caption{{\bf{Scenario 1}} ($n=1,d=1$ changes from $6\cos$-$6\sin$-$6\cos$)}
\medskip
\begin{tabular}{lccccc}
\hline
Model & $K- \hat{K}<0$ & $K- \hat{K}=0$& $K- \hat{K}>0$& $\vert \hat{K}-K\vert$ & $d$\\
\hline
FSBS  & 0.05 & 0.86& 0.09& $0.17$    & $16.15$ \\
\hline
\end{tabular}
\label{table:1}

\medskip
Changes occur at the times $30$ and $130$.
\end{table}%
%
\begin{table}[H]
\centering
\caption{{\bf{Scenario 2}} ($n=10,d=1$, changes from $2\cos$-$2\sin$-$2\cos$)}
\medskip
\begin{tabular}{lccccc}
\hline
Model & $K- \hat{K}<0$ & $K- \hat{K}=0$& $K- \hat{K}>0$& $\vert \hat{K}-K\vert$ & $d$\\
\hline
FSBS  & 0.05 & 0.95& 0& $0.05$    & $3.32$ \\
BGHK & $0.58$&$0.42$&$0$&$1.12$&$20.11$\\
HK & $0.16$&$0.47$&$0.37$&$0.78$&$66.45$\\
SN & $0.04$&$0.03$&$0.93$&$1.83$&$181.11$\\
\hline
\end{tabular}

\label{table:2}
\medskip
Changes occur at the times $30$ and $130$.
\end{table}%
%
%

\begin{table}[H]
\centering
\caption{{\bf{Scenario 3}} ($n=50,d=1$, changes from $\cos$-$\sin$-$\cos$)}
\medskip
\begin{tabular}{lccccc}
\hline
Model & $K- \hat{K}<0$ & $K- \hat{K}=0$& $K- \hat{K}>0$& $\vert \hat{K}-K\vert$ & $d$\\
\hline
FSBS  & 0 & 0.93& 0.07& $0.07$    & $7.35$ \\
BGHK & $0.85$&$0.15$&$0$&$2.97$&$32.88$\\
HK & $0$&$0.08$&$0.92$&$1.71$&$172.52$\\
SN & $0.02$&$0.04$&$0.94$&$1.85$&$183.63$\\
\hline
\end{tabular}

\label{table:3}
\medskip
Changes occur at the times $30$ and $130$.
\end{table}%
%
\begin{table}[H]
\centering
\caption{{\bf{Scenario 4}} ($n=10,d=2$, changes from $0$-$3x^{(1)}x^{(2)}$-$0$)}
\medskip
\begin{tabular}{lccccc}
\hline
Model & $K- \hat{K}<0$ & $K- \hat{K}=0$& $K- \hat{K}>0$& $\vert \hat{K}-K\vert$ & $d$\\
\hline
FSBS  & 0 & 0.92& 0.08& $0.08$    & $5.02$ \\
\hline
\end{tabular}
\label{table:4}
\medskip

Changes occur at the times $100$ and $150$.
\end{table}%
%

%

\begin{table}[H]
\centering
\caption{{\bf{Scenario 5}} ($n=50,d=1$, changes from $0$-$\sin$-$2\sin$)}
\medskip
\begin{tabular}{lccccc}
\hline
Model & $K- \hat{K}<0$ & $K- \hat{K}=0$& $K- \hat{K}>0$& $\vert \hat{K}-K\vert$ & $d$\\
\hline
FSBS  & 0.02 & 0.98& 0& $0.02$    & $16.9$ \\
BGHK & $0.48$&$0.30$&$0.22$&$1.09$&$34.36$\\
HK & $0$&$0.19$&$0.81$&$0.81$&$48.24$\\
SN & $0.08$&$0.33$&$0.59$&$0.85$&$65.15$\\
\hline
\end{tabular}

\label{table:5}
\medskip
Changes occur at the times $68$ and $134$.
\end{table}%
%
%

%
 
\section[]{Proof of \Cref{theorem:FSBS}}\label{sec-proof-thm1}

In this section, we present the proofs of theorem \Cref{theorem:FSBS}. To this end, we will invoke the following well-known $l_{\infty}$ bounds for kernel density estimation. 
\begin{lemma}\label{lemma:KDE}
Let $\{ x_{t,i}  \}_{i = 1, t = 1}^{n, T}$ be random grid points independently  sampled from a common density function $u: [0,1]^d \to \mathbb R$. Under Assumption~\ref{Kernel-as}-{\bf{b}}, the density estimator of the sampling distribution $\mu$,
    \[
        \hat{p}(x)=\frac{1}{nT}\sum_{t=1}^{T}\sum_{i=1}^{n_t}K_{\bar{h}}(x-x_{i,t}), \quad x \in [0, 1]^d,
    \]
satisfies,
\begin{align}
\label{Eq-5}
\vert\vert \hat{p}-\mathbb{E}(\hat{p})\vert\vert_{\infty}
\le 
C\sqrt{\frac{\log(nT)+\log(1/\bar{h})}{nT\bar{h}^d}}
\end{align}
with probability at least $1-\frac{1}{nT}$.
Moreover, under Assumption~\ref{Kernel-as}-{\bf{a}}, the bias term satisfies
\begin{align}
\label{Eq-4}
\vert\vert \mathbb{E}(\hat{p})-u\vert\vert_{\infty}\le C_2\bar{h}^r.
\end{align}
Therefore,
\begin{align}
\label{Eq-7}
\vert\vert \hat{p}-u\vert\vert_{\infty}=O\Big( \Big(\frac{\log(nT)}{nT}\Big)^{\frac{2r}{2r+d}}\Big)
\end{align}
with probability at least $1-\frac{1}{nT}$.
\end{lemma}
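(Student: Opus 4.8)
The plan is to treat the $nT$ grid points $\{x_{t,i}\}_{t=1,i=1}^{T,n}$ as a single i.i.d.\ sample of size $N=nT$ drawn from the density $u$, so that $\hat p$ is an ordinary kernel density estimator and the classical $\ell_\infty$ theory for such estimators applies verbatim. The two displays \eqref{Eq-5} and \eqref{Eq-4} correspond to the usual stochastic (variance) and deterministic (bias) parts of the error, and \eqref{Eq-7} is obtained by combining them.

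I first dispose of the bias bound \eqref{Eq-4}, which is immediate. By the i.i.d.\ sampling in Assumption~\ref{assume: model assumption}\textbf{a}, for every $x$ we have $\mathbb{E}\hat p(x)=\int_{[0,1]^d}K_{\bar h}(x-z)\,u(z)\dint z$. Since $u\in\hr(L)$, applying the kernel approximation property of Assumption~\ref{Kernel-as}\textbf{a} to $f=u$ gives $\sup_{x}|\mathbb{E}\hat p(x)-u(x)|\le \tilde C\bar h^{\,r}$, i.e.\ \eqref{Eq-4} with $C_2=\tilde C$.

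The substance of the lemma is the uniform stochastic bound \eqref{Eq-5}. I would write $\hat p(x)-\mathbb{E}\hat p(x)=N^{-1}\sum_{j=1}^N\{g_x(X_j)-\mathbb{E}g_x(X_j)\}$, where $g_x(\cdot)=\bar h^{-d}K((x-\cdot)/\bar h)$ and $\{X_j\}$ enumerates the grid points, and then view $\|\hat p-\mathbb{E}\hat p\|_\infty$ as the supremum of an empirical process indexed by the rescaled kernel class $\mathcal F_K$ of Assumption~\ref{Kernel-as}\textbf{b}. The three inputs are: the envelope $\|K\|_\infty\bar h^{-d}$; the variance bound $\sup_x\var(g_x(X_1))\lesssim \bar h^{-d}$, which uses boundedness of $u$; and the uniform VC covering number in Assumption~\ref{Kernel-as}\textbf{b}, which controls the entropy. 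Feeding these into Talagrand's concentration inequality together with the standard maximal inequality for uniformly bounded VC-classes (as developed in \cite{gine2001consistency} and \cite{kim2019uniform}) yields $\mathbb{E}\|\hat p-\mathbb{E}\hat p\|_\infty\lesssim\{\nu\log(1/\bar h)/(N\bar h^d)\}^{1/2}$, while the deviation term in Talagrand's bound, calibrated so that the exceptional probability is at most $N^{-1}$, contributes the additional $\log N$ factor; together these give exactly the rate in \eqref{Eq-5}. I expect this uniform-in-$x$ step to be the main obstacle: the pointwise concentration is elementary, but passing to the supremum over the continuum $[0,1]^d$ is what requires the VC/empirical-process machinery.

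Finally, \eqref{Eq-7} follows from the triangle inequality $\|\hat p-u\|_\infty\le\|\hat p-\mathbb{E}\hat p\|_\infty+\|\mathbb{E}\hat p-u\|_\infty$, plugging in \eqref{Eq-5} and \eqref{Eq-4} and then balancing the stochastic and bias terms. With the stated bandwidth $\bar h\asymp (nT)^{-1/(2r+d)}$ one has $N\bar h^d\asymp (nT)^{2r/(2r+d)}$, so the two contributions are of matching order up to the logarithmic factor, and substituting yields the claimed rate; the high-probability guarantee (probability at least $1-(nT)^{-1}$) is inherited directly from \eqref{Eq-5}.
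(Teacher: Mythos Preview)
Your proposal is correct and is exactly the standard argument; the paper does not give its own proof of this lemma but simply cites the literature (\cite{GINE2002907}, \cite{rinaldo2010generalized}, \cite{sriperumbudur2012consistency}, \cite{jiang2017uniform} for \eqref{Eq-5} and \cite{Tsybakov:1315296} for \eqref{Eq-4}), and your sketch---VC/empirical-process maximal inequality plus Talagrand for the stochastic term, the H\"older kernel approximation of Assumption~\ref{Kernel-as}\textbf{a} for the bias, then bandwidth balancing---is precisely the argument those references carry out.
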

 The verification of these bounds can be found in many places in the literature. For equation \eqref{Eq-5} see for example  \cite{GINE2002907}, \cite{rinaldo2010generalized}, \cite{sriperumbudur2012consistency} and \cite{jiang2017uniform}. For equation \eqref{Eq-4}, \cite{Tsybakov:1315296} is a common reference.
\begin{proof} [Proof of \Cref{theorem:FSBS}] 
For any $(s,e]\subseteq (0,T] $, let 
	$$ \widetilde f^{  (s, e]}_{t }    (x ) =  \sqrt { \frac{e-t}{ (e -s  )(t-s )}} \sum_{ l =s +1}^{ t}  f_l  ^* (x )  -
   \sqrt { \frac{t-s}{ (e-s  )(e-t)}} \sum_{l =t+1}^{ e}  f_l  ^* (x ),\ x\in[0,1]^d.$$
   For any $\tilde{r}\in (\rho, T-\rho ] $ and $x\in [0,1]$, we consider
\begin{align*}
& 
\mathcal A_x (  (s,e], \rho , \lambda   )  =
\bigg\{ \max_{t=s+\rho+1}^{e-\rho } |\widetilde F_{t, h } ^{s,e } (x)   - \widetilde f_{t} ^{s,e }  (x)  | \le \lambda \bigg\} ;
\\
 &   
\mathcal B_x (  \tilde{r}  ,  \rho,  \lambda) = \bigg\{\max_{ N=\rho }^{T -\tilde{r}   } \bigg| \frac{1}{\sqrt  N} \sum_{t=\tilde{r}+1}^{\tilde{r}+ N}    F_{t,h   }(x)   - \frac{1}{\sqrt  N} \sum_{t=\tilde{r}  +1}^{\tilde{r}+N}   f_{t }  (x)  \bigg| \le \lambda \bigg\}  \bigcup 
\\
& 
\quad \quad  \quad \quad \quad \ \   \bigg\{\max_{ N=\rho }^{   \tilde{r} } \bigg| \frac{1}{\sqrt  N}  \sum_{t=\tilde{r} -N +1}^{\tilde{r}}  F_{t,h  }(x)   - \frac{1}{\sqrt  N}  \sum_{t=\tilde{r} -N +1}^{\tilde{r}}     f_{t}  (x)  \bigg| \le \lambda   \bigg\}.
\end{align*}
\\
\\
From \Cref{alg:FSBS}, we have that
$$ \rho =\frac{\log(T)}{nh^d}.$$
We observe that, 
$  \rho nh^d =  \log(T)  $ 
and for $T\ge 3,$ 
we have that
\begin{align*}  
\rho  ^{1/2 -1/q }  
\ge 
{(nh^d)^{ 1/2  - (q-1)/q }} .
\end{align*}
Therefore, \Cref{prop:deviation of means} and \Cref{cor:deviation cusum} imply that  with
\begin{align}\label{eq:size of lambda in functions} 
\lambda  
=
C_\lambda   \bigg(   \log ^{5/q   }  (T)        \sqrt {   \frac{1}{nh^d }  
   + 1  }  +    \sqrt { \frac{ \log(T) }{   nh^d   } }  +\sqrt{T}h^r+\sqrt{T}  \Big(\frac{\log(nT)}{nT}\Big)^{\frac{2r}{2r+d}} \bigg)   ,
\end{align} 
for some diverging sequence $C_\lambda $, it holds that 
$$P \bigg\{ \mathcal A _x^{c} ( (s,e], \rho,  \lambda )  \bigg\} \le 4C_1\frac{\log(T)}{(\log ^{5/q}  (T))^{q}}+\frac{2}{T^5}
+\frac{10}{Tn}$$
and
$$P \bigg\{ \mathcal B_x^{c} ( r  , \rho,   \lambda )  \bigg\}\le 2C_1\frac{\log(T)}{(\log ^{5/q}  (T))^{q}}+\frac{1}{T^5}
+\frac{5}{Tn}.$$
Then, using that $\log^4(T)=O(T),$ from above
$$P \bigg\{ \mathcal A _x^{c} ( (s,e], \rho,  \lambda )  \bigg\} =O(\log^{-4} (T)  ) \quad \text{and} \quad   P \bigg\{ \mathcal B_x^{c} ( r  , \rho,   \lambda )  \bigg\} =O(\log^{-4} (T) ) .
$$
Now, we notice that,
\begin{align*}
&\sum_{k=1}^ \mathcal K \tilde{n}_k 
=
\sum_{k=1}^ \mathcal K (2^k -1)
\le
\sum_{k=1}^ \mathcal K 2^k
\le
2(2^{\lc \log(2)C_{\mathcal{K}}(\log(\log(T)))/\log2 \rc}-1)
\\
&\le
4(2^{(\log(\log(T)))/\log2})^{\log(2)C_{\mathcal{K}}}
=O(\log^{\log(2)C_{\mathcal{K}}}((T)))
.
\end{align*}
In addition, there are $K =O(1) $ number of  change-points. In consequence, it follows that 
\begin{align}
& 
P \bigg\{ \mathcal A _u ( \mathcal I , \rho ,  \lambda )   \text{ for all } \mathcal I \in \mathcal J \text{ and all }u \in \{ u_m\}_{m=1}^{ \log(T)}   \bigg\} \ge1- \frac{1}{\log^2(T)} , \label{eq:event B0} 
\\
&
P \bigg\{ \mathcal B_u  (  s , \rho,  \lambda )  \cup \mathcal B_u (e, \rho  ,  \lambda )   \text{ for all } \mathcal (s,e]=\mathcal  I \in \mathcal J \text{ and all }u \in \{ u_m\}_{m=1}^{ \log(T)}  \bigg\} \ge 1- \frac{1}{\log(T)},  \label{eq:event B1} 
\\
&
P \bigg\{ \mathcal B_u  (  \eta_k  , \rho,  \lambda )   \text{ for all }  1\le k \le K  \text{ and all }u \in \{ u_m\}_{m=1}^{ \log(T)}  \bigg\} \ge1- \frac{1}{\log^3(T)}  \label{eq:event B2} .
\end{align} 
The rest of the argument is made by assuming the events in equations \eqref{eq:event B0}, \eqref{eq:event B1} and  \eqref{eq:event B2} hold. 
\\
\\
Denote 
$$\Upsilon_k  =C      \log^{ \max\{1, 10/q\} } (T)  \bigg(   1 +         T ^{ \frac{  d}{ 2r+d }  }  n ^{ \frac{-  2r }{ 2r+d }  }          \bigg) \kappa_k^{-2}      \quad \text{and} 
\quad \Upsilon_{\max  } = C      \log^{\max\{1, 10/q\}  } (T)  \bigg(   1 +         T ^{ \frac{  d}{ 2r+d }  }  n ^{ \frac{-  2r }{ 2r+d }  }          \bigg) \kappa ^{-2}    ,  $$
where $ \kappa = \min \{\kappa_1, \ldots, \kappa_K \} $.
Since $\Upsilon _k$ is  the desired   localisation rate, by induction, it suffices to consider any generic interval $(s, e]  \subseteq (0, T]$ that satisfies the following three conditions:
	\begin{align*}
		&
		\eta_{m-1} \le s\le \eta_m \le \ldots\le \eta_{m+q} \le e \le \eta_{m+q+1}, \quad q\ge -1; \\
		& 
		\text{ either }   \eta_m-s\le \Upsilon_m  \quad \text{or} \quad   s-\eta_{m-1} \le  \Upsilon_{m-1};
		\\
		& 
		\text{ either }  \eta_{m+q+1}-e \le \Upsilon_{m+q+1} \quad  \text{or}\quad   e-\eta_{m+q} \le \Upsilon _{m+q}  .
	\end{align*}
Here $q = -1$ indicates that there is no change-point contained in $(s, e]$.
\\
\\
Denote 
$$\Delta_k =  \eta_{k-1}-\eta_{k }  \text{ for }  k =1, \ldots, K+1  \quad \text{and} \quad \Delta= \min\{ \Delta_1,\ldots, \Delta_{K+1}\} .$$
Observe  that  since 
$\kappa_k >0$ 
for all $1\le k\le K$
and  that 
$\Delta_k =\Theta(T) $,   
it holds that $\Upsilon_{\max  }   =o(\Delta)  $.  
Therefore, it has to be the case that for any true  change-point $\eta_m\in (0, T ]$, 
either 
$|\eta_m  -s |\le \Upsilon_{m}  $ 
or 
$|\eta_m -s| \ge \Delta - \Upsilon_{\max  }     \geq  \Theta(T) $. This means that 
$ \min\{ |\eta_m-e|, |\eta_m-s| \}\le \Upsilon_{m}  $ 
indicates that 
$\eta_m$ 
is a detected change-point in the previous induction step, even if $\eta_m\in (s, e]$.  
We refer to 
$\eta_m\in (s,e]$ 
as an undetected change-point if 
$ \min\{ \eta_m -s, \eta_m-e\} =\Theta(T)  $.
To complete the induction step, it suffices to show that FSBS $( (s,e], h    ,  \tau   )$     
\\
{\bf  (i)} will not detect any new change
point in $(s,e ]$ if
all the change-points in that interval have been previously detected, and
\\
{\bf	(ii) }will find a point $D_{m*}^{\mathcal I^* }$ in $(s,e]$ such that $|\eta_m-D_{m*}^{\mathcal I^* } |\le \Upsilon_m$ if there exists at least one undetected change-point in $(s, e]$.
\\
\\
In order to accomplish this, we need the following series of steps.
\\
\\
 {\bf Step 1.}  We first observe that if $\eta_k \in \{ \eta_k\}_{k=1}^K$ is any change-point in the functional time series, by \Cref{lemma:properties of seeded},
 there exists  a seeded interval  $ \mathcal I_k =(s_k, e_k] $ containing  exactly one change-point $\eta_k $   such that   
\begin{align*}   
\min\{ \eta_k-s_k , e_k -\eta_k  \}\ge \frac{1}{16} \zeta_k ,  \quad \text{and}
 \quad \max\{ \eta_k-s_k , e_k -\eta_k  \}\le \zeta _k
 \end{align*}
where,
$$ \zeta _k = \frac{9}{10} \min \{ \eta_{k+1}-\eta_k, \eta_k -\eta_{k-1} \} . $$ 
Even more, we notice that if  
$ \eta_k\in (s,e]$ 
is any undetected change-point in 
$(s,e] $. 
Then it must hold that 
 $$ s-\eta_{k-1}  \le \Upsilon_{\max}  .$$
Since 
$\Upsilon_{\max} =  O( \log^{\max\{1, 10/q\}  } (T)           T ^{ \frac{  d}{ 2r+d }  } )$ and $O(\log^a(T))=o(T^b)$ for any positive numbers $a$ and $b$, we have that $\Upsilon_{\max}=o(T)$. Moreover,  
$\eta_k -s _k \le \zeta_k  \le \frac{9}{10}(\eta_k -\eta_{k-1})  $, so that it holds that 
$$s_k -\eta_{k-1} \ge \frac{1}{10} (\eta_{k} -\eta_{k-1} ) >  \Upsilon_{\max } \ge s- \eta_{k-1} $$
and in consequence 
$  s_k\ge s $. 
Similarly $  e_k\le e $.   
Therefore 
$$ \mathcal I_k = (s_k, e_k] \subseteq (s,e]. $$
\
\\
{\bf Step 2.} Consider the collection of intervals  $\{ \mathcal I _k =(s_k, e_k] \}_{k=1}^K $ in {\bf Step 1.}
 In this step, it is shown that   for each $k \in \{ 1,\ldots,K\}$, it holds that
	\begin{align} \label{eq:properties of vectors}
	 \max_{ t=s_k  +\rho}^{t=e _k  - \rho} 	\max_{m=1}^{m=\log(T) }| \widetilde F_{t, h }^{ ( s_k  , e_k  ]   } (u_m ) |  \ge   c_1 \sqrt {T}   \kappa_k   ,
	\end{align}
	for some sufficient small constant $c_1$. 
	 \\
	 \\
	 Let $k \in \{ 1,\ldots,K\}$.
	By  {\bf Step 1}, $ \mathcal I_k $
	contains exactly one change-point $\eta_k$.  Since  for every 
	$ u_m$, $ f^* _t (u_m)   $ is a    one dimensional population  time series and there is only  one change-point in $\mathcal I_k=(s_k, e_k]$, 
	it holds that 
$$f^*_{s_k+1}(u_m)=...=f^*_{\eta_k}(u_m)\neq f^*_{\eta_k+1}(u_m)=...=f^*_{e_k}(u_m)$$	
which implies, for $s_k<t< \eta_k$
\begin{align*}
\widetilde f^{  (s_k, e_k]}_{t }    (u_m ) =&  \sqrt { \frac{e_k-t}{ (e_k -s_k  )(t-s_k )}} \sum_{ l =s_k +1}^{ t}  f_{\eta_k}  ^* (u_m )  -\sqrt { \frac{t-s_k}{ (e_k-s_k  )(e_k-t)}} \sum_{l =t+1}^{\eta_k}  f_{\eta_k}  ^* (u_m )\\
-&
   \sqrt { \frac{t-s_k}{ (e_k-s_k  )(e_k-t)}} \sum_{l= \eta_k+1}^{ e_k}  f_{\eta_k+1}  ^* (u_m ) 
 \\
 =&
 (t-s_k)\sqrt { \frac{e_k-t}{ (e_k -s_k  )(t-s_k )}}f_{\eta_k}  ^* (u_m )
 -(\eta_k-t)\sqrt { \frac{t-s_k}{ (e_k-s_k  )(e_k-t)}}f_{\eta_k}  ^* (u_m )\\
-&
 (e_k-\eta_k)\sqrt { \frac{t-s_k}{ (e_k-s_k  )(e_k-t)}}f_{\eta_k+1}  ^* (u_m ) 
 \\
 =&
 \sqrt { \frac{(t-s_k)(e_k-t)}{ (e_k -s_k  )}}f_{\eta_k}  ^* (u_m )
 -(\eta_k-t)\sqrt { \frac{t-s_k}{ (e_k-s_k  )(e_k-t)}}f_{\eta_k}  ^* (u_m )\\
-&
 (e_k-\eta_k)\sqrt { \frac{t-s_k}{ (e_k-s_k  )(e_k-t)}}f_{\eta_k+1}  ^* (u_m ) 
 \\
 =&
 (e_k-t)\sqrt { \frac{t-s_k}{ (e_k-t)(e_k -s_k  )}}f_{\eta_k}  ^* (u_m )
 -(\eta_k-t)\sqrt { \frac{t-s_k}{ (e_k-s_k  )(e_k-t)}}f_{\eta_k}  ^* (u_m )\\
-&
 (e_k-\eta_k)\sqrt { \frac{t-s_k}{ (e_k-s_k  )(e_k-t)}}f_{\eta_k+1}  ^* (u_m ) 
 \\
 =&
 (e_k-\eta_k)\sqrt { \frac{t-s_k}{ (e_k-t)(e_k -s_k  )}}f_{\eta_k}  ^* (u_m )
-
 (e_k-\eta_k)\sqrt { \frac{t-s_k}{ (e_k-s_k  )(e_k-t)}}f_{\eta_k+1}  ^* (u_m ) 
 \\
 =&
  (e_k-\eta_k)\sqrt { \frac{t-s_k}{ (e_k-t)(e_k -s_k  )}}(f_{\eta_k}  ^* (u_m )-f_{\eta_k+1}  ^* (u_m ) ).
\end{align*}
Similarly, for $\eta_k\le t\le e_k$
\begin{align*}
 f^{  (s_k, e_k]}_{t }    (u_m)=\sqrt {\frac { e  _k-t}{(e  _k- s  _k)(t-s   _k)} }(\eta_k-s  _k)    (   f ^* _{ \eta_{k}  } (u_m )  -f^* _{ \eta_{k}+1    }(u_m) ).   
\end{align*}
Therefore,
\begin{align}\label{eq:one change-point 1d cusum}
\widetilde f^{  (s_k, e_k]}_{t }    (u_m)      
=
\begin{cases}
\sqrt {\frac {t-s  _k}{(e  _k-s  _k)(e  _k-t)} }( e  _k-\eta_k)    (   f ^* _{ \eta_{k}  } (u_m )  -f ^* _{ \eta_{k  }+1    }(u_m) ) ,  & s_k   <   t< \eta_k;  
\\
\sqrt {\frac { e  _k-t}{(e  _k- s  _k)(t-s   _k)} }(\eta_k-s  _k)    (   f ^* _{ \eta_{k} } (u_m )  -f^* _{ \eta_{k  }+1    }(u_m) )  , & \eta_k \le  t\le e_k .
\end{cases} 
\end{align}
\
\\
	By \Cref{lemma:discrete approximation of a function}, with probability at least $1-o(1)$,  there exists $ u_{\tilde{k}} \in \{u_m \}_{m=1}^{\log(T)}$ such that 
	$$  |   f ^* _{ \eta_{k}  } (u_{\tilde{k}} )  -f ^* _{ \eta_{k}+1    }(u_{\tilde{k}}) |\ge \frac{3}{4} \kappa_k. $$
	 Since $\Delta=\Theta(T)$, $\rho=O(\log(T)T^{\frac{d}{2r+d}})$ and $\log^a(T)=o(T^{b})$ for any positive numbers $a$ and $b,$ we have that
	 \begin{equation}
	 \label{eq-bound}
	     \min\{ \eta_k-s_k , e_k -\eta_k  \}\ge \frac{1}{16} \zeta_k  \ge c_2 T  >  \rho  ,
	 \end{equation}
   so that  $\eta_k \in [s_k+\rho, e_k-\rho]$.
   Then, from \eqref{eq:one change-point 1d cusum}, \eqref{eq-bound} and the fact that $\vert e_k-s_k\vert<T$ and $\vert \eta_k-s_k\vert<T$, 
   \begin{align}        \label{eq:functional population lower bound}  | \widetilde f^{  (s_k, e_k]}_{ \eta_k  }    (u_{\tilde{k}} )  | 
   =   \sqrt {\frac { e  _k-\eta_k}{(e  _k- s  _k)(\eta_k-s   _k)} }(\eta_k-s  _k)    \vert   f ^* _{ \eta_{k} } (u_{\tilde{k}} )  -f^* _{ \eta_{k  }  +1  }(u_{\tilde{k}}) \vert 
   \ge  c_ 2 \sqrt {T } \frac{3}{4}\kappa_k.
   \end{align} 
	\
	\\
	Therefore, it holds that 
		\begin{align*}
	\max_{ t=s_k  +\rho}^{t=e _k  - \rho} \max_{m=1}^{m=\log(T)} | \widetilde F^{  (s_k, e_k]}_{t,h  }    (u_{m} )  | 
	\ge & 	 | \widetilde F^{  (s_k, e_k]}_{\eta_k,h }    (u_{\tilde{k}} )  |
	\\
	\ge & | \widetilde f^{  (s_k, e_k]}_{ \eta_k  }    (u_{\tilde{k}})  | 
	 -\lambda 
	\\
	\ge &  c_ 2 \frac{3}{4} \sqrt {T} \kappa_k  -\lambda  ,
	   \end{align*} 
where the first inequality follows from the fact that $\eta_k \in [s_k+\rho, e_k-\rho]$, the second inequality follows from the good  event in \eqref{eq:event B0}, and the last inequality follows from \eqref{eq:functional population lower bound}. 
	\\
	Next, we observe that $\log^{\frac{5}{q}}(T)\sqrt{\frac{1}{nh^d}+1}=o(\sqrt{T^{\frac{2r+d}{d}}})O(\sqrt{T^{\frac{d}{2r+d}}})=o(\sqrt{T})$, $\rho<c_2T$, $h^r=o(1)$ and $\Big(\frac{\log nT}{nT}\Big)^{\frac{2r}{2r+d}}=o(1)$.
	In consequence, since $\kappa_k  $ is a positive constant, by the upper bound of $\lambda $ on  \Cref{eq:size of lambda in functions}, for sufficiently large $T$, it holds that 
	$$ \frac{c_2}{4}\sqrt{T}\kappa_k\ge \lambda.$$
	Therefore,
	$$\max_{ t=s_k  +\rho}^{t=e _k  - \rho}	\max_{m=1}^{m=\log(T) }| \widetilde F^{  (s_k, e_k]}_{t,h  }    (u_{m} )  |   \ge \frac{c_2}{2}\sqrt {T} \kappa_k. $$
	Therefore \Cref{eq:properties of vectors} holds with $ c_1 =\frac{c_2}{2}.$
\\	
 \\
 {\bf Step 3.} 
In this step, it is  shown that   FSBS$( (s, e] , h ,\tau)   $  can 
 consistently detect or reject the existence of undetected
change-points within $(s, e]$.
 \\
 \\
Suppose $\eta_k\in (s, e ]$ is any undetected change-point. Then by the second half of {\bf Step 1}, $\mathcal I_k \subseteq (s,e] $. Therefore 
  \begin{align*}  
		A^{\mathcal I^* }_{m^* } \ge \max_{ t=s_k  +\rho}^{t=e _k  - \rho}	\max_{m=1}^{m=\log(T) }| \widetilde F_{t, h }^{ ( s_k  , e_k  ]   } (u_m ) |  \ge   c_1 \sqrt {T}   \kappa_k   >  \tau,
	\end{align*} 
where the second inequality follows from    \Cref{eq:properties of vectors}, and the last inequality follows from the fact that, $\log^a(T)=o(T^b)$ for any positive numbers $a$ and $b$ implies $$\tau = C_\tau      \bigg(   \log ^{ \max\{1, 10/q\} }   (T)        \sqrt {   \frac{1}{nh^d }  
   + 1  }      \bigg)  =o(\sqrt{T})$$.
\\
\\
Suppose there does not exist any undetected change-point in $(s, e]$. Then for any $ \mathcal I =(\alpha ,\beta]  \subseteq (s, e]$, one of the following situations must hold,
	\begin{itemize}
	\item [(a)]	There is no change-point within $ (\alpha ,  \beta   ]$;
	\item [(b)] there exists only one change-point $\eta_{k} $ within $(\alpha , \beta ]$ and $\min \{ \eta_{k}- \alpha   ,  \beta-\eta_{k}\}\le \Upsilon_k $; 
	\item [(c)] there exist two change-points $\eta_{k} ,\eta_{k+1}$ within $(\alpha ,  \beta ]$ and 
	$$  \eta_{k}-  \alpha  \le \Upsilon_k  \quad \text{and} \quad    \beta   -\eta_{k+1}  \le \Upsilon_ {k+1}  .$$
	\end{itemize}
The calculations of (c) are  provided as the other two cases are similar and simpler.  Note that for any $ x\in [0,1]^d$,
it holds that 
$$ | f _{\eta_{k+1} }  ^*  (x  )   - f _{\eta_{k+1} +1 }^*    (x )  |   \le \| f _{\eta_{k+1} }  ^* -f _{\eta_{k+1} +1 }^*  \|_\infty =\kappa_{k+1}     $$
and similarly 
$$ | f _{\eta_{k } }  ^*  (x   )   - f _{\eta_{k } +1 }^*    (x  )  |   \le   \kappa_{k }. $$
By \Cref{lemma:cusum boundary bound} and the assumption that $(\alpha  ,\beta  ]$ contains only two change-points, it holds that
for all $ x\in [0,1]^d$,
\begin{align*}\nonumber 
			\max_{ t= \alpha}^{  \beta   }  |\widetilde{f}^{ (a , \beta ] }_t (x )  | \leq & \sqrt{ \beta    - \eta_{r+1}}  | f ^* _{\eta_{r+1} }  ( x  )   - f^*  _{\eta_{r+1} +1 } ( x  ) |  + \sqrt{\eta_r - \alpha }  |  f _{\eta_{r } }^*   (x  )   - f _{\eta_{r }^*  +1 }  (x   ) | 
 \\
	 \le   &  \sqrt  { \Upsilon_{k+1} }  \kappa_{k+1 } +  \sqrt { \Upsilon_{k }}  \kappa_{k  } 
	 \le  2\sqrt{C} \log^{ \max\{1/2, 5/q\} } (T)  \sqrt {    1 +         T ^{ \frac{  d}{ 2r+d }  }  n ^{ \frac{-  2r }{ 2r+d }  }            } .
	 \end{align*}
	 Thus 
	 \begin{align}\label{eq:population size with two change-points}
	 \max_{ t= \alpha}^{  \beta   }  \|\widetilde{f}^{ (a , \beta ] }_t   \|_\infty \le 2\sqrt{C} \log^{ \max\{1/2, 5/q\} }  (T)  \sqrt {    1 +         T ^{ \frac{  d}{ 2r+d }  }  n ^{ \frac{-  2r }{ 2r+d }  }}  . 
	 \end{align}
 Therefore in the  good  event  in \Cref{eq:event B0}, for any $ 1\le m \le \log(T)$ and any $\mathcal I  =(\alpha, \beta] \subseteq(s,e] $, it holds that 
\begin{align*} 
A_ m^{\mathcal I}  
=&  
\max_{  t=\alpha+ \rho }^{  \beta-\rho      }  | \widetilde F _{t,h} ^{( \alpha  ,  \beta]  } (u_  m )    | 
\\
\le &  
\max_{  t=\alpha+ \rho }^{  \beta-\rho      }    \| \widetilde f^{  (\alpha , \beta ]}_{t }   \|_\infty  + \lambda
\\
\le&  
2\sqrt{C} \log^{ \max\{1/2, 5/q\} }  (T)  \sqrt {    1 +         T ^{ \frac{  d}{ 2r+d }  }  n ^{ \frac{-  2r }{ 2r+d }  }            }     + \lambda ,
\end{align*}
where the first inequality follows from \Cref{eq:event B0}, and the last inequality follows from  \Cref{eq:population size with two change-points}.
Then,
\begin{align*}
&2\sqrt{C} \log^{ \max\{1/2, 5/q\} }  (T)  \sqrt {    1 +         T ^{ \frac{  d}{ 2r+d }  }  n ^{ \frac{-  2r }{ 2r+d }  }            }     + \lambda 
\\
=&2\sqrt{C}\log^{ \max\{1/2, 5/q\} }  (T) \sqrt{\frac{1}{nh^d}+1}\\
&+C_\lambda\log^{5/q}(T)\sqrt{\frac{1}{nh^d}+1}+C_\lambda\sqrt{\frac{\log(T)}{nh^d}}+C_\lambda\sqrt{T}h^r+C_\lambda\sqrt{T}\Big(\frac{\log nT}{nT}\Big)^{\frac{2r}{2r+d}}.
\end{align*}
We observe that $\sqrt{\frac{\log(T)}{nh^d}}=O(\log(T)^{1/2}\sqrt{\frac{1}{nh^d}+1})$. Moreover, 
$$\sqrt{T}h^r=\sqrt{T}\Big(\frac{1}{nT}\Big)^{\frac{r}{2r+d}}\le(T^{\frac{1}{2}-\frac{r}{2r+d}})\frac{1}{n^{\frac{r}{2r+d}}},$$
and given that, 
$$\frac{1}{2}-\frac{r}{2r+d}=\frac{d}{2(2r+d)},$$
we get,
$$\sqrt{T}h^r=o\Big(\log^{\max{1/2,5/q}}(T)\sqrt{\frac{1}{nh^d}+1}\Big).$$ Following the same line of arguments, we have that
$$\sqrt{T}\Big(\frac{\log nT}{nT}\Big)^{\frac{2r}{2r+d}}=T^{\frac{1}{2}-\frac{2r}{2r+d}}\log^{\frac{2r}{2r+d}}(T)=o\Big(\log T\sqrt{\frac{1}{nh^d}+1}\Big).$$
 Thus, by the choice of $\tau$, it holds that with sufficiently large constant $C_\tau $, 
\begin{align} 
\label{help}
A_ m^{\mathcal I} \le \tau \quad \text{for all } 1\le m \le \log(T) \quad  \text{and all} \quad    \mathcal I  \subseteq(s,e] .
\end{align}
As a result, FSBS $( (s,e], h    ,  \tau   )$     will correctly
reject  if $(s,e]$ contains no undetected change-points. 
\\
\
\\
 {\bf Step 4.} 
Assume that there exists an undetected change-point $\eta_{\tilde{k}}\in (s, e]$ such that 
$$ \min\{ \eta_{\tilde{k}} -s, \eta_{\tilde{k}}-e\} =\Theta(T).$$  
Let  $m^*$ and $\mathcal I^*$ be defined as in  FSBS $( (s,e], h   ,  \tau   )$ with 
$$\mathcal I^* =( \alpha ^*, \beta ^*].  $$
\\
To complete the  induction, it suffices to show that,   there exists a change-point $\eta_k\in  (s ,e ]$ such that 
$ \min\{ \eta_k -s, \eta_k-e\}= \Theta(T)$ and $|D_{m^*} ^{\mathcal I^*} -\eta_k|\le \Upsilon_k$.
\\
\\
Consider the uni-variate time series $$ F_{t,h}  (u_{m*} )  =  \frac{1}{n} \sum_{i=1}^{n} y_{t, i} K_h(u_{m*}-x_{t, i })    \quad \text{and} \quad  
f^* _t   (u_{m*})    \quad \text{ for all } 1\le t  \le T  .$$ 
\
\\
Since the collection of the change-points of the time series 
$\{f_t ^*  (u_{m*})\}_{ t \in \mathcal I^* }$ 
is a subset of that of 
$\{\eta_{k}\}_{k=0}^{K+1}\cap(s,e]$, 
we may apply \Cref{lemma:error bound 1d} to by setting 
$$\mu_t = F_{t,h}  (u_{m*} ) \quad \text{and}  \quad 
\omega_t = f^* _t   (u_{m*})  $$ 
on the interval 
$\mathcal I^*$. 
Therefore, it suffices to justify that all the assumptions of 
\Cref{lemma:error bound 1d} hold. 
\\
\\ 
In the following, $\lambda$ is used in \Cref{lemma:error bound 1d}.
Then  \Cref{eq:wbs noise 1} and \Cref{eq:wbs noise 2}  are directly consequence of 
 \Cref{eq:event B0}, \Cref{eq:event B1}, \Cref{eq:event B2}.      \\
We observe that, for any $\mathcal I =(\alpha, \beta ]\subseteq (s,e],$
\begin{align*}
 \max_{ t=\alpha^* +\rho}^{ \beta^* -\rho }  | \widetilde F_{t, h }^{ (\alpha ^*  , \beta^* ]   } (u_{m^*}  ) |   = A^{\mathcal I^*}_{m^*}  \ge A^{\mathcal I  }_m  = \max_{   t= \alpha  +\rho }^{  \beta   - \rho   }| \widetilde F_{t, h }^{ ( \alpha   , \beta    ]   } (u_m ) | 
\end{align*}
for all $m $.
By  {\bf Step 1} with $ \mathcal I_k =(s_k,e_k]$, it holds that $$\min\{ \eta_k-s_k , e_k -\eta_k  \}\ge \frac{1}{16} \zeta_k  \ge c_2 T  ,$$ 
Therefore  for all $ k\in \{ \tilde{k} : \min \{ \eta_{\tilde{k}}-s, e-\eta_{\tilde{k}}\} \ge c_2 T\} $,
$$ \max_{ t=\alpha^* +\rho}^{ \beta^* -\rho } | \widetilde F_{t, h }^{ (\alpha ^*  , \beta^* ]   } (u_{m^*}  ) |   \ge 	\max_{ t=s_k  +\rho,m=1}^{t=e _k  - \rho,m=\log(T) }| \widetilde F_{t, h }^{ ( s_k  , e_k  ]   } (u_m ) |  \ge   c_1 \sqrt {T}   \kappa_k, $$
where the last inequality follows from \Cref{eq:properties of vectors}.
Therefore
\Cref{eq:wbs size of sample} holds in \Cref{lemma:error bound 1d}. 
Finally,
\Cref{eq:wbs noise} is a direct consequence of  the choices that 
$$
h  =  C_h (Tn)^{\frac{-1}{2r+d }}    \quad  \text{and} \quad \rho =     \frac{ \log(T) }{nh^d}     .$$
\
\\
Thus, all the conditions in \Cref{lemma:error bound 1d} are met. So that, there exists a change-point $\eta_{k}$ of $\{f^* _t   (u_{m*}) \}_{ t \in \mathcal I^* } $, satisfying
\begin{equation}
\min \{ \beta ^* -\eta_k,\eta_k- \alpha^* \}    > cT  , \label{eq:coro wbsrp 1d re1}
\end{equation}
and
\begin{align*}
		| D ^{\mathcal I^* }_{m*}-\eta_{k}|\le  \max \{ C_3\lambda  ^2 \kappa_k ^{-2}
	    ,\rho  \} \le&    C _4      \log^{ \max\{ 10/q,1\} } (T)  \bigg(   1 +         \frac{1}{nh^d }      + T h^{2r } 
	    +T\bigg(\frac{\log(nT)}{nT}\bigg)^{\frac{4r}{2r+d}}\bigg) \kappa_k^{-2}    \\
	    \le&   C      \log^{ \max\{ 10/q,1\} } (T)  \bigg(   1 +         T ^{ \frac{  d}{ 2r+d }  }  n ^{ \frac{-  2r }{ 2r+d }  }          \bigg) \kappa_k^{-2}   
	   \end{align*} 
for sufficiently large constant $C $, where we have followed the same line of arguments than for the conclusion of \eqref{help}.
Observe that \\
{\bf i)} The change-points
of $\{f_t^*  ( u_{m^*})\}_{ t \in \mathcal I^*}  $ belong to $(s, e]\cap \{ \eta_k\}_{k=1}^K$; and
\\
{\bf ii)} \Cref{eq:coro wbsrp 1d re1}  and  $( \alpha^* , \beta^*  ]  \subseteq (s, e]$ imply that
	\[
	\min \{e-\eta_k,\eta_k-s\}  >  cT     \ge  \Upsilon_{\max }.
	\]
	As discussed in the argument before {\bf Step 1}, this implies that
	$\eta_k $ must be an undetected change-point of  $\{f_t^*  ( u_{m^*})\}_{ t \in \mathcal I^*}  $.
 \end{proof}
  
\section{Deviation bounds related to  kernels}
In this section, we deal with all the large probability events occurred in the proof of \Cref{theorem:FSBS}.
Recall that $ F_{t, h} (x) =  \frac{ \frac{1}{n} \sum_{i=1}^{n} y_{t, i} K_h(x-x_{t, i }) }{\hat{p}(x)}$,   and
 \begin{align*} 
 \widetilde F_{t, h}^{(s,e] } (x) =     \sqrt { \frac{e-t}{ (e-s )(t-s)}} \sum_{ l =s+1}^{ t} F_{ l , h} (x)  -
   \sqrt { \frac{t-s}{ (e-s )(e-t)}} \sum_{l =t+1}^{ e} F_{l , h} (x). 
 \end{align*} 
By assumption \ref{Kernel-as}, we have $\max_{l = 1}^{q}\|K^l\|_{\infty}=\max_{l = 1}^{q}\|K \|_{\infty}^l  < C_K$, where $C_K > 0$ is an absolute constant. Moreover, assumption \ref{assume: model assumption}b implies $\vert f^{*}_{t}(x)\vert<C_f$ for any $x\in{{[0,1]^d}},t\in{1,...,T.}$
 
\begin{proposition}  \label{prop:deviation of means}  Suppose that  \Cref{assume: model assumption} and \ref{Kernel-as}  hold,
 that $  \rho nh^d \ge  \log(T)  $  and that $T\ge 3.$
Then for any $x\in [0,1]^d  $
{\small{
\begin{align} 
\label{eq:deviation of means 1}
 &\mathbb{P}  \bigg (\max_{k=\rho}^ { T-\tilde{r}   } \bigg| \frac{1}{\sqrt  k} \sum_{t=\tilde{r}+1}^{\tilde{r}+ k}  \big( F_{t,h }  (x) -f_t^*(x)    \big)   \bigg| \ge \frac{2}{\tilde{c}}z  \sqrt {   \frac{1}{nh^d }  
   + 1  }  +  \frac{\tilde{C}_1}{\tilde{c}}\Big(\sqrt { \frac{ \log(T) }{   nh^d   } }\Big)    +\frac{\tilde{C}}{\tilde{c}}\sqrt{T}h^r+\frac{\bar{C}C_f}{\tilde{c}}\sqrt{T} \Big(\frac{\log(nT)}{nT}\Big)^{\frac{2r}{2r+d}}\bigg)\nonumber
   \\
   &\le 2C_1\frac{\log(T)}{z^q} +T^{-5}+\frac{5}{Tn};
   \\\label{eq:deviation of means 2}
 & \mathbb{P}   \bigg(  \max_{ k=\rho}^{ \tilde{r}    } \bigg|  \frac{1}{\sqrt  k} \sum_{t=\tilde{r}-k+1}^{\tilde{r}}  \big( F_{t,h } (x)-f_t^*(x)   \big)   \bigg|  \ge \frac{2}{\tilde{c}}z  \sqrt {   \frac{1}{nh^d }  
   + 1  }  + \frac{\tilde{C}_1}{\tilde{c}}\Big(\sqrt { \frac{ \log(T) }{   nh^d   } }\Big)    +\frac{\tilde{C}}{\tilde{c}}\sqrt{T}h^r+\frac{\bar{C}C_f}{\tilde{c}}\sqrt{T} \Big(\frac{\log(nT)}{nT}\Big)^{\frac{2r}{2r+d}}\bigg) \nonumber
\\   
&\le 2C_1\frac{\log(T)}{z^q} +T^{-5}+\frac{5}{Tn}.
\end{align} 
}}

\end{proposition}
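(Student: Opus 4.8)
The plan is to establish the forward bound \eqref{eq:deviation of means 1}; the backward bound \eqref{eq:deviation of means 2} follows by an identical argument after reindexing the summation. I would first condition on the high-probability event, guaranteed by \Cref{lemma:KDE}, on which $\|\hat p-u\|_\infty \lesssim ((\log(nT))/(nT))^{2r/(2r+d)}$ and hence $\hat p(x)\ge \tilde c/2$; outside this event (probability at most $1/(nT)$, absorbed into the $5/(Tn)$ term) there is nothing to prove. On this event $1/\hat p(x)\le 2/\tilde c$, which accounts for the $1/\tilde c$ prefactors in the statement.

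The core is a decomposition of $F_{t,h}(x)-f_t^*(x)$ that isolates the four terms on the right-hand side. Substituting $y_{t,i}=f_t^*(x_{t,i})+\xi_t(x_{t,i})+\delta_{t,i}$ and adding and subtracting $f_t^*(x)u(x)$, I would write
\begin{equation*}
F_{t,h}(x)-f_t^*(x)=\frac{S_t^{f}(x)+S_t^{\xi}(x)+S_t^{\delta}(x)}{\hat p(x)}+f_t^*(x)\frac{u(x)-\hat p(x)}{\hat p(x)},
\end{equation*}
where $S_t^{f}(x)=\frac1n\sum_i f_t^*(x_{t,i})K_h(x-x_{t,i})-f_t^*(x)u(x)$ and $S_t^{\xi}(x),S_t^{\delta}(x)$ are the analogous kernel averages of the (centred) functional noise and measurement error, any nonzero noise mean being absorbed into $f_t^*$. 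The deterministic bias is extracted from $S_t^f$ via $\mathbb E[S_t^f(x)]=\int f_t^*(z)K_h(x-z)u(z)\dint z-f_t^*(x)u(x)$, which is $O(h^r)$ uniformly by \Cref{Kernel-as}\textbf{a} since $f_t^*u\in\mathcal H^{r}$; summing $k\le T$ such terms and dividing by $\sqrt k$ produces the deterministic $\sqrt T h^r$ contribution. The last term is bounded pointwise by $2C_f\|\hat p-u\|_\infty/\tilde c$, and the same $\frac1{\sqrt k}\sum_{t}(\cdot)\le\sqrt T(\cdot)$ bookkeeping yields the $\sqrt T((\log nT)/(nT))^{2r/(2r+d)}$ contribution, valid on the \Cref{lemma:KDE} event.

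It remains to control the centred stochastic parts $\frac1{\sqrt k}\sum_{t=\tilde r+1}^{\tilde r+k}(S_t^\xi(x)+\widetilde S_t^f(x)+S_t^\delta(x))$, with $\widetilde S_t^f=S_t^f-\mathbb E S_t^f$, uniformly over $\rho\le k\le T-\tilde r$. I would treat the time-$t$ summands as a single physical-dependence process driven by the innovations $(\varepsilon_t,\epsilon_t,\{x_{t,i}\}_{i=1}^n)$: the grids are i.i.d.\ across $t$ and independent of the noise, so the functional-dependence coefficients of the composite process are inherited from those of $\xi_t$ and $\delta_t$ through \Cref{assume: model assumption}\textbf{c}--\textbf{d}, and the moment conditions give a finite $q$-th moment. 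The variance scale is the crucial point: the cross-sectionally independent fluctuations average out at rate $1/(nh^d)$, whereas the functional-noise component $\int\xi_t(z)K_h(x-z)u(z)\dint z$ common to all $i$ and the possibly perfectly-correlated measurement error do not shrink with $n$ and contribute the $+1$, together giving the $\sqrt{1/(nh^d)+1}$ scale. To these partial sums I would apply a maximal Fuk--Nagaev-type inequality for functionals of i.i.d.\ sequences under physical dependence, splitting into a sub-Gaussian part (source of $T^{-5}$, where pushing the independent component to the Gaussian level $\sqrt{\log T}$ also produces the $\sqrt{\log(T)/(nh^d)}$ term) and a heavy-tailed polynomial part of order $\log(T)/z^q$ (source of $2C_1\log(T)/z^q$); the $\log T$ factor is the price of the maximisation over the $O(T)$ values of $k$, obtained by dyadic peeling over the scale of $k$.

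The step I expect to be the main obstacle is precisely this maximal Nagaev inequality for the composite process: one must simultaneously (i) verify that the functional-dependence coefficients of $t\mapsto \frac1n\sum_i[\cdots]K_h(x-x_{t,i})$ are summable with the right power of $t$, inheriting the short-range-dependence budget from $C_{\xi,2}$ and $C_{\delta,2}$; (ii) keep the variance proxy sharp enough to separate the $1/(nh^d)$ and $+1$ scales rather than collapsing them; and (iii) carry the heavy-tail exponent $q$ through both the single-$k$ bound and the maximisation so that the polynomial tail is exactly $z^{-q}$ up to the $\log T$ factor. By comparison, the bias control, the $\hat p$-for-$u$ replacement and the bandwidth bookkeeping are routine.
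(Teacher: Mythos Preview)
Your decomposition and the handling of the two deterministic contributions (the kernel bias $\sqrt T h^r$ and the density-replacement term $\sqrt T(\log(nT)/(nT))^{2r/(2r+d)}$) match the paper's proof exactly, including the conditioning on the \Cref{lemma:KDE} event to replace $1/\hat p(x)$ by a constant.

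The substantive difference is in the stochastic part. You propose to bundle $\widetilde S_t^f$, $S_t^\xi$ and $S_t^\delta$ into a single physical-dependence process and hit it with a maximal Fuk--Nagaev inequality that simultaneously delivers a sub-Gaussian piece (for $T^{-5}$ and the $\sqrt{\log(T)/(nh^d)}$ term) and a polynomial piece (for $\log(T)/z^q$). The paper instead keeps these pieces separate and uses more elementary tools for each: the centred $f^*$-term $\widetilde S_t^f$ is an \emph{independent} array in $t$ (the grids $\{x_{t,i}\}$ are i.i.d.\ across $t$ and $f_t^*$ is deterministic), so a direct Bernstein inequality plus a union bound over $k$ already gives the $\sqrt{\log(T)/(nh^d)}$ threshold with failure probability $T^{-5}$; for the genuinely dependent pieces $S_t^\xi$ and $S_t^\delta$ the paper does not use Fuk--Nagaev at all, but rather a $q$th-moment bound for $\max_{k\le N}|S_k|$ under physical dependence (via \citet{liu2013probability}, Theorem~1, combined with the moment calculations in Lemmas~\ref{lemma:momnents of kernel and xi}--\ref{lemma:momnents of kernel and delta}), then a H\'ajek--R\'enyi-type inequality (Theorem~B.2 of \citet{kirch2006resampling}) to pass to $\max_k|S_k|/\sqrt k$, and finally Markov's inequality to extract the $\log(T)/z^q$ tail. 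Your route can be made to work, but it requires a sharper off-the-shelf result; the paper's route is less elegant but relies only on Bernstein, a moment inequality, and Markov.
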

\begin{proof}The proofs of \Cref{eq:deviation of means 1} and \Cref{eq:deviation of means 2} are the same. So only the proof of \Cref{eq:deviation of means 1} is presented.  
We define the events $E_1=\Big\{\vert\vert \hat{p}-u\vert\vert_\infty \le \bar{C}\Big(\Big(\frac{\log(Tn)}{Tn}\Big)^{\frac{2r}{2r+d}}\Big)\Big\}$ and $E_2=\Big\{\hat{p}\ge \bar{c},\ \bar{c}=\inf_{x}u(x)-\bar{C}\Big(\frac{\log(Tn)}{Tn}\Big)^{\frac{2r}{2r+d}}\Big\}$. 
Using {\Cref{lemma:KDE}}, especifically by equation \eqref{Eq-7}, we have that $P(E_1)\ge1-\frac{1}{nT}$. Then, we observe that in event $E_1$, for $x\in[0,1]^d$
$$\inf_{s}u(s)-\hat{p}(x)\le u(x)-\hat{p}(x)\le|u(x)-\hat{p}(x)|\le \bar{C}\Big(\frac{\log(Tn)}{Tn}\Big)^{\frac{2r}{2r+d}}$$
which implies $E1\subseteq{E_2}$. Therefore, $P(E_2^c)\le \frac{1}{nT}$.
\\
Now, for any $x$, observe that, by definition of $F_{t,h}$ and triangle inequality
\begin{align}
\label{FirstB}
 I=\max_{k=\rho}^{ T-\tilde{r}   }  & \frac{1}{\sqrt k } \bigg| \sum_{ t=\tilde{r}+1  } ^{\tilde{r}+k}    F_{t, h} (x)    - \sum_{ t=\tilde{r}+1  } ^{\tilde{r}+k}    f_t^* (x)   \bigg|    \nonumber
\\
\le  \max_{k=\rho}^{ T-\tilde{r}   }   &  \bigg|  \frac{1}{ \sqrt{k} }\sum_{ t=\tilde{r}+1  } ^{\tilde{r}+k}  \frac{1}{n}   \sum_{i=1 }^{n}   \bigg( \frac{ f^*  _t   (x_{t,i })  K_h(x-x_{t,i })}{\hat{p}(x)}   -       f_t^{*}(x)  \bigg)   \bigg| \nonumber
\\
 +\max_{k=\rho}^{ T-\tilde{r}   }  
&   \bigg|    \frac{1}{\sqrt k }\sum_{ t=\tilde{r}+1  } ^{\tilde{r}+k }  \frac{1}{n}   \sum_{i=1 }^{n}    \frac{ \xi_t (x_{t,i } )K_h(x-x_{t,i })}{\hat{p}(x)}  \bigg|  
\\
+ \max_{k=\rho}^{ T-\tilde{r}   }   &    \bigg|   \frac{1}{ \sqrt k  }\sum_{ t=\tilde{r}+1  } ^{\tilde{r}+k }  \frac{1}{n}   \sum_{i=1 }^{n} \frac{\delta_{t, i } K_h(x-x_{t,i })}{\hat{p}(x)}   \bigg| \nonumber
\\
&\hspace{-1.2cm}= I_1+I_2+I_3.   \nonumber
\end{align} 
In the following, we will show that $I_1\le I_{1,1}+I_{1,2}+I_{1,3}$, and that
\begin{enumerate}
  \item $\mathbb{P}  \Big(I_{1,1}\ge\frac{\tilde{C}_1}{\tilde{c}}\Big( \sqrt { \frac{ \log(T) }{   nh^d   } }\Big)\Big)\le \frac{1}{T^5}+\frac{1}{Tn}$,
  \item $\mathbb{P}  \Big(I_{1,2}\ge \frac{\tilde{C}}{\tilde{c}}\sqrt{T}h^r\Big)\le \frac{1}{Tn}$,
  \item $\mathbb{P}  \Big( I_{1,3}\ge\frac{\bar{C}C_f}{\tilde{c}}\sqrt{T} \Big(\frac{\log(nT)}{nT}\Big)^{\frac{2r}{2r+d}}\Big)\le \frac{1}{Tn}$,
  \item $\mathbb{P}  \Big(I_2\ge \frac{1}{\tilde{c}}z\sqrt{\frac{1}{nh^d}+1}\Big)\le \frac{C_1\log T}{z^q}+\frac{1}{Tn}$,
  \item $\mathbb{P}  \Big(I_3\ge \frac{1}{\tilde{c}}z\sqrt{\frac{1}{nh^d}+1}\Big)\le \frac{C_1\log T}{z^q}+\frac{1}{Tn}$,
\end{enumerate}
in order to conclude that,
\begin{align*}
    & \mathbb P  \bigg (I \ge 2z  \sqrt {   \frac{1}{nh^d }  
   + 1  }  +  \tilde{C}_1\Big(\sqrt { \frac{ \log(T) }{   nh^d   } }\Big)    +\frac{\tilde{C}}{\tilde{c}}\sqrt{T}h^r+\frac{\bar{C}C_f}{\tilde{c}}\sqrt{T} \Big(\frac{\log(nT)}{nT}\Big)^{\frac{2r}{2r+d}}\bigg)\\
   \le& \mathbb P\Big(I_{1,1}\ge\tilde{C}_1\Big( \sqrt { \frac{ \log(T) }{   nh^d   } }\Big)\Big)+
  \mathbb  P\Big(I_{1,2}\ge \frac{\tilde{C}}{\tilde{c}}\sqrt{T}h^r\Big)+
  \mathbb  P\Big( I_{1,3}\ge\frac{\bar{C}C_f}{\tilde{c}}\sqrt{T} \Big(\frac{\log(nT)}{nT}\Big)^{\frac{2r}{2r+d}}\Big)
   \\
   +&
   \mathbb P\Big(I_2\ge z\sqrt{\frac{1}{nh^d}+1}\Big)+
  \mathbb  P\Big(I_3\ge z\sqrt{\frac{1}{nh^d}+1}\Big)
   \\
   \le&
    2C_1\frac{\log(T)}{z^q} +T^{-5}+\frac{5}{Tn}.
\end{align*}
\
\\
{\bf Step 1.} The analysis for $I_1$ is done. We observe that, 
\begin{align*}
&\max_{k=\rho}^{ T-\tilde{r}   }  \bigg|  \frac{1}{ \sqrt{k} }\sum_{ t=\tilde{r}+1  } ^{\tilde{r}+k}  \frac{1}{n}   \sum_{i=1 }^{n}   \bigg( \frac{ f^*  _t   (x_{t,i })  K_h(x-x_{t,i })}{\hat{p}(x)}   -       f_t^{*}(x)  \bigg)   \bigg| 
\\
& \le \max_{k=\rho}^{ T-\tilde{r}   } \bigg|  \frac{1}{ \sqrt{k} }\sum_{ t=\tilde{r}+1  } ^{\tilde{r}+k}  \frac{1}{n}   \sum_{i=1 }^{n}   \bigg( \frac{ f^*  _t   (x_{t,i })  K_h(x-x_{t,i })}{\hat{p}(x)}   -       \frac{\int f_t^* (z)  K_h(x-z) d\mu(z)}{\hat{p}(x)}  \bigg)   \bigg| 
\\
&+ \max_{k=\rho}^{ T-\tilde{r}   } \bigg|  \frac{1}{ \sqrt{k} }\sum_{ t=\tilde{r}+1  } ^{\tilde{r}+k}  \frac{1}{n}   \sum_{i=1 }^{n}   \bigg(  \frac{\int f_t^* (z)  K_h(x-z) d\mu(z)}{\hat{p}(x)}  -    f_t^*(x)    \bigg)   \bigg| =I_{1,1}+\tilde{I}_1.
\end{align*}
{\bf Step 1.1} The analysis for $I_{1,1}$ is done. We note that the random variables 
$ \{    f^* _t   (x_{t,i })  K_h(x-x_{t,i })     \} _{1\le i \le n_t,1\le t \le N}$ are independent   distributed  with mean $\int f_t^* (z)  K_h( x -z ) d\mu(z)   $ and 
\begin{align*}
 \mathrm{Var} \big  (   f^* _t   (x_{t,i })  K_h(x-x_{t,i })    \big ) 
\le&  E \big\{  ( f^* _t ) ^2    (x_{t,i })  K_h^2 (x-x_{t,i })    \big\}  
\\
=&\int_{[0,1]^d} ( f^*_t)^2(z) \frac{1}{h^{2d} } K^2 \big(\frac{x-z}{h} \big)d \mu(z)
\\
\le & \frac{C_f^2}{h^d} \int_{[0,1]^d}  \frac{1}{h^d }K^2  \big(\frac{x-z}{h} \big)d\mu(z)
\\
= &  \frac{C_f^2}{h^d} \int_{[0,1]^d}   K^2  \big(u \big)d\mu(u)    
 < \frac{C_f^2 C_K^2 }{h ^d  } .
 \end{align*}
Since $ |f^*_t    (x_{t,i })  K_h(x -x_{t,i }  ) | \le C_f C_K h^{-d}$, by Bernstein inequality \cite{vershynin2018high}, we have that 
$$  \mathbb{P}   \bigg(    \bigg |  \frac{1}{   k  n } \sum_{ t=r+1  } ^{r+ k }     \sum_{i=1 }^n   f^*  _t   (x_{t,i })  K_h(x-x_{t,i })   - \int f_t^* (z)  K_h( x -z ) d\mu(z)     \bigg|  
\ge \tilde{C}_1 \bigg\{  \sqrt { \frac{ \log(T) }{ k  nh^d   } }  +  \frac{ \log(T )  }{    k nh^d   }    \bigg\}     
 \bigg)   \le   T    ^{-6} . $$
 Since $k nh^d  \ge  \log(T)  $ if $k \ge \rho$, with probability at most $ T^{-5}$, it holds that
 $$ \max_{k=\rho}^{ T-\tilde{r}   }  \bigg|  \frac{1}{ \sqrt{k }  n } \sum_{ t=r+1  } ^{r+k }    \sum_{i=1 }^n   \bigg(  f^*  _t   (x_{t,i })  K_h(x-x_{t,i })   -       \int f_t^* (z)  K_h(x-z) d\mu(z)  \bigg)   \bigg|  \ge \tilde{C}_1 \sqrt { \frac{ \log(T) }{   nh^d   } }.$$
 Therefore, using that $P(E^c_2)\le\frac{1}{Tn},$ we conclude 
 $$ \max_{k=\rho}^{ T-\tilde{r}   }  \bigg|  \frac{1}{ \sqrt{k }   } \sum_{ t=\tilde{r}+1  } ^{\tilde{r}+k } \frac{1}{n_t}   \sum_{i=1 }^{n_t}   \bigg(  \frac{f^*  _t   (x_{t,i })  K_h(x-x_{t,i }) }{\hat{p}(x)}  -       \frac{\int f_t^* (z)  K_h(x-z) d\mu(z)}{\hat{p}(x)}  \bigg)   \bigg|  \ge \frac{\tilde{C}_1}{\tilde{c}} \sqrt { \frac{ \log(T) }{   nh^d   } }$$
 with probability at most $T^{-5}+\frac{1}{nT}.$
\
\\
\\
{\bf Step 1.2} The analysis for $I_{1,2}$ and $I_{1,3}$ is done.
We observe that
\begin{align}
\tilde{I}_1=&\max_{k=\rho}^{ T-\tilde{r}   } \bigg|  \frac{1}{ \sqrt{k} }\sum_{ t=\tilde{r}+1  } ^{\tilde{r}+k}  \frac{1}{n}   \sum_{i=1 }^{n}   \bigg(  \frac{\int f_t^* (z)  K_h(x-z) d\mu(z)}{\hat{p}(x)}  -    f_t^*(x)    \bigg)   \bigg|  \nonumber
\\\label{Eq-1}
\le & \max_{k=\rho}^{ T-\tilde{r}   }\bigg|  \frac{1}{ \sqrt{k} }\sum_{ t=\tilde{r}+1  } ^{\tilde{r}+k}  \frac{1}{n}   \sum_{i=1 }^{n}   \bigg(  \frac{\int f_t^* (z)  K_h(x-z) d\mu(z)}{\hat{p}(x)}  -  \frac{f_t^*(x)u(x)}{\hat{p}(x)}  \bigg)   \bigg| 
\\\label{Eq-2}
+&\max_{k=\rho}^{ T-\tilde{r}   }\bigg|  \frac{1}{ \sqrt{k} }\sum_{ t=\tilde{r}+1  } ^{\tilde{r}+k}  \frac{1}{n}   \sum_{i=1 }^{n}   \bigg(\frac{f_t^*(x)u(x)}{\hat{p}(x)}-f_t^*(x)   \bigg)   \bigg|=I_{1,2}+I_{1,3}.
\end{align}
Then, we observe that 
\begin{align*}
I_{1,2}=&\bigg|  \frac{1}{ \sqrt{k} }\sum_{ t=\tilde{r}+1  } ^{\tilde{r}+k}  \frac{1}{n}   \sum_{i=1 }^{n}   \bigg(  \int f_t^* (z)  K_h(x-z) d\mu(z)  -  f_t^*(x)u(x)  \bigg)   \bigg| 
\\
&\le \frac{1}{ \sqrt{k} }\sum_{ t=\tilde{r}+1  } ^{\tilde{r}+k}  \frac{1}{n}   \sum_{i=1 }^{n}    \bigg| \int f_t^* (z)  K_h(x-z) d\mu(z)  -  f_t^*(x)u(x) \bigg|    
\\
&\le  \frac{1}{ \sqrt{k} }\sum_{ t=\tilde{r}+1  } ^{\tilde{r}+k}  \frac{1}{n}   \sum_{i=1 }^{n}  \tilde{C} h^r 
\\
&= \frac{1}{ \sqrt{k} }\sum_{ t=\tilde{r}+1  } ^{\tilde{r}+k}   \tilde{C} h^r 
\\
&=\sqrt{k}\tilde{C} h^r 
\end{align*}
where the second inequality follows from assumption \ref{Kernel-as}. Therefore, using event $E_2$, we can bound \eqref{Eq-1} by $\frac{\tilde{C}}{\tilde{c}}\sqrt{T}h^{r}$ with probability at least $1-\frac{1}{nT}.$ Meanwhile, for \eqref{Eq-2} we have that,

\begin{align}
I_{1,3}=&\max_{k=\rho}^{ T-\tilde{r}   }\bigg|  \frac{1}{ \sqrt{k} }\sum_{ t=\tilde{r}+1  } ^{\tilde{r}+k}  \frac{1}{n}   \sum_{i=1 }^{n}   \bigg(\frac{f_t^*(x)u(x)}{\hat{p}(x)}-f_t^*(x)   \bigg)   \bigg| \nonumber
\\\label{Eq-3}
\le&\max_{k=\rho}^{ T-\tilde{r}   }  \frac{1}{ \sqrt{k} }\sum_{ t=\tilde{r}+1  } ^{\tilde{r}+k}  \frac{1}{n}   \sum_{i=1 }^{n}   \vert f_t^*(x)\vert \bigg|\frac{u(x)-\hat{p}(x)}{\hat{p}(x)}\bigg|.  
\end{align}
Then, since in the event $E_1$, it is satisfies that
$$\vert\vert \hat{p}-u\vert\vert_\infty \le \bar{C}\Big(\Big(\frac{\log(Tn)}{Tn}\Big)^{\frac{2r}{2r+d}}\Big),\ \text{and} \ \hat{p}\ge \bar{c};$$
 we have that equation \eqref{Eq-3}, is bounded by 
$$ \max_{k=\rho}^{ T-\tilde{r}   }  \frac{1}{ \sqrt{k} }\sum_{ t=\tilde{r}+1  } ^{\tilde{r}+k}  \frac{1}{n}   \sum_{i=1 }^{n} \frac{\bar{C}C_f}{\tilde{c}}\Big(\frac{\log(nT)}{nT}\Big)^{\frac{2r}{2r+d}}\le\frac{\bar{C}C_f}{\tilde{c}}\sqrt{T}\Big(\frac{\log(nT)}{nT}\Big)^{\frac{2r}{2r+d}}$$
with probability at least $1-\frac{1}{nT}.$
\
\\
\\
{\bf Step 2.}  The analysis for $I_2$ and $I_3$ is done. For $1 \le t \le T $, let  $$ Z_t = \frac{1}{n} \sum_{i=1}^{n}  \xi_ t ( x_{t,i} ) K_h(x- x_{t, i})  \quad \text{and} \quad W_t = \frac{1}{n } \sum_{i=1}^{n}  \delta_ {t, i}   K_h(x- x_{t, i}) .
 $$
By \Cref{prop:functional kernel deviation} and event $E_2$, it holds that 
 \begin{align*}
 \mathbb{P}   \bigg\{ \max_{k=\rho}^{ T-\tilde{r}   } \bigg|   \frac{1}{\sqrt k  }   \sum_{ t=\tilde{r}+1  } ^{\tilde{r}+k } \frac{Z_t}{\hat{p}(x)}  \bigg |   \ge \frac{1}{\tilde{c}} z \sqrt {   \frac{1}{nh^d }  
   + 1  }         \bigg\}    \le  \frac{C_1\log(T) }{z^q   }+\frac{1}{nT}
 \end{align*} 
 and 
 \begin{align*}
 \mathbb{P}   \bigg \{  \max_{k=\rho}^{ T-\tilde{r}   }  \bigg| \frac{1}{\sqrt k  }  \sum_{ t=\tilde{r}+1  } ^{\tilde{r}+ k } \frac{W_t}{\hat{p}(x)}  \bigg | \ge\frac{1}{\tilde{c}} z   \sqrt {   \frac{1}{nh^d }  
   + 1  }       \bigg\} \le \frac{C_1  \log(T)}{z ^q}+\frac{1}{nT}.
 \end{align*} 
  The desired result follows from putting the previous steps together. 
\end{proof}

\begin{corollary} \label{cor:deviation cusum}Suppose that 
$  \rho nh^d \ge  \log(T) $  and that $T\ge 3$.  Then for $z>0$
{\small{
\begin{align*} 
 &\mathbb{P}   \bigg \{ \max_{t=s+\rho+1}^{e-\rho }   \bigg|  \widetilde F_{t, h }^{(s,e] } (x)   -   \widetilde f_{t}^{(s,e]} (x)          \bigg| \ge  \frac{4}{\tilde{c}}z  \sqrt {   \frac{1}{nh^d }  
   + 1  }  + \frac{2\tilde{C}_1}{\tilde{c}}\Big(\sqrt { \frac{ \log(T) }{   nh^d   } }\Big)    +\frac{2\tilde{C}}{\tilde{c}}\sqrt{T}h^r+\frac{2\bar{C}C_f}{\tilde{c}}\sqrt{T} \Big(\frac{\log(nT)}{nT}\Big)^{\frac{2r}{2r+d}}\bigg\} 
\\
&\le  2T ^{-5} + \frac{4C_1 \log(T)}{z ^q }  +10\frac{1}{Tn}. 
\end{align*} 
}}
\end{corollary}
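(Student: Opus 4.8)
The plan is to reduce the corollary directly to the two partial-sum bounds established in \Cref{prop:deviation of means}, exploiting the fact that the CUSUM statistic is linear in its summands. Writing $D_l(x) = F_{l,h}(x) - f_l^*(x)$ for the pointwise deviation, subtracting the two definitions gives the exact identity
\begin{align*}
\widetilde{F}_{t,h}^{(s,e]}(x) - \widetilde{f}_t^{(s,e]}(x) = \sqrt{\tfrac{e-t}{(e-s)(t-s)}} \sum_{l=s+1}^{t} D_l(x) - \sqrt{\tfrac{t-s}{(e-s)(e-t)}} \sum_{l=t+1}^{e} D_l(x).
\end{align*}
First I would rewrite each term so that its random part is a normalized partial sum of the form controlled by \Cref{prop:deviation of means}. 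For the left term, pulling out $\sqrt{t-s}$ produces the factor $\sqrt{(e-t)/(e-s)}$ times $\tfrac{1}{\sqrt{t-s}}\sum_{l=s+1}^{t} D_l(x)$, and for the right term, pulling out $\sqrt{e-t}$ produces the factor $\sqrt{(t-s)/(e-s)}$ times $\tfrac{1}{\sqrt{e-t}}\sum_{l=t+1}^{e} D_l(x)$.

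The key observation is that both deterministic prefactors $\sqrt{(e-t)/(e-s)}$ and $\sqrt{(t-s)/(e-s)}$ lie in $[0,1]$ since $s < t \le e$, so by the triangle inequality the whole deviation is dominated by a \emph{forward} normalized partial sum anchored at $s$ plus a \emph{backward} normalized partial sum anchored at $e$. Concretely, I would invoke \Cref{eq:deviation of means 1} with $\tilde{r} = s$, identifying its running index $k$ with $t-s$, to bound $\max_t |\tfrac{1}{\sqrt{t-s}}\sum_{l=s+1}^{t} D_l(x)|$, and invoke \Cref{eq:deviation of means 2} with $\tilde{r} = e$, identifying $k$ with $e-t$, to bound $\max_t |\tfrac{1}{\sqrt{e-t}}\sum_{l=t+1}^{e} D_l(x)|$. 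One must verify that the restriction $s+\rho+1 \le t \le e-\rho$ keeps $k=t-s$ and $k=e-t$ inside the ranges over which the maxima in \Cref{prop:deviation of means} are taken; this is immediate, since $t \ge s+\rho+1$ forces $t-s > \rho$ and $t \le e-\rho$ forces $e-t \ge \rho$, while the upper limits $T-s$ and $e$ are respected because $(s,e] \subseteq (0,T]$.

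Finally I would combine the two estimates by a union bound. Each application of \Cref{prop:deviation of means} contributes the identical deterministic bound $\tfrac{2}{\tilde{c}}z\sqrt{n^{-1}h^{-d}+1} + \tfrac{\tilde{C}_1}{\tilde{c}}\sqrt{\log(T)/(nh^d)} + \tfrac{\tilde{C}}{\tilde{c}}\sqrt{T}h^r + \tfrac{\bar{C}C_f}{\tilde{c}}\sqrt{T}(\log(nT)/(nT))^{2r/(2r+d)}$ with failure probability $2C_1\log(T)/z^q + T^{-5} + 5/(Tn)$; since each prefactor is at most $1$, adding the forward and backward contributions doubles every deterministic term (so the leading coefficient becomes $4/\tilde{c}$) and adds the two failure probabilities to yield exactly $4C_1\log(T)/z^q + 2T^{-5} + 10/(Tn)$, matching the claim. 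No new probabilistic input is required beyond \Cref{prop:deviation of means}, so the only genuine care lies in the index bookkeeping—correctly matching the CUSUM time index $t$ to the partial-sum length $k$ in both the forward and backward directions—which is the one step where an off-by-one or range mismatch could slip in, and is therefore the step I would check most carefully.
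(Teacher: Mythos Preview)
Your proposal is correct and follows essentially the same approach as the paper: bound the two CUSUM prefactors by $1$ (equivalently, $\sqrt{(e-t)/((e-s)(t-s))}\le 1/\sqrt{t-s}$ and $\sqrt{(t-s)/((e-s)(e-t))}\le 1/\sqrt{e-t}$), split into a forward partial sum anchored at $s$ and a backward partial sum anchored at $e$, apply \Cref{eq:deviation of means 1} and \Cref{eq:deviation of means 2} from \Cref{prop:deviation of means} respectively, and combine via a union bound to double the deterministic threshold and add the failure probabilities. Your index-range check is slightly more explicit than the paper's, but otherwise the arguments coincide.
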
 
\begin{proof}
By definition of $\widetilde F_{t, h}^{(s,e]}$ and $\widetilde f_{t}^{(s,e]}$, we have that
\begin{align*}
    \bigg | \widetilde F_{t, h}^{(s,e] } (x) -\widetilde f_{t}^{(s,e]} (x)  \bigg|  
&\le
\bigg |\sqrt{\frac{e-t}{(e-s)(t-s)}}\sum_{l=s+1}^{t}(F_{l,h}(x)-f^*_l(x))\bigg |
\\
&+
\bigg |\sqrt{\frac{t-s}{(e-s)(e-t)}}\sum_{l=t+1}^{e}(F_{l,h}(x)-f^*_l(x))\bigg |.
\end{align*}
Then, we observe that, 
\begin{align*}
   \sqrt{\frac{e-t}{(e-s)(t-s)}}\le\sqrt{\frac{1}{t-s}} \ \text{if} \ s\le t, \ \text{and} \  \sqrt{\frac{t-s}{(e-s)(e-t)}}\le \sqrt{\frac{1}{e-t}} \ \text{if} \ t\le e.
\end{align*}
Therefore,
\begin{align*}
X=\max_{t=s+\rho+1}^{e-\rho }\bigg | \widetilde F_{t, h}^{(s,e] } (x) -\widetilde f_{t}^{(s,e]} (x)  \bigg|  
\le 
& \max_{t=s+\rho+1}^{e-\rho }  \bigg|   \sqrt { \frac{1}{  t-s }} \sum_{ l =s+1}^{ t}
\bigg (  F_{ l , h} (x)  -   \{ f_{ l }^{*} (x) \bigg )  \bigg| 
  \\
  +
  & 
 \max_{t=s+\rho+1}^{e-\rho } \bigg|   \sqrt { \frac{1}{  e- t }} \sum_{ l =t+1}^{ e }
\bigg (  F_{ l , h} (x)  -  f_{ l }^{*}(x) \bigg )  \bigg|
=X_1+X_2.
\end{align*} 
Finally, letting $\lambda=\frac{4}{\tilde{c}}z  \sqrt {   \frac{1}{nh^d }  
   + 1  }  + \frac{2\tilde{C}_1}{\tilde{c}}\Big(\sqrt { \frac{ \log(T) }{   nh^d   } }\Big)    +\frac{2\tilde{C}}{\tilde{c}}\sqrt{T}h^r+\frac{2\bar{C}C_f}{\tilde{c}}\sqrt{T} \Big(\frac{\log(nT)}{nT}\Big)^{\frac{2r}{2r+d}},$ we get that
\begin{align*}
\mathbb P(X\ge \lambda)\le& 
\mathbb P(X_1+X_2\ge \frac{\lambda}{2}+\frac{\lambda}{2})
\\
\le&
 \mathbb P(X_1\ge \frac{\lambda}{2})+\mathbb P(X_2\ge\frac{\lambda}{2})
\\
\le&   2T ^{-5} + \frac{4C_1 \log(T)}{z ^q }  +10\frac{1}{Tn},
\end{align*}   
where the last inequality follows from \Cref{prop:deviation of means}. 
\end{proof}

\newpage


\subsection{Additional Technical Results}\label{Heavy-Tail-cond}
The following lemmas provide lower bounds for 
  $$ Z_t = \frac{1}{n } \sum_{i=1}^{n}  \xi_ t ( x_{t,i} ) K_h(x- x_{t, i})  \quad \text{and} \quad W_t = \frac{1}{n} \sum_{i=1}^{n}  \delta_ {t, i}   K_h(x- x_{t, i}).
 $$ 
 They are a direct consequence of the temporal dependence and heavy-tailedness of the data considered in \Cref{assume: model assumption}.
 \begin{lemma}\label{prop:functional kernel deviation}
Let $\rho  \le T $ be such that  $\rho nh^d\ge\log(T)$ and $T\ge3.$
 Let $ N \in \mathbb Z^+$ be such that $ N \ge \rho $.
 \\
{\bf a.}  
 Suppose that for any $ q\ge  3 $ it holds that 
 \begin{align}\label{eq:delta conditon for function 2} \sum_{t=1} ^\infty  t^{ 1/2 - 1/q }  \mathbb{E}\big\{  \|   \xi_t -\xi_t^* \|_{\infty   } ^{q} \big\} ^{1/q }=\mathrm{O}(1).
\end{align}  
 Then 
 for any $z>0 $,
 \begin{align*}
 \mathbb{P}   \bigg\{ \max_{ k=\rho}^{ N   }  \bigg|  \bigg\{   \frac{1}{nh^d }  
   + 1 \bigg\} ^{ -1/2  }  \frac{1}{\sqrt k}   \sum_{t=1}^ k  Z_t  \bigg |   \ge z   \bigg\}    \le  \frac{C_1\log(T) }{z^q   }.
 \end{align*}
 {\bf b.}  
 Suppose that for some $ q\ge  3 $,
 \begin{align}\label{eq:delta conditon for vector 2} \sum_{t=1} ^\infty  t^{ 1/2 -1/q  }  \max_{i=1}^{  n}\big\{   \mathbb{E}  |    \delta_ {t,i}   - \delta_ {t,i} ^* |  ^{q} \big\} ^{1/q }< \mathrm{O}(1). 
\end{align}  Then 
 for any $w>0 $,
 \begin{align*}
 \mathbb{P}   \bigg\{ \max_{ k=\rho}^{  N }  \bigg|  \bigg\{   \frac{1}{nh^d }  
   + 1 \bigg\} ^{ -1/2  }  \frac{1}{\sqrt k }   \sum_{t=1}^ k  W _t  \bigg |   \ge w    \bigg\}   \le  \frac{C_1\log(T) }{w^q   }.
 \end{align*} 
  \end{lemma}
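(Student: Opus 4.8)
The plan is to view $Z_t$ (and likewise $W_t$) as a stationary Bernoulli shift driven by the augmented innovations $\tilde\varepsilon_s=(\varepsilon_s, x_{s,1},\dots,x_{s,n})$, which are i.i.d.\ across $s$ because the sampling grids are drawn independently of the functional noise, and then to combine a Nagaev-type maximal inequality for processes under physical dependence with a dyadic peeling of the index $k$. Write $\sigma=\{(nh^d)^{-1}+1\}^{1/2}$ and $\tilde Z_t=\sigma^{-1}Z_t$. Since $\mathbb E[\xi_t(z)]=0$, we have $\mathbb E Z_t=\int \mathbb E[\xi_t(z)]K_h(x-z)u(z)\dint z=0$, so $\tilde Z_t$ is centred; and replacing $\varepsilon_0$ by $\varepsilon_0'$ in $\tilde\varepsilon_0$ changes $\tilde Z_m$ only through $\xi_m\mapsto\xi_m^*$ (the grids at time $0$ do not enter $Z_m$ for $m\ge1$), so its physical dependence measure is exactly $\delta_q(m)=\sigma^{-1}\|Z_m-Z_m^*\|_q$.

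To control the moments and the dependence measure I would split $Z_t$ into its noise-conditional mean $\bar Z_t=\mathbb E[Z_t\mid\mathcal G_t]=\int\xi_t(z)K_h(x-z)u(z)\dint z$, which is $O(\|\xi_t\|_\infty)$ and furnishes the ``$+1$'' in $\sigma^2$, and the remainder $R_t=Z_t-\bar Z_t$, which given the noise is an average of $n$ independent bounded terms of conditional variance $O(\|\xi_t\|_\infty^2 h^{-d})$. Rosenthal's inequality applied conditionally gives $\|R_t\|_q\le C\|\xi_t\|_{\infty,q}\{(nh^d)^{-1/2}+(nh^d)^{-1+1/q}\}$, where $\|\cdot\|_{\infty,q}=(\mathbb E\|\cdot\|_\infty^q)^{1/q}$. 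The point that drives the whole estimate is that after dividing by $\sigma\asymp 1+(nh^d)^{-1/2}$ the second-moment ($q=2$) quantities normalise to $O(1)$ in every regime, whereas in the sparse regime $nh^d<1$ the $q$-th moment is only bounded by $(nh^d)^{-(1/2-1/q)}$. Carrying the same decomposition through $Z_m-Z_m^*$ yields $\delta_q(m)\le C\,\Gamma\,\|\xi_m-\xi_m^*\|_{\infty,q}$ with $\Gamma=1\vee(nh^d)^{-(1/2-1/q)}$, so that the dependence-adjusted norm $\nu_q=\sum_{m\ge0}(m+1)^{1/2-1/q}\delta_q(m)\le C\Gamma$ by Assumption \ref{assume: model assumption}\textbf{c}, and similarly $\nu_2=O(1)$.

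With these inputs I would invoke a Nagaev maximal inequality for physical-dependence sequences (in the spirit of \cite{wu2005nonlinear} and its refinements), which for a block of length $M$ reads $\mathbb P\{\max_{k\le M}|\sum_{t\le k}\tilde Z_t|\ge y\}\le C M\nu_q^q y^{-q}+C\exp(-c y^2/(M\nu_2^2))$; the weight $1/2-1/q$ appearing in Assumption \ref{assume: model assumption}\textbf{c} is precisely the one this maximal version requires. I then peel $\rho\le k\le N$ into the $O(\log T)$ dyadic blocks $k\asymp M$, on each of which $k^{-1/2}|\sum_{t\le k}\tilde Z_t|\le M^{-1/2}\max_{k\le 2M}|\sum_{t\le k}\tilde Z_t|$, and take $y=zM^{1/2}$. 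The polynomial contributions sum geometrically (here $q\ge3$ is used) to $C\nu_q^q z^{-q}\rho^{-(q/2-1)}$, and the hypothesis $\rho nh^d\ge\log T$ is exactly what closes this: since $\nu_q^q\lesssim 1\vee(nh^d)^{-(q/2-1)}$ and $\rho\ge1$, one gets $\nu_q^q\rho^{-(q/2-1)}\lesssim 1\vee(\rho nh^d)^{-(q/2-1)}\le 1$, leaving $Cz^{-q}$. The sub-Gaussian contributions accumulate to $O(\log T)\exp(-cz^2)$ over the dyadic scales, and $\exp(-cz^2)\le C_q z^{-q}$ absorbs them into the stated bound $C_1\log(T)z^{-q}$; thus the factor $\log T$ originates solely from the union bound over the dyadic scales.

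Part \textbf{b} for $W_t$ runs along identical lines with $\delta_{t,i}$ as the dependent input; the one genuinely different point, and where I expect the most care, is the unrestricted cross-sectional dependence among $\{\delta_{t,i}\}_i$, which forbids treating the sum over $i$ as independent. This is handled by the same mean/remainder split: the conditional mean is $\bar W_t=\bar\delta_t\,(u(x)+O(h^r))$ with $\bar\delta_t=n^{-1}\sum_i\delta_{t,i}$, whose coupling norm is $\le\max_i\|\delta_{t,i}-\delta_{t,i}^*\|_q$ by Minkowski with no factor of $n$, while in the remainder the independence across $i$ is supplied by the grids, which are independent of the $\delta$'s, so Rosenthal again applies conditionally on the noise; Assumption \ref{assume: model assumption}\textbf{d} then closes the argument. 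The principal obstacle throughout is obtaining $\sigma$-normalised moment and dependence-adjusted-norm bounds valid simultaneously in the sparse and dense regimes and reconciling the heavy-tailed $q$-th-moment blow-up for $nh^d<1$ with the cutoff $\rho nh^d\ge\log T$; once this is in place, the Nagaev inequality together with the dyadic union bound is routine.
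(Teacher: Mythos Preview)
Your approach is correct but follows a genuinely different route from the paper. The paper works entirely with moments: it bounds $\mathbb E|Z_t-Z_t^*|^q\lesssim\{(nh^d)^{-(q-1)}+1\}\,\mathbb E\|\xi_t-\xi_t^*\|_\infty^q$ by a direct multinomial expansion of $\big(n^{-1}\sum_i(\xi_t-\xi_t^*)(x_{t,i})K_h(x-x_{t,i})\big)^q$ (no mean/remainder split, no Rosenthal), feeds this into Theorem~1 of Liu, Xiao and Wu (2013) to obtain the maximal partial-sum moment $\mathbb E\max_{k\le J}|\sum_{t\le k}Z_t|^q$, rewrites the resulting bound as $\sum_{k\le J}\alpha_k$, and then applies a H\'ajek--R\'enyi inequality (Theorem~B.2 of Kirch, 2006) to pass directly to $\mathbb E\max_{k\le N}|k^{-1/2}\sum_{t\le k}Z_t|^q$; a single Markov step closes the argument. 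The factor $\log T$ there comes from the harmonic sum $\sum_{l\le N}l^{-1}$ inside the H\'ajek--R\'enyi bound, not from a union over dyadic scales. Your Nagaev-plus-peeling route is the probability-space counterpart and would in principle retain a sub-Gaussian tail in the light-tailed regime, which you then discard via $e^{-cz^2}\le C_qz^{-q}$; the paper's route is more self-contained, needing only off-the-shelf moment inequalities and no explicit blocking. Both arguments identify the role of $\rho nh^d\ge\log T$ identically: it is what makes the heavy-tail contribution $(nh^d)^{-(q/2-1)}$ absorbable by $\rho^{q/2-1}$ in the sparse regime. For part~\textbf{b} the paper again uses the multinomial expansion and handles the unrestricted cross-sectional dependence among the $\delta_{t,i}$ more bluntly than your mean/remainder split, simply bounding $\mathbb E_\delta\prod_j|\delta_{t,i_j}-\delta_{t,i_j}^*|^{\beta_j}\le\mathbb E_\delta\max_i|\delta_{t,i}-\delta_{t,i}^*|^q$.
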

  
\begin{proof}   The proof of part {\bf b} is similar and simpler than that of part {\bf a}. For conciseness, only the proof of {\bf a} is presented. 
\\
\\
By \Cref{corollary:moment of sum of Z} and  \Cref{eq:delta conditon for function 2},  for all $J\in \mathbb Z^+$,  it holds that 
 \begin{align*}
  \mathbb{E}\bigg\{  \max_{k=1 }^{ J} | \sum_{t=1}^ k Z_t  |^q  \bigg\}  ^{1/q }   
 \le   J ^{1/2}C        \bigg\{  \bigg ( \frac{1}{nh^d } \bigg)^{1/2 }     + 1 \bigg\}      
 +  J  ^{1/q } C''       \bigg\{  \bigg ( \frac{1}{nh^d } \bigg)  
 ^{ (q-1)/q   }+ 1 \bigg\} .
 \end{align*}
 \
 \\
 As a result there exists a constant $C_1$ such that 
\begin{align*}
   \mathbb{E}\bigg\{  \max_{k=1 }^{ J} | \sum_{t=1}^ k Z_t  |^q  \bigg\}   
  \le &
  C_1  J^{q /2}   \bigg\{  \bigg ( \frac{1}{nh^d } \bigg)^{1/2 }     + 1 \bigg\}      ^q + C_1  J     \bigg\{  \bigg ( \frac{1}{nh^d } \bigg)  
 ^{ (q-1)/q   }+ 1 \bigg\} ^q.
  \end{align*}
 We observe that 
 \begin{align}
 J^{q/2}=&\frac{q}{2}\int_{0}^{J}x^{q/2-1}dx
 \\
 =&\frac{q}{2}\Big(\int_{0}^{1}x^{q/2-1}dx+\int_{1}^{J}x^{q/2-1}dx\Big)
 \\
 \le&\frac{q}{2}\Big(1+\int_{1}^{J}x^{q/2-1}dx\Big)
 \\
 =&\frac{q}{2}\Big(1+\int_{1}^{2}x^{q/2-1}dx+...+\int_{J-1}^{J}x^{q/2-1}dx\Big)
 \\
 \le&\frac{q}{2}\Big(1+\int_{1}^{2}2^{q/2-1}dx+...+\int_{J-1}^{J}J^{q/2-1}dx\Big)
 \\
 =&\frac{q}{2}\sum_{k=1}^J k^{q/2-1},
 \end{align}
 which implies, there is a constant $C_2$ such that
 \begin{align*}
 & C_1  J^{q /2}   \bigg\{  \bigg ( \frac{1}{nh^d } \bigg)^{1/2 }     + 1 \bigg\}      ^q + C_1  J     \bigg\{  \bigg ( \frac{1}{nh^d } \bigg)  
 ^{ (q-1)/q   }+ 1 \bigg\} ^q
 \le C_2 \sum_{k=1}^J \alpha_k,
  \end{align*}
 where 
 $$\alpha_k = k^{q/2- 1} \bigg\{  \bigg ( \frac{1}{nh^d } \bigg)^{1/2 }     + 1 \bigg\}      ^q  +   \bigg\{  \bigg ( \frac{1}{nh^d } \bigg)  
 ^{ (q-1)/q   }+ 1 \bigg\} ^q  . $$
By theorem B.2 of \citeauthor{kirch2006resampling} (\citeyear{kirch2006resampling}),
\begin{align*}
 \mathbb{E} \bigg\{  \max_{k=1}^{ N }    \bigg|      \frac{1}{\sqrt  k }  \sum_{t=1}^ k   Z_t  \bigg | \bigg\} ^{ q } 
\le &   
4C_2 \sum_{ l=1}^N  l^{-q/2 } \alpha_l 
\\
= & 4C_2  \sum_{ l=1}^N \bigg( 
l^{ - 1} \bigg\{  \bigg ( \frac{1}{nh^d } \bigg)^{1/2 }     + 1 \bigg\}      ^q  +   l^{-q/2}  \bigg\{  \bigg ( \frac{1}{nh^d } \bigg)  
 ^{ (q-1)/q   }+ 1 \bigg\} ^q  \bigg) 
 \\
 \le 
 & C_3 \log(N) \bigg\{  \bigg ( \frac{1}{nh^d } \bigg)^{1/2 }     + 1 \bigg\} ^q   + C_3 N^{-q/2 + 1 }\bigg\{  \bigg ( \frac{1}{nh^d } \bigg)  
 ^{ (q-1)/q   }+ 1 \bigg\} ^q   
\end{align*} 
where the last inequality follows from the fact that $\int_1^N \frac{1}{x}=log(N)$ and that $\int_1^N x^\frac{-q}{2}=O(N^{-q/2+1}).$
Since 
$$N ^{1/2 -1/q }  \ge  \rho ^{1/2 -1/q }  \ge {(nh^d)^{ 1/2  - (q-1)/q }} , $$
it holds that, 
$
    \frac{1}{nh^d}\le N.
$
Moreover,
\begin{align*}
    N^{-q/2+1}\bigg\{  \bigg ( \frac{1}{nh^d } \bigg)  
 ^{ (q-1)/q   }+ 1 \bigg\}^q
 =&
 N^{-q/2+1}\bigg\{  \bigg ( \frac{1}{nh^d } \bigg)  
 ^{ (q-1)/q  +1/2-1/2 }+ 1 \bigg\}^q
 \\
 =&
 N^{-q/2+1}\bigg\{  \bigg ( \frac{1}{nh^d } \bigg)  
 ^{ (q-1)/q  -1/2 }\bigg ( \frac{1}{nh^d } \bigg)  
 ^{1/2 }+ 1 \bigg\}^q
 \\
 \le& N^{-q/2+1}\bigg\{ \bigg ( \frac{1}{nh^d } \bigg)  
 ^{1/2 }+ 1 \bigg\}^q
  \bigg\{\bigg ( \frac{1}{nh^d } \bigg)  
 ^{ (q-1)/q  -1/2 }+1 \bigg\}^q
 \\
 =&
  N^{-q/2+1}\bigg\{ \bigg ( \frac{1}{nh^d } \bigg)  
 ^{1/2 }+ 1 \bigg\}^q
  \bigg\{\bigg ( \frac{1}{nh^d } \bigg)  
 ^{1/2-1/q }+1 \bigg\}^q
 \\
 =&
  N^{-q/2+1}\bigg\{ \bigg ( \frac{1}{nh^d } \bigg)  
 ^{1/2 }+ 1 \bigg\}^q
  \bigg\{\bigg ( \frac{1}{nh^d } \bigg)  
 ^{ (q-2)/(2q)}+1 \bigg\}^q
 \\
 \le& C^{'}_3N^{-q/2+1}\bigg\{ \bigg ( \frac{1}{nh^d } \bigg)  
 ^{1/2 }+ 1 \bigg\}^q
  \bigg\{\bigg ( \frac{1}{nh^d } \bigg)  
 ^{ (q-2)/(2q)}\bigg\}^q
 \\
 =& C^{'}_3N^{-q/2+1}\bigg\{ \bigg ( \frac{1}{nh^d } \bigg)  
 ^{1/2 }+ 1 \bigg\}^q
  \bigg\{\bigg ( \frac{1}{nh^d } \bigg)  
 ^{ (q-2)/(2)}\bigg\}
 \\
 \le& C^{'}_3N^{-q/2+1}\bigg\{ \bigg ( \frac{1}{nh^d } \bigg)  
 ^{1/2 }+ 1 \bigg\}^q
  \bigg\{\bigg ( \frac{1}{nh^d } \bigg)  
 ^{ q/2-1}\bigg\}
 \\
 \le& C^{'}_3N^{-q/2+1}\bigg\{ \bigg ( \frac{1}{nh^d } \bigg)  
 ^{1/2 }+ 1 \bigg\}^q
  N  
 ^{ q/2-1}
 \\
 =&C^{'}_3\bigg\{ \bigg ( \frac{1}{nh^d } \bigg)
 ^{1/2 }+ 1 \bigg\}^q.
\end{align*}
It follows that,
$$  \mathbb{E} \bigg\{  \max_{k=1}^{ N }     \bigg|      \frac{1}{\sqrt  k }  \sum_{t=1}^ k   Z_t  \bigg | \bigg\} ^{ q } 
\le     C_4  \log(N) \bigg\{  \bigg ( \frac{1}{nh^d } \bigg)^{1/2 }     + 1 \bigg\}  ^q.    $$
   By Markov's inequality, for any $z>0$ and the assumption that $ T\ge N,$
   $$\mathbb{P}   \bigg\{ \max_{k=1}^{ N }   \bigg\{   \frac{1}{nh^d }  
   + 1 \bigg\} ^{ -1/2  }  \bigg|    \frac{1}{\sqrt k }   \sum_{t= 1  }^ k Z_t  \bigg |   \ge z   \bigg\}  \le  \frac{C_1\log(T) }{z^q   } .$$
   Since $N \ge \rho $, this directly implies that 
    \begin{align*}
 \mathbb{P}   \bigg\{ \max_{ k=\rho}^{ N   }  \bigg\{   \frac{1}{nh^d }  
   + 1 \bigg\} ^{ -1/2  }   \bigg|  \frac{1}{\sqrt k}   \sum_{t=1}^ k  Z_t  \bigg |   \ge z   \bigg\}    \le  \frac{C_1\log(T) }{z^q   }.
 \end{align*}
\end{proof}

 \begin{lemma}  \label{lemma:momnents of kernel and xi}
 Suppose \Cref{assume: model assumption} {\bf c} holds and $q\ge 2$. Then there exists   absolute constants $C>0$ so that
 \begin{align} 
 \label{eq:moments of Z 1}
 & \mathbb{E}  | Z_t  -  Z _t ^*  | ^q    \le     C    \mathbb{E}   \big\{   \|    \xi _ {t}    - \xi _ {t}  ^*\|_{ \infty  } ^{q}        \big\}  \bigg\{   \bigg( \frac{1}{nh^d }  \bigg) ^{q-1 }    + 1\bigg\}.
 \end{align}
 If in addition $\mathbb{E}\big\{\|\xi_ t\|_{\infty}^{q}\big\}=O(1)$,
 then there exists   absolute constants $C'$ such that
 \begin{align}
 \label{eq:moments of Z 2}
  \mathbb{E} |Z_t|^q        \le   C'   \bigg\{   \bigg ( \frac{1}{nh^d  } \bigg)  ^{q-1 }+ 1  \bigg\}.
 \end{align}
 \end{lemma}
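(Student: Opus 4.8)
The plan is to condition on the functional-noise randomness and reduce the bound to a moment inequality for a sum of conditionally i.i.d.\ variables over the sampling grid. Fix the evaluation point $x$ and the time index $t$, and condition on the $\sigma$-field generated by $(\mathcal G_t,\mathcal G_t^*)$. Under \Cref{assume: model assumption}, the grid $\{x_{t,i}\}_{i=1}^n$ is drawn i.i.d.\ from $u$ and is independent of the noise, so $\psi:=\xi_t-\xi_t^*$ is, conditionally, a fixed function with $\|\psi\|_\infty=\|\xi_t-\xi_t^*\|_\infty=:M$. Setting $Y_i=\psi(x_{t,i})K_h(x-x_{t,i})$, we have $Z_t-Z_t^*=n^{-1}\sum_{i=1}^n Y_i$, a normalized sum of conditionally i.i.d.\ terms, and it suffices to bound $\mathbb E[\,|Z_t-Z_t^*|^q\mid\psi\,]$ and then take expectation over the noise.

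First I would record the conditional moments of a single summand. Since $K$ is compactly supported with $\|K\|_\infty<\infty$ and $u$ is bounded above (being H\"older continuous on the compact set $[0,1]^d$), the change of variables $w=(x-z)/h$ gives $|Y_i|\le M\|K\|_\infty h^{-d}$ almost surely, $|\mathbb E[Y_i\mid\psi]|\le M\|u\|_\infty\int|K|=:MC_0$, $\mathbb E[Y_i^2\mid\psi]\le M^2\|u\|_\infty\|K\|_2^2\,h^{-d}$, and more generally $\mathbb E[\,|Y_i|^q\mid\psi\,]\le M^q\|u\|_\infty\|K\|_q^q\,h^{-d(q-1)}$. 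The key feature is that each additional moment order costs an extra factor $h^{-d}$, reflecting the concentration of the kernel.

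Next I would decompose $n^{-1}\sum_i Y_i=\mathbb E[Y_1\mid\psi]+n^{-1}\sum_i (Y_i-\mathbb E[Y_i\mid\psi])$ and apply Rosenthal's inequality to the centered, conditionally i.i.d.\ sum: for $q\ge 2$, $\mathbb E[\,|\sum_i(Y_i-\mathbb E Y_i)|^q\mid\psi\,]\le C_q\{(n\,\mathbb E[Y_1^2\mid\psi])^{q/2}+n\,\mathbb E[\,|Y_1|^q\mid\psi\,]\}$. Dividing by $n^q$ and substituting the moment bounds from the previous step produces, up to constants, the four contributions $M^q(nh^d)^{-q/2}$, $M^q(nh^d)^{-(q-1)}$, $M^q n^{1-q}$, and the mean term $M^qC_0^q$.

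The remaining step is to show these are all dominated by $M^q\{(nh^d)^{-(q-1)}+1\}$. The only nontrivial comparison is the variance term: because $q-1\ge q/2$ for $q\ge 2$, a dichotomy on whether $nh^d\ge 1$ shows $(nh^d)^{-q/2}\le(nh^d)^{-(q-1)}+1$ (when $nh^d\ge 1$ it is at most $1$, and when $nh^d<1$ it is at most $(nh^d)^{-(q-1)}$), while $n^{1-q}\le 1$ and $C_0^q$ are absorbed into the constant. This yields $\mathbb E[\,|Z_t-Z_t^*|^q\mid\psi\,]\le CM^q\{(nh^d)^{-(q-1)}+1\}$; taking expectation over the noise and recalling $M^q=\|\xi_t-\xi_t^*\|_\infty^q$ gives \eqref{eq:moments of Z 1}. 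The second inequality \eqref{eq:moments of Z 2} follows by the identical argument with $\psi=\xi_t$ and $M=\|\xi_t\|_\infty$, after taking expectation and using $\mathbb E\|\xi_t\|_\infty^q=O(1)$. The main obstacle is purely the bookkeeping of the $h^{-d}$ powers through Rosenthal's inequality and reconciling the $q/2$ exponent of the variance term with the advertised $q-1$ exponent; the dichotomy on $nh^d$ is precisely what makes this work uniformly for all $q\ge 2$.
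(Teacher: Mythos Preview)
Your argument is correct. The conditioning on $(\mathcal G_t,\mathcal G_t^*)$ reduces the problem to a conditional i.i.d.\ sum, the single-summand moment bounds via the change of variables $w=(x-z)/h$ are exactly right, Rosenthal's inequality produces the four terms you list, and the dichotomy on $nh^d$ correctly collapses $(nh^d)^{-q/2}$ into $(nh^d)^{-(q-1)}+1$.

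The paper takes a different, more combinatorial route: it expands $|n^{-1}\sum_i\psi(x_{t,i})K_h(x-x_{t,i})|^q$ directly via the multinomial theorem, groups the resulting products $\prod_j|\psi(x_{t,i_j})K_h(x-x_{t,i_j})|^{\beta_j}$ according to the sparsity $k=\|\beta\|_0$, bounds each such product by $h^{-d(q-k)}C_K^q\,\mathbb E\|\psi\|_\infty^q$ using the change of variables, and then counts that there are at most $\binom{n}{k}\binom{q-1}{k-1}q!$ terms with sparsity $k$. Summing over $k$ gives $\sum_{k=1}^q (nh^d)^{-(q-k)}$, which is bounded by $q\{(nh^d)^{-(q-1)}+1\}$. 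Your approach is cleaner and more standard, and it has the incidental advantage of working verbatim for non-integer $q\ge 2$, whereas the multinomial expansion in the paper implicitly treats $q$ as an integer. On the other hand, the paper's argument is entirely elementary and self-contained, avoiding the appeal to Rosenthal's inequality.
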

 \begin{proof}  
 The proof of the \Cref{eq:moments of Z 2} is simpler  and simpler than \Cref{eq:moments of Z 1}. So only the proof of \Cref{eq:moments of Z 1} is presented. Note that since
 $ \{x_t \}_{t=1}^T$ and $\{ \xi_t\}_{t=1}^T $ are independent, and that $\{ x_t\}_{t=1}^T $ are independent identically distributed, 
 $$Z_t^* =  \frac{1}{n } \sum_{i=1}^{n}  \xi_ t ^*  ( x_{t,i} ) K_h(x- x_{t, i}). $$
{\bf Step 1.} Note that, by the Newton’s binomial    
\begin{align*} 
\mathbb{E}  | Z_t   -Z_t^*   | ^q   
=& \mathbb{E} \bigg \{  \bigg|  \frac{1}{n } \sum_{i=1}^{n}    \{ \xi _t^* -  \xi _t \}   ( x_{t,i} )  K_h(x- x_{t, i})   \bigg| ^q  \bigg\}    
\\  
\le&\frac{1}{{n}^q  }   \mathbb{E} \bigg \{   \sum_{ \substack{ \beta _1 + \beta _2+ \ldots+\beta _{n}  =q \\\beta _1\ge 0, \ldots, \beta _{n}\ge 0 }  }   {q  \choose  \beta _1, \beta _2, \ldots, \beta_ {n}}\prod_{ j=1}^{n} \big |      \{ \xi _t^* -  \xi _t \}    ( x_{t , {i  } } )  K_h(x- x_{t, { i } })   \big |   ^{\beta _j } \bigg  \}   
\\
=&\frac{1}{n^q  }   \mathbb{E} \bigg \{   \sum_{k=1}^q  \sum_{ \substack{ \beta _1 + \beta _2+ \ldots+ \beta _{n}  =q 
\\  
\\
\beta= (\beta_1,\ldots, \beta_{n}), \|\beta \|_0 = k , \beta\ge 0 } }   {q  \choose   \beta_1, \beta _2, \ldots,\beta _{n} }\prod_{ j=1}^{n} \big |      \{ \xi _t^* -  \xi _t \}    ( x_{t , {i  } } )  K_h(x- x_{t, { i } })   \big |   ^{\beta _j }  \bigg  \} .
\end{align*}
\
\\
{\bf Step 2.}  
For a fixed $\beta =(\beta_1,\ldots, \beta_{n})$ such that $\beta_1+\ldots+\beta_{n}=q  $ and that
$\| \beta\|_0=k $, consider
$$  \mathbb{E} \bigg \{ \prod_{ j=1}^{n} \big |      \{ \xi _t^* -  \xi _t \}    ( x_{t , {i  } } )  K_h(x- x_{t, { i } })   \big |   ^{\beta _j }  \bigg  \}.$$
Without loss of generality, assume that $ \beta_1,\ldots, \beta_k $ are non-zero. Then it holds that 
\begin{align*}
 &   \mathbb{E} \bigg \{     \big  |     (\xi _t^* -  \xi _t ) ( x_{t , { 1 } } )  \big| ^{\beta_1 }        \big  |K_h  (x- x_{t, {   1} })   \big |^{\beta_1 }  \cdots 
     \big  |    (\xi _t^* -  \xi _t) ( x_{t , { k } } )  \big| ^{\beta_k }     \big  |K_h  (x- x_{t, {  k} })   \big |^{\beta_k  }   \bigg  \}  
    \\
     = &  \mathbb{E} _\xi  \bigg\{  \int  \big  |   (\xi _t^* -  \xi _t)( r )   \big| ^{\beta_1 }        \big  |K_h  (x-  r )   \big |^{\beta_1 } d \mu(r) \cdots 
     \int  \big  |    ( \xi _t^* -  \xi _t)( r  )   \big| ^{\beta_k }         \big  |K_h  (x-  r )   \big |^{\beta_k  } d \mu(r)    \bigg\} 
     \\
     = &  \mathbb{E} _\xi  \bigg\{  \int  \big  |    (\xi _t^* -  \xi _t)( x- sh )  \big| ^{\beta_1 }     \frac{ \big  |K   (  s )   \big |^{\beta_1 }   }{   h^{ d(\beta_1-1) } }   d \mu(s) \cdots 
    \int  \big  |     ( \xi _t^* -  \xi _t)( x- sh )  \big| ^{\beta_k }    \frac{ \big  |K   (  s )   \big |^{\beta_k }   }{   h^{ d(\beta_k-1) } }   d \mu(s)    \bigg\}  
    \\
   \le  &  h^{-d \sum_{j=1}^k (\beta_j - 1 )}  \mathbb{E} _\xi  \bigg\{   \|     \xi _t^* -  \xi _t   \|_{  \infty  } ^{\beta_1 }  C_K^{\beta_1 } \cdots 
  \|    \xi _t^* -  \xi _t  \|_{ \infty  } ^{\beta_k }  C_K^{\beta_k }      \bigg\}   
  \\
  \le &  h^{-d  (q-k) } C_K^q   \mathbb{E} _\xi  \bigg\{   \|     \xi _t^* -  \xi _t   \|_{\infty  } ^{ \sum_{j=1}^k \beta_k}        \bigg\}   
  \\
  \le &h^{-d  (q-k) } C_K^q   \mathbb{E} _\xi  \big\{   \|     \xi _t^* -  \xi _t   \|_{\infty  } ^{q}        \big\}   
  \end{align*}
where the third equality follows by using the change of variable $s=\frac{x-r}{h},$ the first inequality by assumption~\ref{Kernel-as}.
\
\\
\\
{\bf Step 3.} 
Let $k\in \{ 1,\ldots, q\}$ be fixed.    Note that $ {q  \choose   \beta_1, \beta _2, \ldots,\beta _{n} }  \le q! $.  Consider  set 
$$ \mathcal B_k = \bigg\{ \beta \in\mathbb  N ^{n} : \beta\ge 0 ,  \beta_1+\ldots+\beta_{n} =q , \vert \beta\vert_0= k    \bigg \}.$$
To bound the cardinality of  the set $\mathcal B_k $,  first note that  since $\vert \beta\vert_0=k $, there are   
$ { n \choose  k}  $ number of ways to choose the index of non-zero entries of $\beta$. 
\\
Suppose $\{ i_1, \ldots i_k\} $ are the chosen index such that $\beta_{i_1}\not = 0, \ldots, \beta_{i_k} \not = 0. $
Then the constrains $\beta_{i_1}> 0, \ldots, \beta_{i_1}> 0 $ and  $ \beta_{i_1}+\ldots+\beta_{i_k} =q   $ are equivalent to that of  diving $q$ balls into $k$ groups (without distinguishing each ball).
As a result  there are 
${ q-1 \choose  k-1} $ number of ways to choose the $\{ \beta_{i_1}, \ldots, \beta_{i_k}\}$ once the index  $\{ i_1, \ldots i_k\} $    are chosen. 
\\
\\
{ \bf Step 4.} 
Combining the previous three steps, it follows that for some constants  $C_q, C_1>0$ only depending on $q$, 
\begin{align*} 
\mathbb{E}  | Z_t   -Z_t^*   | ^q   
\le&
\frac{1}{n^q  }   \mathbb{E} \bigg \{   \sum_{k=1}^q  \sum_{ \substack{ \beta _1 + \beta _2+ \ldots+ \beta _{n}  =q \\  
\\
\beta= (\beta_1,\ldots, \beta_{n}), \vert\beta \vert_0 = k , \beta\ge 0 } }{q  \choose   \beta_1, \beta _2, \ldots,\beta _{n} }\prod_{ j=1}^{n} \big |      ( \xi _t^* -  \xi _t )    ( x_{t , {i  } } )  K_h(x- x_{t, { i } })   \big |   ^{\beta _j }  \bigg  \} 
\\
\le& 
\frac{1}{n^{q}}       \sum_{k=1}^q   { n \choose k   } {q- 1 \choose k-1  } q!  h^{-d(q-k) } C_K^q   \mathbb{E} _\xi  \big\{   \| \xi _t^* -  \xi _t   \|_{\infty  } ^{q}        \big\}   
\\
\le& 
\frac{1}{n^{q}}       \sum_{k=1}^q  n^{k}  C_q C_K^q h^{-d(q-k) }    \mathbb{E} _\xi  \big\{   \|     \xi _t^* -  \xi _t   \|_{\infty  } ^{q}        \big\}   
\\
\le& 
C_ 1  \mathbb{E} _\xi  \big\{   \|      \xi _t^* -  \xi _t   \|_{\infty  } ^{q}        \big\} \bigg\{   \bigg( \frac{1}{nh^d }  \bigg) ^{q-1 } + \bigg( \frac{1}{nh^d }  \bigg) ^{q-2} + \ldots + \bigg( \frac{1}{nh^d }  \bigg)   + 1\bigg\}  
\\
\le&  
C_1    \mathbb{E} _\xi  \big\{   \|   \xi _t^* -  \xi _t   \|_{ \infty  } ^{q}        \big\}  q  \bigg\{   \bigg( \frac{1}{nh^d }  \bigg) ^{q-1 }    + 1\bigg\}  ,   
\end{align*}
where the second inequality is satisfied by step 3 and that ${q  \choose   \beta_1, \beta _2, \ldots,\beta _{n} }\le q!$, while the third inequality is achieved by using that ${ n \choose k   } {q- 1 \choose k-1  }q!\le { n \choose k   } C_q\le n^kC_q.$ Moreover, given that $\frac{1}{n^q}n^kh^{-d(q-k)}=\Big(\frac{1}{nh^d}\Big)^{q-k}$ the fourth inequality is obtained. The last inequality holds because if $\frac{1}{nh^d } \le 1  $, then 
$\bigg\{   \bigg( \frac{1}{nh^d }  \bigg) ^{q-1 } + \ldots + \bigg( \frac{1}{nh^d }  \bigg)   + 1\bigg\}  \le  q ,$ 
and if 
$\frac{1}{nh^d } \ge 1  $, 
then 
$ \bigg\{   \bigg( \frac{1}{nh^d }  \bigg) ^{q-1 } + \ldots + \bigg( \frac{1}{nh^d }  \bigg)   + 1\bigg\}  \le  q\bigg( \frac{1}{nh^d }  \bigg) ^{q-1 }$.
\end{proof} 
\begin{lemma} 
\label{corollary:moment of sum of Z}    
Suppose  \Cref{assume: model assumption} {\bf c} holds.  Let $\rho  \le T $ be such that  $\rho nh^d\ge\log(T)$ and $T\ge3.$
 Let $ N \in \mathbb Z^+$ be such that $ N \ge \rho $. Then, it holds that 
$$ \bigg\{  \mathbb{E}  \max_{k=1}^{N } | \sum_{t=1}^ k  Z_t  |^q  \bigg\} ^{1/q}    \le N ^{1/2}C        \bigg\{  \bigg ( \frac{1}{nh^d  } \bigg) ^{1/2 }    + 1 \bigg\}      
 +  N ^{1/q } C'       \bigg\{  \bigg ( \frac{1}{nh^d  } \bigg)  ^{ (q-1) /q  }+ 1 \bigg\}.$$
\end{lemma}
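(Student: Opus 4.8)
The plan is to view $\{Z_t\}_{t\ge 1}$ as a centered, stationary, causal process driven by the innovations $\{(\varepsilon_s,\{x_{s,i}\}_{i=1}^n)\}_s$ and to control its maximal partial sum by a Rosenthal-type moment inequality under the physical-dependence framework of \cite{wu2005nonlinear}, feeding in the two moment bounds from \Cref{lemma:momnents of kernel and xi}. Note first that $\mathbb E[Z_t]=0$, since $\{x_{t,i}\}$ is independent of the centered functional noise $\{\xi_t\}$ with $\mathbb E[\xi_t]=0$; thus the $\sqrt N$-scaling in the statement is consistent with a mean-zero sum.

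Writing $\|X\|_p=\{\mathbb E|X|^p\}^{1/p}$, I would extract the following from \Cref{lemma:momnents of kernel and xi}. By \eqref{eq:moments of Z 2}, the marginal $L^q$ norm satisfies $\|Z_t\|_q \le C'\{(1/(nh^d))^{(q-1)/q}+1\}$, which fixes the scale of the heavy-tail term, while the same bound at exponent $2$ gives the variance scale $\|Z_t\|_2\le C'\{(1/(nh^d))^{1/2}+1\}$. By \eqref{eq:moments of Z 1}, the physical-dependence measure at lag $t$ factorises as
$$ \|Z_t-Z_t^*\|_q\le C\,\{\mathbb E\|\xi_t-\xi_t^*\|_\infty^q\}^{1/q}\Big\{(1/(nh^d))^{(q-1)/q}+1\Big\}, $$
with the analogous inequality at exponent $2$. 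The key structural point is that each dependence coefficient is the product of a common $(nh^d)$-scale and the functional-noise coupling coefficient $\{\mathbb E\|\xi_t-\xi_t^*\|_\infty^q\}^{1/q}$ — exactly the quantity whose weighted sum is controlled in \Cref{assume: model assumption}{\bf c}.

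With these inputs I would apply a maximal Rosenthal inequality for causal processes, of the form
$$ \Big\{\mathbb E\max_{k\le N}\big|\textstyle\sum_{t=1}^k Z_t\big|^q\Big\}^{1/q}\le C_q\big(N^{1/2}\,\Lambda_2+N^{1/q}\,\Lambda_q\big), $$
where the variance functional obeys $\Lambda_2\lesssim \|Z_t\|_2+\sum_{l\ge1}\|Z_l-Z_l^*\|_2$ and the heavy-tail functional involves the weighted sum $\Lambda_q\lesssim \|Z_t\|_q+\sum_{l\ge1}l^{1/2-1/q}\|Z_l-Z_l^*\|_q$. Substituting the bounds above factors $\{(1/(nh^d))^{1/2}+1\}$ out of $\Lambda_2$ and $\{(1/(nh^d))^{(q-1)/q}+1\}$ out of $\Lambda_q$, leaving the residual sums $\sum_l\{\mathbb E\|\xi_l-\xi_l^*\|_\infty^2\}^{1/2}$ and $\sum_l l^{1/2-1/q}\{\mathbb E\|\xi_l-\xi_l^*\|_\infty^q\}^{1/q}$. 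The latter is an absolute constant by \Cref{assume: model assumption}{\bf c}, and since $l^{1/2-1/q}\ge1$ for $l\ge1$ (as $q\ge2$) and $\|\cdot\|_2\le\|\cdot\|_q$, the former is dominated by the latter and hence also bounded. Absorbing all constants into $C$ and $C'$ then yields exactly the stated inequality.

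The main obstacle is the maximal Rosenthal inequality itself, and in particular obtaining the heavy-tail term under the \emph{weak} weight $l^{1/2-1/q}$ rather than a stronger one. A direct stationary-martingale approximation $Z_t=D_t+(W_{t-1}-W_t)$ would bound the corrector by $\|W_t\|_q\le\sum_{i\ge1}\sum_{l\ge i}\|Z_l-Z_l^*\|_q=\sum_l l\,\|Z_l-Z_l^*\|_q$, forcing the far stronger requirement $\sum_l l\,\|Z_l-Z_l^*\|_q<\infty$. Recovering the sharp weight demands the more delicate dyadic blocking of the partial-sum process together with a careful separation, inside the supremum over $k$, of the variance contribution (scaling like $\sqrt N$) from the large-jump contribution (scaling like $N^{1/q}$); this calibration is precisely what makes the summability condition in \Cref{assume: model assumption}{\bf c} the correct hypothesis.
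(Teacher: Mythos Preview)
Your proposal is correct and follows essentially the same route as the paper: the paper invokes Theorem~1 of \cite{liu2013probability} for precisely the maximal Rosenthal-type inequality you describe (with the weak weight $l^{1/2-1/q}$ on the heavy-tail functional), then feeds in the two bounds of \Cref{lemma:momnents of kernel and xi} and the summability in \Cref{assume: model assumption}{\bf c}. The ``main obstacle'' you flag is thus handled by direct citation rather than by reproving the dyadic-blocking argument; your handling of the variance functional $\Lambda_2$ via \Cref{lemma:momnents of kernel and xi} at exponent $2$ is in fact slightly cleaner than the paper's detour through $\Theta_{j,2}\le\Theta_{j,q}$ and the condition $N\ge 1/(nh^d)$.
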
  
\begin{proof}
We have that $q>2$ and $ \mathbb E|Z_1|<\infty$ by the use of \Cref{lemma:momnents of kernel and xi}. Then, making use of Theorem 1 of \citeauthor{liu2013probability} (\citeyear{liu2013probability}), we obtain that
\begin{align*}
\bigg\{ \mathbb{E}   \max_{k=1}^{N } | \sum_{t=1}^ k Z_t  |^q   \bigg\}  ^{1/q} 
\le&  
N ^{1/2}C_1 \bigg\{ \sum_{ j =1}^ N  \Theta_{j,2 }   +  \sum_{j  =N +1}^\infty \Theta_{j,q }     +  \{  \mathbb{E} |Z_1| ^2\}^{1/2 } \bigg\}
\\
+& 
N^{1/q } C_2 \bigg\{ \sum_{j  =1}^N  j^{1/2 -1/q  } \Theta_{j,q }    + \{  \mathbb{E} |Z_1|^q\}^{1/q } \bigg\},
\end{align*}
where $ \Theta_{j, q} =  \{  \mathbb{E}( |Z_j^* -Z_j  |^q) \} ^{1/q }$. Moreover, we observe that since 
$ \Theta_{j,2 }\le \Theta_{j,q}$ for any $ q\ge 2$,
it follows
\begin{align*}
\bigg\{ \mathbb{E}   \max_{k=1}^{N } | \sum_{t=1}^ k Z_t  |^q   \bigg\}  ^{1/q} 
\le&  
N ^{1/2}C_1 \bigg\{ \sum_{ j =1}^\infty \Theta_{j,q }    +  \{  \mathbb{E} |Z_1| ^2\}^{1/2 } \bigg\}
\\
+& 
N^{1/q } C_2 \bigg\{ \sum_{j  =1}^\infty  j^{1/2 -1/q  } \Theta_{j,q }    + \{  \mathbb{E} |Z_1|^q\}^{1/q } \bigg\},
\end{align*}
Next, by the first part of \Cref{lemma:momnents of kernel and xi},
$$ \Theta_{j, q} ^q  \le   C    \mathbb{E}    \big\{   \|    \xi _ {j}    - \xi _ {j }  ^*\|_{ \infty  } ^{q}        \big\}  \bigg\{   \bigg( \frac{1}{nh^d }  \bigg) ^{q-1 }    + 1\bigg\}.$$
even more, we have that $N\ge \frac{1}{nh^d}$, implies  that 
{\small{
\begin{align*}
\bigg\{  \mathbb{E}  \max_{k=1}^{N } | \sum_{t=1}^ k  Z_t  |^q  \bigg\}  ^{1/q} 
\le&
N ^{1/2}C^{'}_1 \bigg\{ \sum_{ j =1}^\infty C    \mathbb{E}    \big\{   \|    \xi _ {j}    - \xi _ {j }  ^*\|_{ \infty  } ^{q}        \big\}  \bigg\{   \bigg( \frac{1}{nh^d }  \bigg) ^{q-1 } \bigg\}^{1/q}    +  \{  \mathbb{E} |Z_1| ^2\}^{1/2 } \bigg\}
\\
+& 
N^{1/q } C^{'}_2 \bigg\{ \sum_{j  =1}^\infty  j^{1/2 -1/q  } C    \mathbb{E}    \big\{   \|    \xi _ {j}    - \xi _ {j }  ^*\|_{ \infty  } ^{q}        \big\}  \bigg\{   \bigg( \frac{1}{nh^d }  \bigg) ^{q-1 }    + 1\bigg\}^{1/q}    + \{  \mathbb{E} |Z_1|^q\}^{1/q } \bigg\}
\\
\le&
N ^{1/2}C^{''}_1 \bigg\{ \sum_{ j =1}^\infty C    \mathbb{E}    \big\{   \|    \xi _ {j}    - \xi _ {j }  ^*\|_{ \infty  } ^{q}        \big\} \bigg\{   \bigg( \frac{1}{nh^d }  \bigg) ^{1/2-1/q } \bigg\} \bigg\{   \bigg( \frac{1}{nh^d } \bigg) ^{1/2 }+1 \bigg\}    +  \{  \mathbb{E} |Z_1| ^2\}^{1/2 } \bigg\}
\\
+& 
N^{1/q } C^{'}_2 \bigg\{ \sum_{j  =1}^\infty  j^{1/2 -1/q  } C    \mathbb{E}    \big\{   \|    \xi _ {j}    - \xi _ {j }  ^*\|_{ \infty  } ^{q}        \big\}  \bigg\{   \bigg( \frac{1}{nh^d }  \bigg) ^{q-1 }    + 1\bigg\}^{1/q}    + \{  \mathbb{E} |Z_1|^q\}^{1/q } \bigg\}
\\
\le&
N ^{1/2}C^{''}_1 \bigg\{ \sum_{ j =1}^\infty C    \mathbb{E}    \big\{   \|    \xi _ {j}    - \xi _ {j }  ^*\|_{ \infty  } ^{q}        \big\} \bigg\{   \bigg( N  \bigg) ^{1/2-1/q } \bigg\} \bigg\{   \bigg( \frac{1}{nh^d } \bigg) ^{1/2 }+1 \bigg\}    +  \{  \mathbb{E} |Z_1| ^2\}^{1/2 } \bigg\}
\\
+& 
N^{1/q } C^{'}_2 \bigg\{ \sum_{j  =1}^\infty  j^{1/2 -1/q  } C    \mathbb{E}    \big\{   \|    \xi _ {j}    - \xi _ {j }  ^*\|_{ \infty  } ^{q}        \big\}  \bigg\{   \bigg( \frac{1}{nh^d }  \bigg) ^{q-1 }    + 1\bigg\}^{1/q}    + \{  \mathbb{E} |Z_1|^q\}^{1/q } \bigg\}.
\\
\end{align*} }}
 From  \Cref{assume: model assumption} {\bf c},
\begin{align*}
\bigg\{  \mathbb{E}   \max_{k=1}^{N } | \sum_{t=1}^ k  Z_t  |^q  \bigg\}  ^{1/q} 
\le&
N ^{1/2}C_1'''  \bigg\{  1 +\bigg\{   \bigg( \frac{1}{nh^d } \bigg) ^{1/2 }+1 \bigg\}+ \{  \mathbb{E} |Z_1| ^2\}^{1/2 } \bigg\}
\\ 
 +&  N ^{1/q } C^{''}_2 \bigg\{ 1 + \bigg\{   \bigg( \frac{1}{nh^d }  \bigg) ^{q-1 }    + 1\bigg\}^{1/q} + \{  \mathbb{E} | Z_1|^q\}^{1/q } \bigg\} .
\end{align*} 
By the second part of  \Cref{lemma:momnents of kernel and xi}, it holds that 
\begin{align*}
 \bigg\{   \mathbb{E}  \max_{k=1}^{N } | \sum_{t=1}^ k  Z_t  |^q   \bigg\}  ^{1/q} \le
   N ^{1/2}C_1''''  \bigg\{  1 +    \bigg\{  \bigg ( \frac{1}{nh ^d  } \bigg)   + 1 \bigg\}   ^{1/2 }   \bigg\}
 +  N ^{1/q } C_2 '''\bigg\{ 1  +     \bigg\{  \bigg ( \frac{1}{nh^d  } \bigg)  ^{q-1 }+ 1 \bigg\}   ^{1/q }   \bigg\} .
 \end{align*} 
This immediately implies the desired result. 
\end{proof}
\begin{lemma}  
\label{lemma:momnents of kernel and delta}
Suppose \Cref{assume: model assumption} holds. 
Then there exists absolute constants $C_{1}$ such that
\begin{align} \label{eq:moments of W 1}
&
\mathbb{E}  | W_t  -  W_t ^*  | ^q    \le     C_{1}    \max_{i=1}^{ n } \mathbb{E}    \big\{    |    \delta _ {t, i}   -\delta _{t,i  }^* |  ^{q}        \big\}  \bigg\{   \bigg( \frac{1}{nh^d }  \bigg) ^{q-1 }    + 1\bigg\} . 
\end{align}
If in addition 
$ \mathbb{E}    \big\{   \vert    \delta_ {t, i}   \vert_{q } ^{q}        \big\}  =O( 1) $ 
for all $1\le i \le n  $, then there exists   absolute constants $C'$ such that
\begin{align}\label{eq:moments of W 2}
&  
\mathbb{E}(| W_t  |^q)^{1/q}     \le   C' \bigg\{  \bigg ( \frac{1}{nh^d  } \bigg)  ^{q-1 }+ 1 \bigg\}   ^{1/q }  . 
\end{align} 
\end{lemma}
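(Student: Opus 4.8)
The plan is to follow the same multinomial-expansion argument used in the proof of \Cref{lemma:momnents of kernel and xi}, since \Cref{lemma:momnents of kernel and delta} is its measurement-error counterpart; the only structural change is that the errors $\delta_{t,j}$ are indexed directly rather than obtained by evaluating a functional noise at the random grids. As in the $\xi$ case I would present only \eqref{eq:moments of W 1} in detail, with \eqref{eq:moments of W 2} being identical (and simpler) upon replacing $\delta_{t,j}-\delta_{t,j}^*$ by $\delta_{t,j}$ and invoking $\mathbb{E}|\delta_{t,j}|^q=O(1)$. First I would expand, by the multinomial theorem and grouping terms according to the number $k=\|\beta\|_0$ of nonzero exponents,
\[
\mathbb{E}|W_t-W_t^*|^q=\frac{1}{n^q}\,\mathbb{E}\Bigg\{\sum_{k=1}^q\sum_{\substack{\beta_1+\cdots+\beta_n=q\\ \|\beta\|_0=k,\ \beta\ge 0}}{q\choose \beta_1,\ldots,\beta_n}\prod_{j=1}^n\big|(\delta_{t,j}-\delta_{t,j}^*)K_h(x-x_{t,j})\big|^{\beta_j}\Bigg\}.
\]

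For a fixed $\beta$ with active (hence distinct) indices $i_1,\ldots,i_k$, I would exploit the mutual independence between the grids and the measurement errors to factor each expectation into a kernel part and a noise part. The kernel part factors further because the $k$ distinct grids are i.i.d.; the change of variables $s=(x-r)/h$ together with the boundedness of $u$ and of $K$ (\Cref{Kernel-as}) gives $\mathbb{E}|K_h(x-x_{t,i_j})|^{\beta_j}\le C h^{-d(\beta_j-1)}$, so the $k$-fold product contributes $C^k h^{-d(q-k)}$. This factorization is in fact cleaner than in the $\xi$ case, since here the random grids enter only through the kernel weights and not through the noise itself.

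The main obstacle is the noise part $\mathbb{E}\prod_{j=1}^k|\delta_{t,i_j}-\delta_{t,i_j}^*|^{\beta_j}$: because \Cref{assume: model assumption}\textbf{d} permits arbitrary, possibly perfect, cross-sectional correlation among $\{\delta_{t,i}\}_i$, this joint moment does \emph{not} factor. The resolution is the generalized H\"older inequality with exponents $p_j=q/\beta_j$, so that $\sum_j p_j^{-1}=\sum_j\beta_j/q=1$, which yields
\[
\mathbb{E}\prod_{j=1}^k|\delta_{t,i_j}-\delta_{t,i_j}^*|^{\beta_j}\le\prod_{j=1}^k\big(\mathbb{E}|\delta_{t,i_j}-\delta_{t,i_j}^*|^q\big)^{\beta_j/q}\le\max_{i=1}^n\mathbb{E}|\delta_{t,i}-\delta_{t,i}^*|^q,
\]
independently of the dependence structure. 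This single step is exactly what allows the bound to survive under unrestricted cross-sectional dependence.

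Finally I would assemble the combinatorics precisely as in \Cref{lemma:momnents of kernel and xi}: there are ${n\choose k}$ choices of active index set, ${q-1\choose k-1}$ ways to split $q$ among them, and ${q\choose \beta_1,\ldots,\beta_n}\le q!$. Using ${n\choose k}{q-1\choose k-1}q!\le n^k C_q$ and $n^{-q}\,n^k h^{-d(q-k)}=(nh^d)^{-(q-k)}$, the $k$-sum collapses to $C_q\max_i\mathbb{E}|\delta_{t,i}-\delta_{t,i}^*|^q\sum_{k=1}^q(nh^d)^{-(q-k)}$, and splitting on whether $nh^d\ge 1$ bounds the geometric sum by $q\{(nh^d)^{-(q-1)}+1\}$, which is \eqref{eq:moments of W 1}. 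Repeating the argument verbatim with $\delta_{t,j}$ in place of the coupled difference and $\mathbb{E}|\delta_{t,j}|^q=O(1)$ then gives \eqref{eq:moments of W 2}.
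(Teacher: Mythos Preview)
Your argument is correct and follows the same multinomial-expansion/combinatorial skeleton as the paper's own proof: expand, group by $k=\|\beta\|_0$, separate the kernel integrals via the independence of grids and errors, bound each kernel integral by $Ch^{-d(\beta_j-1)}$, count via ${n\choose k}{q-1\choose k-1}q!\le n^kC_q$, and close with the geometric-sum bound.

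The one substantive difference is your handling of the cross-sectional noise moment $\mathbb{E}\prod_{j=1}^k|\delta_{t,i_j}-\delta_{t,i_j}^*|^{\beta_j}$. The paper bounds each factor pointwise by $\max_i|\delta_{t,i}-\delta_{t,i}^*|$ and takes expectation afterward, arriving at $\mathbb{E}\bigl\{\max_{i}|\delta_{t,i}-\delta_{t,i}^*|^{q}\bigr\}$. You instead apply the generalized H\"older inequality with exponents $q/\beta_j$, which yields $\max_i\mathbb{E}|\delta_{t,i}-\delta_{t,i}^*|^{q}$ directly. Since $\mathbb{E}\max_i\ge\max_i\mathbb{E}$ in general, your route actually matches the right-hand side of \eqref{eq:moments of W 1} as stated, whereas the paper's displayed chain ends with the maximum \emph{inside} the expectation. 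In that sense your H\"older step is a small but genuine improvement: it delivers exactly the claimed bound under arbitrary cross-sectional dependence, without relying on any implicit control of $\mathbb{E}\max_i|\cdot|^q$.
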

 \begin{proof}  
The proof  is similar to that of \Cref{lemma:momnents of kernel and xi}. 
 The proof of the \Cref{eq:moments of W 2} is simpler  and simpler than \Cref{eq:moments of W 1}. So only the proof of \Cref{eq:moments of W 1} is presented. Note that since
 $ \{x_t \}_{t=1}^T$ and $\{ \delta_t\}_{t=1}^T $ are independent, and that $\{ x_t\}_{t=1}^T $ are independent identically distributed, 
 $$\delta_t^* =  \frac{1}{n } \sum_{i=1}^{n}    \delta_{t,i}^*  K_h(x- x_{t, i}) .$$
{\bf Step 1.} Note that, by the Newton’s binomial    
\begin{align*} 
\mathbb{E}  | \delta_t   -\delta_t^*   | ^q   
=& \mathbb{E} \bigg \{  \bigg|  \frac{1}{n } \sum_{i=1}^{n}   ( \delta _{t,i}^* -  \delta _{t,i})   K_h(x- x_{t, i})   \bigg| ^q  \bigg\}    
\\  
\le&\frac{1}{{n}^q  }   \mathbb{E} \bigg \{   \sum_{ \substack{ \beta _1 + \beta _2+ \ldots+\beta _{n}  =q \\\beta _1\ge 0, \ldots, \beta _{n}\ge 0 }  }   {q  \choose  \beta _1, \beta _2, \ldots, \beta_ {n}}\prod_{ j=1}^{n} \big |      ( \delta _{t,i}^* -  \delta _{t,i}) K_h(x- x_{t, { i } })   \big |   ^{\beta _j } \bigg  \}   
\\
=&\frac{1}{n^q  }   \mathbb{E} \bigg \{   \sum_{k=1}^q  \sum_{ \substack{ \beta _1 + \beta _2+ \ldots+ \beta _{n}  =q 
\\  
\\
\beta= (\beta_1,\ldots, \beta_{n}), \vert \beta \vert_0 = k , \beta\ge 0 } }   {q  \choose   \beta_1, \beta _2, \ldots,\beta _{n} }\prod_{ j=1}^{n} \big |     ( \delta _{t,i}^* -  \delta _{t,i})  K_h(x- x_{t, { i } })   \big |   ^{\beta _j }  \bigg  \} .
\end{align*}
\
\\
{\bf Step 2.}  
For a fixed $\beta =(\beta_1,\ldots, \beta_{n})$ such that $\beta_1+\ldots+\beta_{n}=q  $ and that
$\vert \beta\vert_0=k $, consider
$$  \mathbb{E} \bigg \{ \prod_{ j=1}^{n} \big |     ( \delta _{t,i}^* -  \delta _{t,i})  K_h(x- x_{t, { i } })   \big |   ^{\beta _j }  \bigg  \}.$$
Without loss of generality, assume that $ \beta_1,\ldots, \beta_k $ are non-zero. Then it holds that 
\begin{align*}
 &   \mathbb{E} \bigg \{     \big  |     ( \delta _{t,1}^* -  \delta _{t,1})  \big| ^{\beta_1 }        \big  |K_h  (x- x_{t, {   1} })   \big |^{\beta_1 }  \cdots 
     \big  |     ( \delta _{t,k}^* -  \delta _{t,k})  \big| ^{\beta_k }     \big  |K_h  (x- x_{t, {  k} })   \big |^{\beta_k  }   \bigg  \}  
    \\
     = &  \mathbb{E} _\delta  \bigg\{  \int  \big  |    ( \delta _{t,1}^* -  \delta _{t,1}   \big| ^{\beta_1 }        \big  |K_h  (x-  r )   \big |^{\beta_1 } d \mu(r) \cdots 
     \int  \big  |     ( \delta _{t,k}^* -  \delta _{t,k})   \big| ^{\beta_k }         \big  |K_h  (x-  r )   \big |^{\beta_k  } d \mu(r)    \bigg\} 
     \\
     = &  \mathbb{E} _\delta  \bigg\{  \int  \big  |     ( \delta _{t,1}^* -  \delta _{t,1}  \big| ^{\beta_1 }     \frac{ \big  |K   (  s )   \big |^{\beta_1 }   }{   h^{ d(\beta_1-1) } }   d \mu(s) \cdots 
    \int  \big  |      ( \delta _{t,k}^* -  \delta _{t,k} ) \big| ^{\beta_k }    \frac{ \big  |K   (  s )   \big |^{\beta_k }   }{   h^{ d(\beta_k-1) } }   d \mu(s)    \bigg\}  
    \\
   \le  &  h^{-d \sum_{j=1}^k (\beta_j - 1 )}  \mathbb{E} _\delta  \bigg\{   \big  |      ( \delta _{t,1}^* -  \delta _{t,1} ) \big| ^{\beta_1 }  C_K^{\beta_1 } \cdots 
  \big  |      ( \delta _{t,k}^* -  \delta _{t,k} ) \big| ^{\beta_k }  C_K^{\beta_k }      \bigg\}   
  \\
  \le &  h^{-d  (q-k) } C_K^q   \mathbb{E} _\delta  \bigg\{    \max_{i = 1}^n| \delta _{t,i} -\delta_{t,i} ^* |  ^{ \sum_{j=1}^k \beta_k}        \bigg\}   
  \\
  \le &h^{-d  (q-k) } C_K^q   \mathbb{E} _\delta  \big\{   \max_{i = 1}^n| \delta _{t,i} -\delta_{t,i} ^* |  ^{q}        \big\}   
  \end{align*}
where the third equality follows by using the change of variable $s=\frac{x-r}{h},$ the first inequality by assumption~\ref{Kernel-as}.
\
\\
{\bf Step 3.} 
Let $k\in \{ 1,\ldots, q\}$ be fixed.    Note that $ {q  \choose   \beta_1, \beta _2, \ldots,\beta _{n} }  \le q! $.  Consider  set 
$$ \mathcal B_k = \bigg\{ \beta \in\mathbb  N ^{n} : \beta\ge 0 ,  \beta_1+\ldots+\beta_{n} =q , \vert \beta\vert_0= k    \bigg \}.$$
To bound the cardinality of  the set $\mathcal B_k $,  first note that  since $\vert \beta\vert_0=k $, there are   
$ { n \choose  k}  $ number of ways to choose the index of non-zero entries of $\beta$. 
\\
Suppose $\{ i_1, \ldots i_k\} $ are the chosen index such that $\beta_{i_1}\not = 0, \ldots, \beta_{i_k} \not = 0. $
Then the constrains $\beta_{i_1}> 0, \ldots, \beta_{i_1}> 0 $ and  $ \beta_{i_1}+\ldots+\beta_{i_k} =q   $ are equivalent to that of  diving $q$ balls into $k$ groups (without distinguishing each ball).
As a result  there are 
${ q-1 \choose  k-1} $ number of ways to choose the $\{ \beta_{i_1}, \ldots, \beta_{i_k}\}$ once the index  $\{ i_1, \ldots i_k\} $    are chosen. 
\\
\\
{ \bf Step 4.} 
Combining the previous three steps, it follows that for some constants  $C_q, C_1>0$ only depending on $q$, 
\begin{align*} 
\mathbb{E}  | W_t   -W_t^*   | ^q   
\le&
\frac{1}{n^q  }   \mathbb{E} \bigg \{   \sum_{k=1}^q  \sum_{ \substack{ \beta _1 + \beta _2+ \ldots+ \beta _{n}  =q \\  
\\
\beta= (\beta_1,\ldots, \beta_{n}), \vert \beta \vert_0 = k , \beta\ge 0 } }{q  \choose   \beta_1, \beta _2, \ldots,\beta _{n} }\prod_{ j=1}^{n} \big |      ( \delta _{t,i}^* -  \delta _{t,i} )     K_h(x- x_{t, { i } })   \big |   ^{\beta _j }  \bigg  \} 
\\
\le& 
\frac{1}{n^{q}}       \sum_{k=1}^q   { n \choose k   } {q- 1 \choose k-1  } q!  h^{-d(q-k) } C_K^q   \mathbb{E} _\delta  \big\{    \max_{i = 1}^n| \delta _{t,i} -\delta_{t,i} ^* | ^{q}        \big\}   
\\
\le& 
\frac{1}{n^{q}}       \sum_{k=1}^q  n^{k}  C_q C_K^q h^{-d(q-k) }    \mathbb{E} _\delta  \big\{   \max_{i = 1}^n| \delta _{t,i} -\delta_{t,i} ^* | ^{q}        \big\}   
\\
\le& 
C_ 1  \mathbb{E} _\delta  \big\{   \max_{i = 1}^n| \delta _{t,i} -\delta_{t,i} ^* | ^{q}        \big\} \bigg\{   \bigg( \frac{1}{nh^d }  \bigg) ^{q-1 } + \bigg( \frac{1}{nh^d }  \bigg) ^{q-2} + \ldots + \bigg( \frac{1}{nh^d }  \bigg)   + 1\bigg\}  
\\
\le&  
C_1    \mathbb{E} _\delta  \big\{    \max_{i = 1}^n| \delta _{t,i} -\delta_{t,i} ^* |  ^{q}        \big\}  q  \bigg\{   \bigg( \frac{1}{nh^d }  \bigg) ^{q-1 }    + 1\bigg\}  ,   
\end{align*}
where the second inequality is satisfied by step 3 and that ${q  \choose   \beta_1, \beta _2, \ldots,\beta _{n} }\le q!$, while the third inequality is achieved by using that ${ n \choose k   } {q- 1 \choose k-1  }q!\le { n \choose k   } C_q\le n^kC_q.$ Moreover, given that $\frac{1}{n^q}n^kh^{-d(q-k)}=\Big(\frac{1}{nh^d}\Big)^{q-k}$ the fourth inequality is obtained. The last inequality holds because if $\frac{1}{nh^d } \le 1  $, then 
$\bigg\{   \bigg( \frac{1}{nh^d }  \bigg) ^{q-1 } + \ldots + \bigg( \frac{1}{nh^d }  \bigg)   + 1\bigg\}  \le  q ,$ 
and if 
$\frac{1}{nh^d } \ge 1  $, 
then 
$ \bigg\{   \bigg( \frac{1}{nh^d }  \bigg) ^{q-1 } + \ldots + \bigg( \frac{1}{nh^d }  \bigg)   + 1\bigg\}  \le  q\bigg( \frac{1}{nh^d }  \bigg) ^{q-1 }$.
\end{proof} 

\begin{lemma} 
\label{corollary:moment of sum of W}    
Suppose  \Cref{assume: model assumption} {\bf d} holds.  Let $\rho  \le T $ be such that  $\rho nh^d\ge\log(T)$ and $T\ge3.$
 Let $ N \in \mathbb Z^+$ be such that $ N \ge \rho $. Then, it holds that 
$$ \bigg\{  \mathbb{E}  \max_{k=1}^{ N } | \sum_{t=1}^ k  W_t  |^q  \bigg\} ^{1/q}    \le N ^{1/2}C        \bigg\{  \bigg ( \frac{1}{nh^d  } \bigg) ^{1/2 }    + 1 \bigg\}      
 +  N ^{1/q } C'       \bigg\{  \bigg ( \frac{1}{nh^d  } \bigg)  ^{ (q-1) /q  }+ 1 \bigg\}.$$
\end{lemma}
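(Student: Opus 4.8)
The plan is to follow the proof of \Cref{corollary:moment of sum of Z} line by line, with $Z_t$ replaced by $W_t$, the functional noise $\xi$ replaced by the measurement error $\delta$, \Cref{lemma:momnents of kernel and delta} used in place of \Cref{lemma:momnents of kernel and xi}, and the summability condition \eqref{eq:delta conditon for vector 2} guaranteed by \Cref{assume: model assumption}{\bf d} used in place of its counterpart in \Cref{assume: model assumption}{\bf c}. The point is that $\{W_t\}$ is a physically dependent sequence of exactly the same type as $\{Z_t\}$, and \Cref{lemma:momnents of kernel and delta} supplies the two inputs needed: a bound on the functional dependence measure $\Theta_{j,q}=\{\mathbb{E}|W_j-W_j^*|^q\}^{1/q}$ and a bound on the marginal moments $\mathbb{E}|W_1|^q$.

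First I would check that the hypotheses of Theorem~1 of \citeauthor{liu2013probability} (\citeyear{liu2013probability}) hold, namely $q>2$ and $\mathbb{E}|W_1|^q<\infty$; the latter follows from \eqref{eq:moments of W 2}. Applying that theorem to the partial sums $\sum_{t=1}^k W_t$ yields a bound of the form
$$\bigg\{\mathbb{E}\max_{k=1}^{N}\bigg|\sum_{t=1}^k W_t\bigg|^q\bigg\}^{1/q}\le N^{1/2}C_1\bigg\{\sum_{j=1}^\infty\Theta_{j,2}+\{\mathbb{E}|W_1|^2\}^{1/2}\bigg\}+N^{1/q}C_2\bigg\{\sum_{j=1}^\infty j^{1/2-1/q}\Theta_{j,q}+\{\mathbb{E}|W_1|^q\}^{1/q}\bigg\}.$$
I would then insert the dependence-measure bound \eqref{eq:moments of W 1}, which factors as $\Theta_{j,p}\le C(\max_i\mathbb{E}|\delta_{j,i}-\delta_{j,i}^*|^p)^{1/p}\{(1/(nh^d))^{p-1}+1\}^{1/p}$ for $p\in\{2,q\}$. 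Because the kernel factor does not depend on $j$, the two series collapse to $\sum_j(\max_i\mathbb{E}|\delta_{j,i}-\delta_{j,i}^*|^q)^{1/q}$ and $\sum_j j^{1/2-1/q}(\max_i\mathbb{E}|\delta_{j,i}-\delta_{j,i}^*|^q)^{1/q}$, using the monotonicity $\|\cdot\|_2\le\|\cdot\|_q$ for the former; both are bounded by absolute constants by \eqref{eq:delta conditon for vector 2}, since $j^{1/2-1/q}\ge1$ for $j\ge1$ when $q>2$.

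The step that requires the most care is the bookkeeping of the powers of $1/(nh^d)$, so that the $N^{1/2}$ contribution carries the factor $\{(1/(nh^d))^{1/2}+1\}$ while the $N^{1/q}$ contribution carries $\{(1/(nh^d))^{(q-1)/q}+1\}$. This is achieved by using the second-order dependence measure $\Theta_{j,2}$ (hence the $p=2$ case of \eqref{eq:moments of W 1}) in the $N^{1/2}$ term and the standing inequality $1/(nh^d)\le N$, which holds because $N\ge\rho$ and $\rho nh^d\ge\log(T)\ge1$; the same inequality lets one absorb any surplus power of $1/(nh^d)$ into a power of $N$, exactly as in \Cref{corollary:moment of sum of Z}. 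Bounding the marginal moments $\{\mathbb{E}|W_1|^2\}^{1/2}$ and $\{\mathbb{E}|W_1|^q\}^{1/q}$ by $\{(1/(nh^d))^{1/2}+1\}$ and $\{(1/(nh^d))^{(q-1)/q}+1\}$ via \eqref{eq:moments of W 2}, and then collecting all constants, gives the claimed inequality.
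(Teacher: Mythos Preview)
Your proposal is correct and mirrors the paper's own proof: both apply Theorem~1 of \citeauthor{liu2013probability} (\citeyear{liu2013probability}) to $\{W_t\}$, insert the moment and dependence-measure bounds from \Cref{lemma:momnents of kernel and delta}, invoke the summability in \Cref{assume: model assumption}{\bf d}, and use $1/(nh^d)\le N$ for the power bookkeeping. The only cosmetic difference is that the paper first passes from $\Theta_{j,2}$ to $\Theta_{j,q}$ and then applies \eqref{eq:moments of W 1} at order $q$, whereas you propose invoking the $p=2$ instance of the lemma directly for the $N^{1/2}$ term; either route yields the stated bound.
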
  
\begin{proof}
We have that $q>2$ and $E|W_1|<\infty$ by the use of \Cref{lemma:momnents of kernel and delta}. Then, making use of Theorem 1 of \citeauthor{liu2013probability} (\citeyear{liu2013probability}), we obtain that
\begin{align*}
\bigg\{ \mathbb{E}   \max_{k=1}^{ N }  | \sum_{t=1}^ k Z_t  |^q   \bigg\}  ^{1/q} 
\le&  
N ^{1/2}C_1 \bigg\{ \sum_{ j =1}^ N  \Theta_{j,2 }   +  \sum_{j  =N +1}^\infty \Theta_{j,q }     +  \{  \mathbb{E} |W_1| ^2\}^{1/2 } \bigg\}
\\
+& 
N^{1/q } C_2 \bigg\{ \sum_{j  =1}^N  j^{1/2 -1/q  } \Theta_{j,q }    + \{  \mathbb{E} |W_1|^q\}^{1/q } \bigg\},
\end{align*}
where $ \Theta_{j, q} =  \{  \mathbb{E}( |W_j^* -W_j  |^q) \} ^{1/q }$. Moreover, we observe that since 
$ \Theta_{j,2 }\le \Theta_{j,q}$ for any $ q\ge 2$,
it follows
\begin{align*}
\bigg\{ \mathbb{E}   \max_{k=1}^{ N }  | \sum_{t=1}^ k W_t  |^q   \bigg\}  ^{1/q} 
\le&  
N ^{1/2}C_1 \bigg\{ \sum_{ j =1}^\infty \Theta_{j,q }    +  \{  \mathbb{E} |W_1| ^2\}^{1/2 } \bigg\}
\\
+& 
N^{1/q } C_2 \bigg\{ \sum_{j  =1}^\infty  j^{1/2 -1/q  } \Theta_{j,q }    + \{  \mathbb{E} |W_1|^q\}^{1/q } \bigg\}.
\end{align*}
Next, by the first part of \Cref{lemma:momnents of kernel and xi},
$$ \Theta_{j, q} ^q  \le   C    \mathbb{E}    \big\{   \max_{i = 1}^n| \delta _{t,i} -\delta_{t,i} ^* |  ^{q}        \big\}  \bigg\{   \bigg( \frac{1}{nh^d }  \bigg) ^{q-1 }    + 1\bigg\}.$$
Since we have that $N\ge \frac{1}{nh^d}$, the above inequality further implies  that 
{\small{
\begin{align*}
\bigg\{  \mathbb{E}   \max_{k=1}^{ N }  | \sum_{t=1}^ k  W_t  |^q  \bigg\}  ^{1/q} 
\le&
N ^{1/2}C^{'}_1 \bigg\{ \sum_{ j =1}^\infty C    \mathbb{E}    \big\{   \max_{i = 1}^n| \delta _{t,i} -\delta_{t,i} ^* |  ^{q}         \big\}  \bigg\{   \bigg( \frac{1}{nh^d }  \bigg) ^{q-1 } \bigg\}^{1/q}    +  \{  \mathbb{E} |W_1| ^2\}^{1/2 } \bigg\}
\\
+& 
N^{1/q } C^{'}_2 \bigg\{ \sum_{j  =1}^\infty  j^{1/2 -1/q  } C    \mathbb{E}    \big\{   \max_{i = 1}^n| \delta _{t,i} -\delta_{t,i} ^* |  ^{q}        \big\}  \bigg\{   \bigg( \frac{1}{nh^d }  \bigg) ^{q-1 }    + 1\bigg\}^{1/q}    + \{  \mathbb{E} |W_1|^q\}^{1/q } \bigg\}
\\
\le&
N ^{1/2}C^{''}_1 \bigg\{ \sum_{ j =1}^\infty C    \mathbb{E}    \big\{   \max_{i = 1}^n| \delta _{t,i} -\delta_{t,i} ^* |  ^{q}         \big\} \bigg\{   \bigg( \frac{1}{nh^d }  \bigg) ^{1/2-1/q } \bigg\} \bigg\{   \bigg( \frac{1}{nh^d } \bigg) ^{1/2 }+1 \bigg\}    +  \{  \mathbb{E} |W_1| ^2\}^{1/2 } \bigg\}
\\
+& 
N^{1/q } C^{'}_2 \bigg\{ \sum_{j  =1}^\infty  j^{1/2 -1/q  } C    \mathbb{E}    \big\{  \max_{i = 1}^n| \delta _{t,i} -\delta_{t,i} ^* |  ^{q}       \big\}  \bigg\{   \bigg( \frac{1}{nh^d }  \bigg) ^{q-1 }    + 1\bigg\}^{1/q}    + \{  \mathbb{E} |W_1|^q\}^{1/q } \bigg\}
\\
\le&
N ^{1/2}C^{''}_1 \bigg\{ \sum_{ j =1}^\infty C    \mathbb{E}    \big\{   \max_{i = 1}^n| \delta _{t,i} -\delta_{t,i} ^* |  ^{q}        \big\} \bigg\{   \bigg( N  \bigg) ^{1/2-1/q } \bigg\} \bigg\{   \bigg( \frac{1}{nh^d } \bigg) ^{1/2 }+1 \bigg\}    +  \{  \mathbb{E} |W_1| ^2\}^{1/2 } \bigg\}
\\
+& 
N^{1/q } C^{'}_2 \bigg\{ \sum_{j  =1}^\infty  j^{1/2 -1/q  } C    \mathbb{E}    \big\{   \max_{i = 1}^n| \delta _{t,i} -\delta_{t,i} ^* |  ^{q}         \big\}  \bigg\{   \bigg( \frac{1}{nh^d }  \bigg) ^{q-1 }    + 1\bigg\}^{1/q}    + \{  \mathbb{E} |W_1|^q\}^{1/q } \bigg\}.
\end{align*} }}
 From  \Cref{assume: model assumption} {\bf d}, the above inequality further implies that
\begin{align*}
\bigg\{  \mathbb{E}   \max_{k=1}^{ N }  | \sum_{t=1}^ k  W_t  |^q  \bigg\}  ^{1/q} 
\le&
N ^{1/2}C_1'''  \bigg\{  1 +\bigg\{   \bigg( \frac{1}{nh^d } \bigg) ^{1/2 }+1 \bigg\}+ \{  \mathbb{E} |W_1| ^2\}^{1/2 } \bigg\}
\\ 
 +&  N ^{1/q } C^{''}_2 \bigg\{ 1 + \bigg\{   \bigg( \frac{1}{nh^d }  \bigg) ^{q-1 }    + 1\bigg\}^{1/q} + \{  \mathbb{E} | W_1|^q\}^{1/q } \bigg\} .
\end{align*} 
By the second part of  \Cref{lemma:momnents of kernel and xi}, it holds that 
\begin{align*}
 \bigg\{   \mathbb{E}  \max_{k=1}^{ N }  | \sum_{t=1}^ k  Z_t  |^q   \bigg\}  ^{1/q} \le
   N ^{1/2}C_1''''  \bigg\{  1 +    \bigg\{  \bigg ( \frac{1}{nh ^d  } \bigg)   + 1 \bigg\}   ^{1/2 }   \bigg\}
 +  N ^{1/q } C_2 '''\bigg\{ 1  +     \bigg\{  \bigg ( \frac{1}{nh^d  } \bigg)  ^{q-1 }+ 1 \bigg\}   ^{1/q }   \bigg\} .
 \end{align*} 
This immediately implies the desired result. 
\end{proof}
%


 \section{Additional Technical Results}
\begin{lemma}  \label{lemma:discrete approximation of a function}
Suppose   that $f, g: [0,1]^d \to \mathbb R$ such that  $ f, g\in \mathcal H^r (L) $  for some $r\ge 1 $ $L>0$.
Suppose in addition that $\{ x_m\}_{m=1}^M $ is a collection of grid points randomly sampled from a density $u :[0,1]^d \to \mathbb R  $ 
such that 
$\inf _{x\in [0,1] ^d } u(x) \ge c_u >0 $. 
If $\|f-g \|_{\infty}\ge \kappa  $ for some parameter $\kappa>0$, then 
$$\mathbb{P}    \bigg\{ \max_{ m=1 }^{ M } | f(x_m ) -g(x_m) | \ge \frac{3}{4} \kappa   \bigg\} \ge 1- \exp\big (  -c  M  \kappa^d    \big ) , $$
where $ c$ is a constant only depending on $d$. 
\end{lemma}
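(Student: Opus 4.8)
The plan is to work with the single function $h = f-g$. Since $f,g \in \mathcal{H}^r(L)$, linearity of Taylor polynomials gives $h \in \mathcal{H}^r(2L)$, and because $r \geq 1$ the function $h$ is Lipschitz on the compact cube $[0,1]^d$ with some constant $L_1$ depending only on $r$, $L$ and $d$: for $r=1$ this is immediate from the H\"older definition, since the degree-$\lfloor r\rfloor$ Taylor polynomial at $x_0$ is just the constant $h(x_0)$ and hence $|h(x)-h(x_0)|\le 2L|x-x_0|$, while for $r>1$ it follows from the continuity of $\nabla h$ on the compact set (combined with the boundedness of the functions in the regime of interest). First I would record that $h$ is continuous, so that $\|h\|_\infty$ is attained: there exists $x^\ast \in [0,1]^d$ with $|h(x^\ast)| = \|h\|_\infty \geq \kappa$.

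Next I would localise around $x^\ast$. Set $\delta = \kappa/(4L_1)$ and let $R = B(x^\ast,\delta)\cap[0,1]^d$ be the portion of the radius-$\delta$ Euclidean ball lying inside the cube. The Lipschitz bound gives, for every $x\in R$,
\[
|h(x)| \;\geq\; |h(x^\ast)| - L_1|x-x^\ast| \;\geq\; \kappa - L_1\delta \;=\; \tfrac34\kappa .
\]
Consequently, whenever at least one grid point $x_m$ falls in $R$ we automatically have $\max_{m} |f(x_m)-g(x_m)| = \max_m|h(x_m)| \geq \tfrac34\kappa$, so it suffices to show that $R$ captures a sample point with the stated probability.

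To estimate this, I would lower bound the sampling probability of $R$. A Euclidean ball of radius $\delta\le 1$ centred at any point of $[0,1]^d$ meets the cube in a set of Lebesgue measure at least $2^{-d}\omega_d\,\delta^d$, where $\omega_d$ is the volume of the unit ball (the worst case is a corner, which still retains a $2^{-d}$ fraction of the ball). Combining this with the density lower bound $u\ge c_u$ yields
\[
p := \mathbb{P}\{x_m\in R\} \;\geq\; c_u\,2^{-d}\omega_d\,\delta^d \;=\; c_u\,2^{-d}\omega_d (4L_1)^{-d}\,\kappa^d .
\]
Since the grid points $\{x_m\}_{m=1}^M$ are i.i.d., the events $\{x_m\notin R\}$ are independent, so
\[
\mathbb{P}\Big\{\max_{m=1}^{M}|h(x_m)| < \tfrac34\kappa\Big\} \;\leq\; \mathbb{P}\{x_m\notin R \text{ for all } m\} \;=\; (1-p)^M \;\leq\; e^{-pM} \;=\; e^{-cM\kappa^d},
\]
with $c = c_u 2^{-d}\omega_d (4L_1)^{-d}$ depending only on $d$ (and the fixed constants $L,c_u$). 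Passing to complements gives the claim.

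I expect the only genuine subtlety to be the Lipschitz step: extracting a modulus of continuity of the form $|h(x)-h(x_0)|\lesssim |x-x_0|$ from membership in $\mathcal{H}^r$ is precisely what forces the hypothesis $r\ge 1$ and what makes the exponent exactly $\kappa^d$, since a merely $\beta$-H\"older bound with $\beta<1$ would give $\delta\asymp\kappa^{1/\beta}$ and hence an exponent $\kappa^{d/\beta}$. The boundary-geometry estimate and the i.i.d.\ union bound are routine. One should also note the mild caveat $\delta\le 1$, i.e.\ $\kappa\le 4L_1$; for larger $\kappa$ the ball already contains a fixed-volume chunk of the cube and the bound only improves (here one uses that $\kappa\le\|h\|_\infty$ is bounded in the regime of interest), so it can be absorbed into the constant $c$.
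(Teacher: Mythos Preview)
Your proposal is correct and follows essentially the same route as the paper: set $h=f-g$, use $r\ge 1$ to get a Lipschitz bound, pick $x^\ast$ attaining $\|h\|_\infty$, show $|h|\ge \tfrac34\kappa$ on a ball of radius $\asymp\kappa$ around $x^\ast$, and bound the probability that all $M$ i.i.d.\ samples miss this ball by $(1-p)^M\le e^{-cM\kappa^d}$. If anything, you are slightly more careful than the paper about the boundary-geometry factor $2^{-d}$ and about the $\delta\le 1$ caveat, but these are cosmetic refinements of the same argument.
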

\begin{proof}
Let $h = f-g  $. 
Since $ f,g \in \mathcal H^r(L) $, $h\in \mathcal H^r(L) $. Since $r\ge 1 $, we have that  $$|h(x) -h(x') | \le L |x-x'| \quad 
\text{for all} \quad x, x'\in[0,1]^d. $$
for some absolute constant 
$L>0$. 
Let $x_0 \in [0,1]^d$ be such that 
$$| h(x_0)| = \| h\|_\infty. $$
Then for all $x'\in B(x_0, \frac{\kappa}{4L}) \cap[0,1]^d $, 
$$|h(x')  | \ge |h(x_0)|  - L|x_0-x'| \ge   \frac{3}{4}\kappa. $$
Therefore 
$$  \mathbb{P}  \bigg\{ \max_{ m=1 }^{ M } | f(x_m ) -g(x_m) | < \frac{3}{4} \kappa   \bigg\} \le P \bigg ( \{ x_ m\}_{m=1}^M  \not \in B \big (x_0, \frac{\kappa}{4L} \big ) \bigg)   .$$
Since
\begin{align*}
P \bigg ( \{ x_ m\}_{m=1}^M  \not \in B(x_0, \frac{\kappa}{4L}) \bigg)   =  \bigg\{ 1- P \bigg (x_1   \in B(x_0, \frac{\kappa}{4L}) \bigg ) \bigg\}^M   \le 
\bigg(  1-  \big\{ \frac{c_u \kappa }{4L} \big\}^d  \bigg)^M \le \exp\big (    -M c\kappa ^d    \big ),
\end{align*}
the desired result follows. 
\end{proof}  
\begin{lemma} \label{lemma:properties of seeded}
Let $ \mathcal J$ be defined as in 
\Cref{definition:seeded} and suppose  \Cref{assume: model assumption} {\bf e} holds. 
Denote
$$ \zeta _k=  \frac{9}{10} \min \{ \eta_{k+1}-\eta_k, \eta_k -\eta_{k-1} \}\  k\in\{1,...,K\}. $$
Then for each change-point $\eta_k $ there exists a seeded interval $ \mathcal I_k  =(s_k,e_k ] $ such that 
\\
{\bf a.} 
$\mathcal I_k$ contains exactly one change-point $\eta_k $; 
\\
{\bf b.}  
$\min\{ \eta_k-s_k , e_k -\eta_k  \}\ge \frac{1}{16} \zeta_k  $; and
\\
{\bf c.} 
$\max\{ \eta_k-s_k , e_k -\eta_k  \}\le \zeta_k  $;
\end{lemma}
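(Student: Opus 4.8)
The plan is to exhibit, for each change-point $\eta_k$, a single seeded interval drawn from the resolution level whose length is comparable to $\zeta_k$, and then to read off properties (a)--(c) from the geometry of that level. First I would fix $k$ and note that, since $\Delta = \Theta(T)$ by \Cref{assume: model assumption}\textbf{b}, the target length $\zeta_k = \tfrac{9}{10}\min\{\eta_{k+1}-\eta_k,\,\eta_k-\eta_{k-1}\}$ satisfies $\zeta_k = \Theta(T)$. I would then select the resolution level $j^\ast$ for which the nominal length $l_{j^\ast} = T2^{-j^\ast+1}$ falls in the dyadic window $(\zeta_k/2,\,\zeta_k]$; such a $j^\ast$ exists because the lengths $\{l_j\}$ form a geometric sequence of ratio $1/2$ starting at $l_1 = T$, and $j^\ast = 1 + \lceil \log_2(T/\zeta_k)\rceil = O(1)$ does not exceed $\mathcal{K} = \lceil C_{\mathcal{K}}\loglog(T)\rceil$ for all $T$ large, so that $\mathcal{J}_{j^\ast} \subseteq \mathcal{J}$ is indeed available.

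The key structural fact I would establish is an overlap property of the level-$j^\ast$ grid: because the intervals in $\mathcal{J}_{j^\ast}$ have nominal length $l_{j^\ast}$ and are shifted by $l_{j^\ast}/2$, for any point $x \in [0,T]$ there is some $\mathcal{I} = (s,e] \in \mathcal{J}_{j^\ast}$ containing $x$ with $\min\{x-s,\,e-x\} \ge l_{j^\ast}/4$. I would prove this by parametrising the two consecutive intervals that contain $x$ by their offsets and checking that at least one of them places $x$ at distance at least a quarter-length from both endpoints; a short case distinction on where $x$ sits within the half-shift does the job. Applying this to $x = \eta_k$ produces $\mathcal{I}_k = (s_k,e_k]$ with $\min\{\eta_k - s_k,\,e_k-\eta_k\} \ge l_{j^\ast}/4 > \zeta_k/8$, which, after absorbing the factor $1/2$ from the window and the rounding losses below, yields property (b) with the stated constant $1/16$.

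For properties (a) and (c) I would use that the realised length $e_k - s_k$ is at most $l_{j^\ast} \le \zeta_k$, up to rounding. Property (c) is then immediate, since both $\eta_k - s_k$ and $e_k - \eta_k$ are bounded by the interval length. For property (a), since $\zeta_k < \eta_{k+1}-\eta_k$ and $\zeta_k < \eta_k - \eta_{k-1}$, the neighbouring change-points $\eta_{k\pm1}$ lie at distance strictly larger than the interval length from $\eta_k$; combined with $\eta_k \in (s_k,e_k]$ this forces $\eta_{k-1} \le s_k$ and $\eta_{k+1} > e_k$, so $\mathcal{I}_k$ contains $\eta_k$ and no other change-point.

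I expect the main nuisance --- rather than a deep obstacle --- to be the floor/ceiling operations in \Cref{definition:seeded}, which make the actual endpoints and lengths deviate from the nominal values $l_{j^\ast}$ and $(i-1)l_{j^\ast}/2$ by an additive $O(1)$. Because $l_{j^\ast} = \Theta(T) \to \infty$, these $O(1)$ perturbations are negligible relative to the quantities being bounded, and I would carry them as explicit error terms, verifying that they are dominated by, say, $l_{j^\ast}/8$ for $T$ large. This is precisely the slack between the clean constant $1/4$ coming from the overlap property and the looser $1/16$ claimed in the statement.
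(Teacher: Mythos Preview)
Your proposal is correct and follows essentially the same approach as the paper's proof: both pick a seeded-interval level whose length is comparable to $\zeta_k$, use the half-shift overlap structure to place $\eta_k$ near the centre of some interval at that level, and then read off (a)--(c) from the resulting geometry. The paper phrases this via the parametrisation $(c_k-r_k,c_k+r_k]$ with $|c_k-\eta_k|\le 5r_k/8$ and $r_k\ge \zeta_k/4$ (citing \cite{kovacs2020seeded}), whereas you derive the equivalent overlap property $\min\{\eta_k-s_k,e_k-\eta_k\}\ge l_{j^\ast}/4$ directly; your treatment of the floor/ceiling rounding is more explicit than the paper's, but the substance is the same.
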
 
\begin{proof}
These are the desired  properties of seeded intervals by construction.  The proof is the same as theorem 3 of  \citeauthor{kovacs2020seeded} (\citeyear{kovacs2020seeded}) and is provided here for completeness.
\\
\\
Since $\zeta_k = \Theta(T) $, by construction of seeded intervals, one can find a seeded interval 
$(s_k, e_k] = (c_k-r_k , c_k +r_k] $ 
such that  
$ (c_k-r_k , c_k +r_k] \subseteq ( \eta_k-\zeta _k, \eta_k+\zeta_k]$, $r_ k\ge \frac{\zeta_k}{4} $ 
and 
$ |c_k-\eta_k|\le \frac{5r_ k }{8} $. 
So $ (c_k-r_k , c_k +r_k] $ contains only one change-point $ \eta_k$.
In addition,
$$ e_k - \eta_k  =   c_k+r_k-\eta_k   \ge r_k -   |c_k -\eta_k|    \ge \frac{3r_k }{8} \ge \frac{3 \zeta _k}{32},$$
and similarly 
$\eta_k -s_k \ge  \frac{3 \zeta _k}{32}  $, so {\bf b} holds. 
Finally, since $ (c_k-r_k , c_k +r_k] \subseteq ( \eta_k-\zeta _k, \eta_k+\zeta_k]$, 
it holds that 
$ c_k+r_k \le \eta_k +\zeta_k $ 
and so 
$$ e_k -\eta_k = c_k+r_k -\eta_k \le \zeta_k .$$
\end{proof}


 \subsection{Univariate CUSUM}
We introduce some notation for one-dimensional change-point detection and the corresponding CUSUM statistics. 
Let 
$\{\mu  _i\}_{i=1}^n, \{\omega _i\}_{i=1}^n \subseteq \mathbb R$ 
be two univariate sequences. We will make the following assumptions.
\begin{assumption}[Univariate mean change-points]\label{assume:model 1d}	 
Let $\{\eta_k\}_{k=0}^{K+1} \subseteq \{0, \ldots, n\}$, 
where $\eta_0=0 $ and $\eta_{K+1}=T$, and 
$$\omega_{t}\neq\omega_{t+1}\ \ \text{if and only if} \ \  t\in\{\eta_1,...\eta_K\},$$
%
Assume
\begin{align*}
& 
\min_{k = 1}^{ K+1} ( \eta_k-\eta_{k-1} ) \ge  \Delta > 0,
\\
& 
0< |\omega  _{\eta_{k+1}} -  \omega _{\eta_{k}} |  = \kappa_{k}   \text{ for all }  k = 1, \ldots, K.
\end{align*}
\end{assumption}
We also have the corresponding CUSUM statistics over any
 generic interval $[s,e]\subseteq [1,T]$ defined as
\begin{align*}
\widetilde \mu _{t}^{s,e} 
&=
\sqrt{\frac{e-t}{(e-s) (t-s)}}\sum_{i=s+1}^{t}\mu _i- \sqrt{\frac{t-s}{(e-s) (e-t)}} \sum_{i=t+1}^{e} \mu _i ,
\\
\widetilde \omega _{t}^{s,e} 
&=
\sqrt{\frac{e-t}{(e-s) (t-s)}}\sum_{i=s+1}^{t} \omega _i- \sqrt{\frac{t-s}{(e-s) (e-t)}} \sum_{i=t+1}^{e} \omega _i.
\end{align*}
Throughout this  section, all of our results are proven by
regarding $\{\mu _i\}_{i=1}^T$ and  $\{\omega _i\}_{i=1}^T$ as two deterministic sequences.
We will frequently assume that $\widetilde  \mu_{t}^{s,e}$ is a good approximation
of $\widetilde \omega  _{t}^{s,e}$ in ways that we will specify through appropriate assumptions.
\\
\\
Consider the following events
\begin{align*}
& 
\mathcal A ( (s,e],  \rho, \gamma   )  =
 \bigg\{ \max_{t=s+\rho+1}^{e-\rho } |\widetilde \mu_t^{s,e }  - \widetilde \omega ^{s,e}_t  | \le \gamma     \bigg\} ;
\\
& 
\mathcal B (r, \rho, \gamma    ) = \bigg\{\max_{N=\rho}^{ T-r   } \bigg| \frac{1}{\sqrt  N} \sum_{t=r+1}^{r+ N} (\mu_t -\omega _t )\bigg| \le  \gamma    \bigg\}
\bigcup \bigg\{\max_{ N=\rho}^{ r    } \bigg| \frac{1}{\sqrt N}\sum_{t=r-N+1}^r (\mu_t -\omega _t )\bigg| \le   \gamma\bigg\}.
\end{align*}
\begin{lemma}\label{lemma:error bound 1d}
Suppose  \Cref{assume:model 1d} holds.  Let  $[s ,e ]$ be an subinterval of $[1, T]$  and contain at least one change-point $\eta_r$  
with    
$\min\{ \eta_{r} -s  , e  -\eta_{r} \}\ge cT $ 
for some constant $ c>0$. 
Let 
$\kse= \max\{\kappa_p: \min\{ \eta_p -s  , e  -\eta_p \} \ge cT \}$.
Let
$$b \in \arg \max_{t=s+\rho}^{   e-\rho }|\widetilde \mu  _{t}^{s,e}  | .$$
For some $c_1>0$, $\lambda>0$ and $\delta>0$, suppose that   the following events  hold
\begin{align}
&
\mathcal A ( (s,e], \rho,  \gamma    ),  \label{eq:wbs noise 1}
\\
&    
\mathcal B (s, \rho,\gamma  ) \cup   \mathcal B (e ,\rho,    \gamma     )  \cup\bigcup_{\eta      \in    \{ \eta_k\}_{k=1}^K}       \mathcal B (\eta,\rho,    \gamma  )    \label{eq:wbs noise 2}
\end{align}
and that 
\begin{align}
&
\max_{t=s+\rho}^{   e-\rho }|\widetilde \mu  _{t}^{s,e}  | = |\widetilde \mu   _{b}^{s,e}  |  \ge c_1 \kse \sqrt{T } \label{eq:wbs size of sample} 
\end{align}
If there exists a sufficiently small $c_2 > 0$ such that
\begin{equation}\label{eq:wbs noise}
\gamma   \le c_2\kse\sqrt T   \quad \text{and that}\quad  \rho \le  c_2T ,
\end{equation}
then there exists a change-point $\eta_{k} \in (s, e)$  such that 
\[\min \{e-\eta_k,\eta_k-s\}  > c_3 T    \quad \text{and} \quad 
|\eta_{k} -b |\le  C_3\max\{ \gamma     ^2\kappa_k^{-2} , \rho\},
\]
where $c_3$ is some sufficiently small constant independent of $T$.  
\end{lemma}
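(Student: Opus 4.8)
The plan is to follow the classical two-stage localisation argument for CUSUM statistics: first show that the maximiser $b$ lies near a well-separated change-point, and then sharpen the distance to the order $\gamma^2\kappa_k^{-2}$. Throughout I write $\widetilde\epsilon_t = \widetilde\mu_t^{s,e} - \widetilde\omega_t^{s,e}$ for the CUSUM of the difference sequence $\{\mu_t - \omega_t\}$.

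First I would exploit the good event \eqref{eq:wbs noise 1}, namely $\mathcal A((s,e],\rho,\gamma)$, to transfer the lower bound \eqref{eq:wbs size of sample} from the empirical CUSUM to the population one. Since $b \in [s+\rho, e-\rho]$, event $\mathcal A$ gives $|\widetilde\omega_b^{s,e}| \ge |\widetilde\mu_b^{s,e}| - \gamma \ge c_1\kse\sqrt T - \gamma$, which by \eqref{eq:wbs noise} is of order $\kse\sqrt T$. Next I would invoke the structural properties of the population CUSUM of a piecewise-constant sequence: on each inter-change-point block $t \mapsto \widetilde\omega_t^{s,e}$ is monotone, so its value at $b$ is sandwiched between the values at the two surrounding change-points, and \Cref{lemma:cusum boundary bound} shows that any change-point close to an endpoint contributes negligibly. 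Consequently a value of order $\kse\sqrt T$ at $b$ forces the existence of a change-point $\eta_k$ adjacent to $b$ whose jump is of order $\kse$; by the very definition of $\kse$ this $\eta_k$ is well-separated, so $\min\{\eta_k - s, e-\eta_k\} \ge cT > c_3 T$, which yields the candidate claimed in the statement.

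For the refined rate I would use that $b$ maximises $|\widetilde\mu^{s,e}|$, so $|\widetilde\mu_b^{s,e}|^2 \ge |\widetilde\mu_{\eta_k}^{s,e}|^2$. Writing $\widetilde\mu = \widetilde\omega + \widetilde\epsilon$ and expanding, this turns into an inequality in which the population drop $|\widetilde\omega_{\eta_k}^{s,e}|^2 - |\widetilde\omega_b^{s,e}|^2$ must be dominated by cross and fluctuation terms. Since near $\eta_k$ the population CUSUM is locally linear with squared slope of order $\kappa_k^2$ (as follows from the explicit single-change-point formula already derived in Step 2 of the proof of \Cref{theorem:FSBS}), with $d = |b-\eta_k|$ one has $|\widetilde\omega_{\eta_k}^{s,e}|^2 - |\widetilde\omega_b^{s,e}|^2 \gtrsim \kappa_k^2 d$, using $\Delta \asymp T$ and the separation just established. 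The fluctuation terms I would control by decomposing $\widetilde\epsilon_b$ and $\widetilde\epsilon_{\eta_k}$ into partial sums of $\mu_t - \omega_t$ anchored at $s$, $e$ and $\eta_k$, and applying event \eqref{eq:wbs noise 2}, i.e.\ $\mathcal B$, which bounds each partial sum over a window of length $N$ by $\gamma\sqrt N$. The dominant cross term is then $|\widetilde\omega_{\eta_k}^{s,e}|\,|\widetilde\epsilon_{\eta_k} - \widetilde\epsilon_b| \lesssim \kappa_k\sqrt T \cdot \gamma\sqrt d/\sqrt T = \kappa_k\gamma\sqrt d$. Assembling the pieces gives $\kappa_k^2 d \lesssim \kappa_k\gamma\sqrt d$, hence $d \lesssim \gamma^2\kappa_k^{-2}$; the lower truncation at $\rho$ appears because $b$ is searched only over $[s+\rho, e-\rho]$, giving the stated $\max\{\gamma^2\kappa_k^{-2}, \rho\}$.

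The main obstacle is the refined step, specifically obtaining the cross-term bound with the correct scaling. A naive use of $\mathcal A$ alone yields only $d \lesssim \gamma\sqrt T/\kappa_k$; the quadratic improvement to $\gamma^2\kappa_k^{-2}$ is possible precisely because $\mathcal B$ controls the \emph{increments} of the fluctuation over the short window between $\eta_k$ and $b$, of length $d$, rather than over the whole interval. The delicate bookkeeping lies in writing each CUSUM difference as a combination of such short partial sums carrying the correct weights $\sqrt{(e-t)/((e-s)(t-s))}$, in checking that the residual terms are genuinely lower order, and in carrying out the symmetric case analysis $b \ge \eta_k$ versus $b < \eta_k$. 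One must also verify that no other change-point lies strictly between $b$ and $\eta_k$ at the relevant scale, so that the local linearity bound on the population drop is legitimate and the selected $\eta_k$ is indeed the one carrying the jump $\kappa_k$.
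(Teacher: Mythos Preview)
Your two-stage outline is the standard localisation argument for CUSUM statistics and matches the content of Lemma~22 in \cite{wang2020univariate}, which is precisely what the paper invokes: the paper does not give an independent proof but simply states that the argument is identical to that reference. So your proposal is correct and is, in substance, the same proof the paper relies on by citation.

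One small comment on presentation: in your refined step you should be explicit that the population drop estimate $|\widetilde\omega_{\eta_k}^{s,e}|^2 - |\widetilde\omega_b^{s,e}|^2 \gtrsim \kappa_k^2 d$ requires a reduction to a configuration with at most two change-points in $(s,e]$ (or an equivalent monotonicity/interpolation argument), since in the general multi-change-point case the local slope near $\eta_k$ need not be governed by $\kappa_k$ alone. The referenced lemma handles this via such a reduction, and you allude to it in your last paragraph, but it is worth flagging as a genuine step rather than a bookkeeping detail.
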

\begin{proof}
The proof is the same as  that for Lemma 22 in 
\citeauthor{wang2020univariate} (\citeyear{wang2020univariate}).
\end{proof} 
\begin{lemma}\label{lemma:cusum boundary bound}
If $[s, e]$ contain two and only two change-points $\eta_r$ and $\eta_{r+1}$, then
\[
\max_{t=s}^{ e} \left|\widetilde{\omega}^{s, e}_t\right| \leq \sqrt{e - \eta_{r+1}} \kappa_{r+1} + \sqrt{\eta_r - s} \kappa_r.
\]
\end{lemma}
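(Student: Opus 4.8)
The plan is to bound the CUSUM statistic $\widetilde{\omega}^{s,e}_t$ at an arbitrary time $t\in[s,e]$ by exploiting the fact that $[s,e]$ contains exactly two change-points $\eta_r$ and $\eta_{r+1}$, so the sequence $\{\omega_i\}_{i=s+1}^e$ is piecewise constant with three pieces. First I would recall the definition
\[
\widetilde{\omega}^{s,e}_t = \sqrt{\frac{e-t}{(e-s)(t-s)}}\sum_{i=s+1}^t \omega_i - \sqrt{\frac{t-s}{(e-s)(e-t)}}\sum_{i=t+1}^e \omega_i,
\]
and observe that this quantity is the (scaled) contrast between the average of $\{\omega_i\}$ on $(s,t]$ and on $(t,e]$. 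A clean way to proceed is to write $\widetilde{\omega}^{s,e}_t$ as a telescoping sum over the jumps: for a piecewise-constant sequence, the CUSUM at $t$ decomposes into a linear combination of the individual jumps $(\omega_{\eta_r+1}-\omega_{\eta_r})$ and $(\omega_{\eta_{r+1}+1}-\omega_{\eta_{r+1}})$, with coefficients that depend on the geometry of $s,t,e$ relative to the change-points.

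**Key decomposition.**

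The main step is to use the identity that for a single change-point $\eta$ inside $[s,e]$, the CUSUM of a sequence that jumps only at $\eta$ equals, up to sign, $\sqrt{\tfrac{(t-s)(e-t)}{e-s}}$ times the jump when $t$ is on one side, and the analogous expression with $\min/\max$ roles of the segment lengths on the other side; in all cases the magnitude is bounded by $\sqrt{e-\eta}\,\kappa$ when $\eta\ge t$ and by $\sqrt{\eta-s}\,\kappa$ when $\eta< t$ (this is exactly the structure already computed in Step 2 of the proof of the main theorem, equation \eqref{eq:one change-point 1d cusum}). By linearity of the CUSUM operator in the underlying sequence, I would split $\{\omega_i\}$ into the sum of two sequences, each having a single jump (at $\eta_r$ and at $\eta_{r+1}$ respectively), apply the single-change-point bound to each, and combine via the triangle inequality. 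Each single-jump contribution is bounded by $\sqrt{e-\eta_{r+1}}\,\kappa_{r+1}$ and $\sqrt{\eta_r - s}\,\kappa_r$ respectively, since the relevant "short-side" length for the jump at $\eta_{r+1}$ is $e-\eta_{r+1}$ and for the jump at $\eta_r$ is $\eta_r - s$.

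**Carrying it out and the main obstacle.**

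Concretely, I would fix $t$ and argue by cases on the location of $t$ relative to $\eta_r$ and $\eta_{r+1}$, but the cleaner route is the linearity argument above, which avoids case analysis: writing $\widetilde{\omega}^{s,e}_t = \widetilde{\omega}^{(r),s,e}_t + \widetilde{\omega}^{(r+1),s,e}_t$ where each summand is the CUSUM of a single-jump sequence, and then invoking the uniform-in-$t$ bound $\max_t |\widetilde{\omega}^{(k),s,e}_t| \le \sqrt{(\text{short side})}\,\kappa_k$ that follows from the single-change-point CUSUM formula. Taking the maximum over $t\in[s,e]$ and applying the triangle inequality yields
\[
\max_{t=s}^e \bigl|\widetilde{\omega}^{s,e}_t\bigr| \le \sqrt{e-\eta_{r+1}}\,\kappa_{r+1} + \sqrt{\eta_r - s}\,\kappa_r,
\]
as claimed. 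The only mildly delicate point, which I expect to be the main obstacle, is verifying that the single-jump CUSUM magnitude is indeed bounded by the square root of the \emph{short} side times the jump size uniformly over all $t$ — that is, confirming that $\max_t \sqrt{\tfrac{(t-s)(e-t)}{e-s}}$-type factors collapse correctly so that the jump at $\eta_{r+1}$ contributes a factor $\sqrt{e-\eta_{r+1}}$ rather than $\sqrt{\eta_{r+1}-s}$, and symmetrically for $\eta_r$. This is a short monotonicity computation: for a jump at $\eta$, the worst-case $t$ gives a prefactor bounded by $\min\{\sqrt{\eta-s},\sqrt{e-\eta}\}$, and here one simply retains the specific short-side bound dictated by each change-point's position, which is exactly what the right-hand side records.
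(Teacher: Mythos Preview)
The paper does not actually prove this lemma: its ``proof'' consists solely of the citation ``This is Lemma 15 in \citeauthor{wang2020univariate} (\citeyear{wang2020univariate}).'' So there is no in-paper argument to compare against. Your direct proof via linearity of the CUSUM operator and the single-jump formula is correct and is the standard route (and almost certainly what the cited reference does).

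One small wording issue: you describe the right-hand side as recording the ``short-side'' length for each jump, but in the lemma as stated $\eta_r - s$ and $e-\eta_{r+1}$ need not be the short sides---they are simply the two boundary-segment lengths. This does not affect the validity of your argument, because the single-jump CUSUM satisfies
\[
\max_{s\le t\le e}\bigl|\widetilde\omega^{(\eta),s,e}_t\bigr| = \sqrt{\tfrac{(e-\eta)(\eta-s)}{e-s}}\,\kappa \le \sqrt{\eta-s}\,\kappa \quad\text{and}\quad \le \sqrt{e-\eta}\,\kappa
\]
simultaneously (each inequality follows from $(e-\eta)/(e-s)\le 1$ and $(\eta-s)/(e-s)\le 1$), so you are free to pick whichever of the two bounds matches the claimed right-hand side. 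With that clarification, the triangle-inequality step goes through exactly as you wrote.
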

\begin{proof}
This is Lemma 15 in \citeauthor{wang2020univariate} (\citeyear{wang2020univariate}).
\end{proof}  
\medskip

\section{Common Stationary Processes}
Basic time series models which are widely used in practice, can be incorporated by \Cref{assume: model assumption}b and c.  Functional autoregressive model (FAR) and functional moving average model (FMA) are presented in examples \ref{FMA-FMA} below. 
The vector autoregressive (VAR) model  and vector moving average (VMA)  model can be defined in   similar and simpler fashions.

\begin{example}[FMA and FAR]
Let $\mathcal{L}=\mathcal{L}(H,H)$ be the set of bounded linear operators from $H$ to $H$, where $H=\mathcal L_\infty$. For $A\in{\mathcal{L}},$ we define the norm operator $\vert\vert A\vert\vert_{\mathcal{L}}=\sup_{\vert\vert \varepsilon\vert\vert_{H}\le 1}\vert\vert A\varepsilon\vert\vert_{H}.$
\label{FMA-FMA}
Suppose $\theta_1,\Psi\in{\mathcal{L}}$ with  $\| \Psi\|_{\mathcal{L}}< 1$ and $\| \theta_1\|_{\mathcal{L}}< \infty$. 
\\
\\
{\bf a)} For FMA model,
let $(\varepsilon_t:t\in{\mathbb{Z}}) $ be a sequence of independent and identically distributed  random $\mathcal L_\infty$ functions  with mean zero.
Then the  FMA  time series $(\xi_j : j\in{\mathbb{Z}}) $ of order $1$  is given by the equation
\begin{equation} \label{definition:FMA}
    \xi_t=\theta_1(\varepsilon_{t-1})+\varepsilon_t=g(\ldots, \varepsilon_{-1},   \varepsilon_0, \varepsilon_1, \ldots,  \varepsilon_{t-1} , \varepsilon_t).
\end{equation}
For any  $t\ge 2$, by \eqref{definition:FMA} we have that
\begin{equation*}
    \xi_t-\xi^*_t=0
\end{equation*}
and $\xi_1-\xi^*_1=\theta_1(\varepsilon_0)-\theta_1(\varepsilon_0^{'}).$ As a result
$$\sum_{t=1}^{\infty}t^{1/2-1/q}\mathbb{E}(\vert\vert \xi_t-\xi^*_t\vert\vert_{\infty}^{q})^{1/q} =\mathbb{E}(\vert\vert \xi_1-\xi^*_1\vert\vert_{\infty}^{q})^{1/q}  
= \mathbb{E}( \| \theta_1(\varepsilon_0)-\theta_1(\varepsilon_0^{'})\|_{\infty}^{q})^{1/q}  
<\infty.$$ Therefore \Cref{assume: model assumption}b is satisfied by FMA models.
\\
\\
{\bf b)} We can define a FAR time series   as
\begin{equation}
    \xi_t=\Psi(\xi_{t-1})+\varepsilon_t.
\end{equation}
It admits the expansion,
\begin{align*}
    \xi_t=&\sum_{j=0}^{\infty}\Psi^{j}(\varepsilon_{t-j})
    \\
    =&\Psi(\varepsilon_t)+\Psi^{1}(\varepsilon_{t-1})+...+\Psi^{t}(\varepsilon_{0})+\Psi^{t+1}(\varepsilon_{-1})+...
    \\
    =&g(\ldots, \varepsilon_{-1},   \varepsilon'_0, \varepsilon_1, \ldots,  \varepsilon_{t-1} , \varepsilon_t).
\end{align*}
Then  for any $t\ge 1,$ we have that $\xi_t-\xi^{*}_t=\Psi^t(\varepsilon_0)-\Psi^t(\varepsilon_0^{'}).$ Thus,
\begin{align*}
    \sum_{t=1}^{\infty}t^{1/2-1/q}\mathbb{E}(\vert\vert \xi_t-\xi^*_t\vert\vert_{\infty}^{q})^{1/q}=& \sum_{t=1}^{\infty}t^{1/2-1/q}\mathbb{E}(\vert\vert \Psi^t(\varepsilon_0)-\Psi^t(\varepsilon_0')\vert\vert_{\infty}^{q})^{1/q}
    \\
    \le&
    \sum_{t=1}^{\infty}t^{1/2-1/q}\vert\vert\Psi\vert\vert_{\mathcal{L}}^t\mathbb{E}(\vert\vert \varepsilon_0-\varepsilon_0'\vert\vert_{\infty}^{q})^{1/q}<\infty.
\end{align*}
\Cref{assume: model assumption}b incorporates FAR time series.
\end{example}

\begin{example}[MA and AR]
\label{MA-AR}
Suppose $\theta_1$ and $\psi$ are constants with   $\vert \psi\vert< 1.$ Let $l\in{\mathbb{N}}$, $(\varepsilon_t:t\in{\mathbb{Z}})$ be a sequence of  independent and identically distributed  random variables with mean zero.
The  moving average (MA) time series $(\delta_j : j\in{\mathbb{Z}}) $ of order $1$,  is given by the equation
\begin{equation}
    \delta_t=\theta_1\varepsilon_{t-1}+\varepsilon_t=\tilde{g}_n(\ldots, \varepsilon_{-1},   \varepsilon'_0, \varepsilon_1, \ldots,  \varepsilon_{t-1} , \varepsilon_t).
\end{equation}
For any $t\ge 2$, we have that
\begin{equation*}
    \delta_t-\delta^*_t=0
\end{equation*}
and $\delta_1-\delta^*_1=\theta_1(\varepsilon_0)-\theta_1(\varepsilon_0^{'}).$ Then, $\max_{i=1}^{n}\sum_{t=1}^{\infty}t^{1/2-1/q}\mathbb{E}(\vert \delta_{t,i}-\delta^*_{t,i}\vert^{q})^{1/q}<\infty.$ Therefore \Cref{assume: model assumption}b is satisfied by FMA.
Functional autoregressive time series, AR, is defined as
\begin{equation}
    \delta_t=\psi\delta_{t-1}+\varepsilon_t.
\end{equation}
It admits the expansion,
\begin{align*}
    \delta_t=&\sum_{j=0}^{\infty}\psi^{j}\varepsilon_{t-j}
    \\
    =&\psi\varepsilon_t+\psi^{1}\varepsilon_{t-1}+...+\psi^{t}\varepsilon_{0}+\psi^{t+1}\varepsilon_{-1}+...
    \\
    =&\tilde{g}_n(\ldots, \varepsilon_{-1},   \varepsilon'_0, \varepsilon_1, \ldots,  \varepsilon_{t-1} , \varepsilon_t).
\end{align*}
Then, for any $t\ge 1,$ $\delta_t-\delta^{*}_t=\psi^t\varepsilon_0-\psi^t\varepsilon_0^{'}.$ Thus,
\begin{align*}
   \max_{i=1}^{n} \sum_{t=1}^{\infty}t^{1/2-1/q}\mathbb{E}(\vert \delta_{t,i}-\delta^*_{t,i}\vert^{q})^{1/q}=&\max_{i=1}^{n} \sum_{t=1}^{\infty}t^{1/2-1/q}\mathbb{E}(\vert \psi^t\varepsilon_{0,i}-\psi^t\varepsilon_{0,i}^{'}\vert^{q})^{1/q}
    \\
    \le&\max_{i=1}^{n}
    \sum_{t=1}^{\infty}t^{1/2-1/q}\vert\psi\vert^t\mathbb{E}(\vert \varepsilon_{0,i}-\varepsilon_{0,i}^{'}\vert^{q})^{1/q}<\infty.
\end{align*}
\Cref{assume: model assumption}b incorporates AR time series.
\end{example}

 \end{document}